\documentclass[journal,comsoc]{IEEEtran}

\usepackage{pifont,amssymb,url}
\usepackage{multicol,mathrsfs}
\usepackage[usenames,dvipsnames,svgnames,table]{xcolor}
\usepackage{bm,bbm,mathrsfs,amssymb,pifont,amssymb,pifont,amsfonts,amsmath,stmaryrd,color,amssymb,graphics,graphicx,epstopdf}
\usepackage{amscd,graphicx,array,dsfont,texdraw,tikz,enumerate,multicol,multirow,verbatim}
\usetikzlibrary{arrows,shapes,shadows,positioning,automata,patterns}
\usetikzlibrary{matrix,arrows,calc,trees,decorations.pathmorphing,decorations.markings, decorations.pathreplacing}
\usepackage[utf8x]{inputenc}
\usepackage[T1]{fontenc}
\usepackage{booktabs}
\usepackage{algorithm,algorithmicx,algpseudocode} 
\usepackage{bm,amsmath,amssymb,amsthm,graphicx,subfigure,epstopdf}
\usepackage[comma,numbers,square,sort&compress]{natbib}
\usepackage{epstopdf,epic,subfigure,epsfig,arydshln,enumerate}
\usepackage{fancyhdr,lastpage,color,dsfont}
\usepackage{tikz}
\usetikzlibrary{positioning}
\newtheorem{prop}{\it Property}
\newtheorem{lemma}{\bf \em{Lemma}}
\newtheorem{thm}{\bf \em{Theorem}}
\newtheorem{rem}{\bf \em{Remark}}
\newtheorem{problem}{\bf \em{Problem}}
\newtheorem{defi}{Definition}

\newcommand{\col}{\textnormal{col}}

\usepackage{textcomp}
\usepackage{arydshln} %输入矩阵内的虚线
\def\BibTeX{{\rm B\kern-.05em{\sc i\kern-.025em b}\kern-.08em
    T\kern-.1667em\lower.7ex\hbox{E}\kern-.125emX}}
%\markboth{\hskip25pc IEEE TRANSACTIONS AND JOURNALS TEMPLATE}
%{Author \MakeLowercase{\textit{et al.}}: Title}
\allowdisplaybreaks

\begin{document}
\title{ Partial Excitation in Parameter Learning}
\author{Ganghui Cao, Shimin Wang, Martin Guay, Jinzhi Wang, Zhisheng Duan, Marios M. Polycarpou
\thanks{This work was supported by the National Natural Science Foundation of China under Grant T2121002 and Grant 62373352. (Corresponding author: Jinzhi Wang.)}
\thanks{Ganghui Cao, Jinzhi Wang, and Zhisheng Duan are with the State Key Laboratory for Turbulence and Complex Systems, Department of Mechanics and Engineering Science, College of Engineering, Peking University, Beijing 100871, China (e-mail: cgh@stu.pku.edu.cn, jinzhiw@pku.edu.cn, duanzs@pku.edu.cn). Shimin Wang is
with the School of Data Science, Lingnan University, Tuen Mun, Hong Kong (e-mail: smwang@ln.edu.hk). Martin Guay is with Queen's University, Kingston, ON K7L 3N6, Canada (e-mail: guaym@queensu.ca). Ganghui Cao and Marios M. Polycarpou are with the KIOS Research and Innovation Center of Excellence and the Department of Electrical and Computer Engineering, University of Cyprus, Nicosia 1678, Cyprus (e-mail: cao.ganghui@ucy.ac.cy, mpolycar@ucy.ac.cy).}
}
\maketitle

\begin{abstract}
This paper investigates parameter learning problems under {Partial Persistent Excitation (PPE)}. 
The PPE condition is a rank-deficient, and therefore, a more general evolution of the well-known {Persistent Excitation (PE)} condition.
Under the PPE condition, a proposed online algorithm is able to calculate the {PE} and non-{PE} subspaces, and finally gives an optimal parameter estimate in the sense of least squares. 
In particular, the learning error within the {PE} subspace exponentially converges to zero in the noise-free case. 
%even without persistent excitation.
%
The PPE condition also provides a new perspective for solving distributed parameter learning problems, where the challenge is posed by local regressors that are often insufficiently excited.
To improve knowledge of the unknown parameters, a cooperative learning protocol is proposed for a group of estimators that collect measured information under complementary PPE condition.
This protocol allows each local estimator to operate locally in its {PE} subspace, and reach a consensus with neighbors in its non-{PE} subspace. 
As a result, the task of estimating unknown parameters can be achieved in a distributed way using cooperative local estimators. 
Application examples in system identification are given to demonstrate the effectiveness of the theoretical results developed in this paper.
\end{abstract}

\begin{IEEEkeywords}
 Partial Excitation, Persistent Excitation, Parameter Learning, Parameter Estimation, System Identification, Distributed Learning, and Distributed Estimation.
\end{IEEEkeywords}

\section{Introduction}
\label{sec:introduction}
\IEEEPARstart{P}{arameter} learning problems arise from system identification \cite{Ljung1983,pillonetto2022regularized}, adaptive control \cite{boyd1983parameter,Kreisselmeier1990}, adaptive filtering and prediction \cite{Goodwin2009}, nonlinear output regulation \cite{wang2024nonparametric,wang2025nonparametric}, active traffic management \cite{su2020neuro} and battery health management \cite{che2023opportunities,che2023battery}.
For example, parameter learning plays an important role in monitoring the health of Lithium-ion batteries as illustrated in \cite{qian2024fully,che2023battery}, where the estimation of temperature parameters significantly improves the accuracy of battery health monitoring.
In addition, the authors in \cite{su2020neuro} effectively integrated parameter learning and model-based design using neural networks and policy iteration for active traffic management.
The dynamical systems considered in parameter learning problems are often described by linear regression models \cite{Chen1991}, which express the measured output signals using regressor vectors, unknown parameters, and measurement noise.
The goal of parameter learning problems is to learn dynamic models from measured data \cite{pillonetto2022regularized,Farrell2006}.
In such context, persistent excitation (PE) condition plays a crucial role in ensuring accurate model learning and stable system performance \cite{narendra1987persistent,Ioannou2012}. 
%including its application to the increasingly popular field of { reinforcement learning} \cite{jiang2012computational}.
%
{As the authors in \cite{anderson1985adaptive, hsu1987bursting} demonstrated, adaptive systems can exhibit bursting phenomena with or without $\sigma$-modification, in the absence of the PE condition.}
%However, as the authors in \cite{anderson1985adaptive, hsu1987bursting} demonstrated, even in the absence of disturbances, adaptive systems are susceptible to the bursting phenomena with or without $\sigma$-modification, when the systems fail to meet the PE condition.
%
%Even in the absence of detrimental disturbances, \cite{anderson1985adaptive, hsu1987bursting} have demonstrated that bursting phenomena can occur in traditional adaptive systems with or without $\sigma$-modification, if the systems fail to satisfy persistent excitation.
%
Nevertheless, it is well known that the PE condition is sufficient but not necessary for the convergence of parameter learning errors.
As a result, several efforts have been made to relax the PE condition. 
%, for example in \cite{adetola2008finite,Chowdhary2013,Kamalapurkar2017,Pan2018,Praly2017,Barabanov2017,Bruce2021,Bin2022,Wang2020,Aranovskiy2023,Aranovskiy2016,Bittanti1990,Marino2022,Tomei2023,Uzeda2024,Tomei2025}, 

{There have been several direct variations of the PE
condition. 
A notable subclass of them is often referred to as the interval excitation (IE), which redefines the PE condition over a finite time interval rather than the infinite one. 
The IE condition arises from different contexts, such as adaptive control \cite{Kreisselmeier1990,adetola2008finite,adetola2010performance}, concurrent learning \cite{Chowdhary2013,Kamalapurkar2017}, and composite learning \cite{Pan2018}.
Recently, using an equivalent formulation of IE, the authors in \cite{LeiWang2024IJC} have given a definition of identifiability for linear regression models.
In addition to IE, some other direct generalizations of PE have been proposed in \cite{Praly2017,Barabanov2017}, which share the same features as the classical PE condition but with more elaborate characterizations. 
Specifically, the uniform width of the integration window and the uniform excitation level in the classical PE condition are allowed to vary.
Following a similar technical approach, a direct generalization of PE, referred to as weak persistent excitation, was proposed in \cite{Bruce2021}. 
Moreover, a class of recursive least squares estimators was studied in \cite{Bin2022}, where the proposed excitation condition offered some freedom to encompass and generalize the PE condition.
}

{Rather than relaxing the PE condition in a direct manner, a method referred to as dynamic regressor extension and mixing (DREM) was proposed in  \cite{Aranovskiy2016,Aranovskiy2017TAC}.} 
%
%It enables consistent parameter estimation for linearly and nonlinearly parameterized regressions with factorizable nonlinearities.
%
A key feature of the DREM method is the transformation of the regressor from its original vector form into a new scalar form, which yields interesting new convergence conditions for parameter estimation.
{These conditions have been proved to be no more restrictive, or even strictly weaker in some cases than the PE condition imposed on the original regression model \cite{Ortega2021TAC,Aranovskiy2023}.
}
%
%Moreover, the excitation preservation problem in Kreisselmeier's regressor extension scheme was investigated, and excitation propagation was analyzed in \cite{Aranovskiy2023}, demonstrating that the resulting signal from the proposed extension is PE or IE if and only if the original regressor possesses these properties.
%
The IE condition and even weaker excitation conditions were revisited and investigated to estimate the entire parameter vector within the DREM framework \cite{Wang2020,Aranovskiy2023}.
In addition, in stochastic regression models, the strong consistency of parameter estimation (i.e., the estimate converges to the true parameter with probability one) was also studied and established under some excitation conditions weaker than PE \cite{Lai1982,Chen1991}.

{The aforementioned studies were in pursuit of the weakest excitation conditions necessary to achieve full parameter estimation in adaptive systems.
In contrast, the authors in \cite{Bittanti1990,Marino2022,Tomei2023,Uzeda2023ARC,Uzeda2023SCL,Uzeda2024,Tomei2025} investigated the problem of partial parameter estimation. 
They focused on estimating parameters in a subspace under some excitation conditions insufficient to reconstruct the entire parameter vector.
In \cite{Marino2022}, a characterization of the lack of persistency of excitation is proposed, which can be computed online through an algorithm developed in \cite{Tomei2023}.
A novel subspace estimator is proposed in \cite{Uzeda2024}, which recovers a non-PE subspace defined based on the regressor's autocovariance matrix, by applying principal component analysis.
The common philosophy among these studies is to distinguish the identifiable part of unknown parameters from the non-identifiable one.
As a result, the parameter estimation methods therein are guaranteed to be robust in the absence of the PE condition.
The {\emph{Partial Persistent Excitation (PPE)}} condition formulated in this paper follows this philosophy, leading to new developments in parameter estimation.
}

The key motivation is that in distributed or large-scale network systems, local parameter estimators often have insufficiently exciting regressors and limited measurements.
This is caused by the insufficient richness of local inputs, as well as the limited capability of a single sensor, as revealed by various practical applications in \cite{wang2024distributed}.
%This issue is further compounded by the sensor’s limited ability to measure only a partial state of the global system, as revealed by various practical applications in \cite{wang2024distributed}.
%
Distributed parameter learning arises in a context where a group of sensor nodes individually collect local measurements to learn cooperatively a vector of unknown parameters.
{It is intriguing that cooperation at the group level can be achieved only through communication among neighboring nodes.}
%It is intriguing that cooperative learning, which enables the parameter estimation error to be zero at the group level, can be achieved only through communication among neighboring nodes.
%
The distributed parameter learning problem has been explored under various conditions and in different scenarios.
Earlier research, such as \cite{Chen2014}, studied the problem over undirected communication graphs.
The works in \cite{Matveev2022,Yan2023} investigated more general communication scenarios, on the premise that at least one of the sensor nodes collected sufficiently rich measurements for full parameter estimation. 
Moreover, the works in \cite{Javed2022,Garg2023} addressed the case that each sensor node collected insufficient measurements for full parameter estimation. 
It should be noted that the convergence of distributed parameter estimation was established in the absence of measurement noises in \cite{Javed2022,Garg2023}. 
Therefore, one key challenge that remains is how one can optimize the distributed parameter estimate in the presence of noise. In addition to the aforementioned works, there has been some important research contributions conducted within a probabilistic framework \cite{Cattivelli2010,Mateos2012,Zhang2012,Xie2018Auto,Xie2018SIAM,Yan2023}. 
Usually, the analysis procedures and obtained results therein relied on stringent assumptions about some statistical properties, such as moment conditions and white characters for the noise processes, independence and stationarity for the regressor processes, etc. 
Some recent results presented in \cite{Azzollini2023,Gan2024} were obtained under milder assumptions, at the cost of communicating more information than just local parameter estimates. 

{ In what follows, the contributions of this paper are presented.
In contrast to \cite{Marino2022,Tomei2023,Glushchenko2023,Uzeda2023ARC,Uzeda2023SCL,Uzeda2024,Tomei2025}, this paper makes the following progress:}
%Compared to \cite{Bittanti1990,Marino2022,Tomei2023,Uzeda2024,Tomei2025}, the primary contributions of this paper are outlined as follows:}
\begin{enumerate}
	\item\label{contri1} Under the PPE condition, the study proposes a parameter learning method that produces parameter estimates that are optimal in the sense of least squares. Specifically, unknown parameters are learned by minimizing a cost function of the estimation errors with a forgetting factor, which improves accuracy and alertness in learning parameters.
%This paper proposes an optimal parameter learning method grounded in the principles of least squares. Parameters are learned by minimizing a cost function in terms of learning errors with a forgetting factor. This method advances existing approaches by enhancing the accuracy and alertness of the parameter estimation.
%The development of an optimal parameter learning method grounded in the principles of least squares, is designed to efficiently learn parameters in a way that minimizes the error across the entire system. This method advances previous approaches by enhancing the accuracy and alertness of the parameter learning process while ensuring the propagation of deficient excitation.
	\item\label{contri2} Based on the notion of PPE and the method introduced in \ref{contri1}), a distributed learning method is developed that provides a new perspective on the solution of the distributed parameter estimation problem with favorable features.
%This paper further develops a novel distributed parameter learning method, which builds upon the method proposed in \ref{contri1}). It sets up a unique framework for distributed parameter learning, possessed of several distinguishing features as will be detailed later.
%The further development of a novel distributed learning approach, which builds upon the optimal parameter learning method. This new method introduces a unique framework for distributed learning, enabling better scalability and adaptability in complex systems.
\end{enumerate}
The proposed parameter learning algorithm offers several notable advantages.
{
First, it allows the regressor to be partially exciting, in which case the standard least squares algorithm \cite{Ioannou2006} is neither robust\footnote{Those robust variations \cite{Ioannou2012}, though still referred to as ``least squares algorithms", are not optimal either.} nor optimal. 
Second, it produces an optimal parameter estimate with an exponential convergence rate. The parameter estimation update is developed in a recursive form, avoiding the need for the computation of a matrix inversion.
Third, it achieves robust parameter estimation through adaptively separating and taking advantage of the PE and non-PE parts of the regressor.
Moreover, it is used to develop a new methodology to tackle the problem of distributed parameter learning in contrast to existing studies found in the literature \cite{Chen2014,Javed2022,Garg2023,Matveev2022,Yan2023,zhang2025distributed}.
}
Some distinguishing features of this novel approach include:
\begin{enumerate}
	\item[1)] It integrates local optimizations into the distributed algorithm, which can be designed and implemented locally at each node, enabling a good scalability of sensor networks. The local optimizations can enhance the performance of cooperative parameter learning, leading to accurate parameter estimates by leveraging localized information collected by each sensor node.
%The integration of local optimizations, which can be designed and implemented locally for each node in the sensor network. These local optimizations are key to enhancing the overall cooperative learning performance, allowing for faster convergence and more accurate estimates by leveraging localized information at each node.
	\item[2)] It allows the sensor nodes to communicate over a directed and unbalanced communication graph, which is a weak communication assumption for the distributed parameter estimation problem. Moreover, the distributed parameter learning is designed to guarantee an exponential rate of convergence on the overall sensor network.
%The ability to handle a directed and unbalanced communication graph, which is a more relaxed and commonly seen assumption in terms of distributed systems, especially in multi-agent systems. The proposed method also guarantees an arbitrary exponential rate of convergence, ensuring that the estimation process stabilizes quickly and robustness, even in the presence of network dynamics or inconsistencies.
	\item[3)] It applies to deterministic regression models without specific statistical assumptions. This makes the method less dependent on the statistical properties of the collected data, and, therefore, more widely applicable in practical situations.
%The application of the method to deterministic regression models, which do not impose any specific statistical requirements on the regressors or noise terms. This makes the method more widely applicable and less dependent on the statistical properties of the data, offering significant flexibility and robustness in various practical scenarios.
\end{enumerate}
%This approach not only advances theoretical understanding but also provides a powerful and practical tool for distributed parameter learning in real-world systems.

{% 
The rest of this paper is organized as follows. In Section \ref{section2}, the problem formulation, the PE and PPE definitions, and the aims of the parameter learning approach are introduced. 
% %
 The novel optimal parameter learning method is presented in Section \ref{section3}. % without knowing the order of lacking persistency of excitation. 
% %
 Based on the proposed method in Section \ref{section3}, a distributed parameter learning algorithm under complementary PPE condition is introduced in Section \ref{section4}.
% %
 Applications in system identification with numerical examples are given in Section \ref{section5} to verify and illustrate the proposed design.
% %
 Finally, conclusions are made in Section \ref{conclution}.}

\subsection*{Notation}
For a vector $x$ and a matrix $X$, $\left\| x \right\|$ and $\left\| X \right\|$ denote the Euclidean norm and the induced 2-norm, respectively. Let ${\rm {Im}}X$ denote the range or image of $X$, and ${\rm {Ker}}X$ denote the kernel or null space of $X$. Let $\lambda _{\min }(X)$ denote
the minimum eigenvalue of $X$, if $X$ is symmetric. 
For a complex number $\lambda$, denote  its real part by ${\rm Re}(\lambda)$.
For a set of matrices $\left\{X_i|\ i=1, 2, \dots, N\right\}$ and a set of their index $\mathcal{N}=\left\{1, 2, \cdots, N\right\}$, define ${\rm{diag}}{(X_1, \dots, X_N)}$ as the matrix formed by arranging the above matrices in a block diagonal fashion, and ${\rm{col}}{(X_1, \dots, X_N)}$ as a matrix formed by stacking them (i.e., 
$\setlength\arraycolsep{1.4pt}
{\left[ {\begin{array}{*{20}{c}}
			{X_1^{\top}}&{X_2^{\top}}& \dots &{X_N^{\top}}
	\end{array}} \right]^{\top}}$) if dimensions matched.
${\mathbf{1}_r}$ denotes a column vector of $1$’s of size $r$. $I$ and $0$ denote the identity matrix and zero matrix of appropriate dimensions, respectively.
A time-varying vector $x(t)$ is said to exponentially converge to zero at a decay rate no slower than $\rho$, if there exists a constant $\rho_x>0$ such that $\left\| x(t)  \right\| \le {\rho _x}{{\rm{e}}^{ - {\rho}t}}$. { Let ${h_{(i)}}$ denote the $i$th column of identity matrix $I_{n}$, for $i=1,\dots, n$.}

\section{Preliminaries} \label{section2}
\subsection{Problem Formulation} \label{PF}
Consider a continuous-time linear regression model
\begin{equation} \label{model}
	z(t) = {\phi }^{\top}(t)\theta  + \varepsilon (t),
\end{equation}
where $\phi \in \mathbb{R}^n$ is a smooth uniformly bounded vector referred to as the regressor, $\theta \in \mathbb{R}^n$ is a constant (or slowly varying) parameter to be estimated, $z \in \mathbb{R}$ is a continuous measurement, and $\varepsilon$ is a bounded measurement noise. A well-known assumption for the regressor is the persistent excitation (PE)  \cite{Ioannou2012} defined as follows.

\begin{defi}[\textbf{Persistent Excitation}]\label{definition1} The regressor $\phi(t)$ is said to display persistent excitation if there exist positive reals $T$, ${k_a}$, and ${k_b}$ such that
\begin{align*}
	{k_a} I_n \le \int_t^{t + T} {\phi (\tau ){\phi }^{\top}(\tau ){\rm{d}}\tau }  \le {k_b}{I_n},& & \forall t \ge 0.
\end{align*}
\end{defi}
In this paper, however, the parameter learning problem is studied under a variation of the PE concept
%, called \emph{Deficient Excitation}, 
defined as follows.

{
\begin{defi}[\textbf{Partial Persistent Excitation}]\label{definition2} The regressor $\phi(t)$ is said to display {partial persistent excitation} with lack of persistency of order $q$ ($0\leq q\leq n$) if there exist positive reals $T$, $k_a$, $k_b$, and a real orthogonal projection matrix\footnote{A real matrix $\varPhi$ is referred to as an orthogonal projection matrix if $\varPhi^2=\varPhi=\varPhi^\top$.} $\varPhi$ of rank $n-q$ such that
\begin{align}\label{PPEcondition}
	k_a{\varPhi} \le \int_t^{t + T} {\phi (\tau ){\phi }^{\top}(\tau ){\rm{d}}\tau }  \le k_b{\varPhi},& & \forall t \ge 0.
\end{align}
\end{defi}
}
% \begin{defi}[\textbf{Deficient Excitation}]\label{definition2} The regressor $\phi(t)$ is said to display deficiency of excitation of order $q$ ($0\leq q\leq n$) if there exist a positive real $T$, and two positive semidefinite matrices $\varPhi_a$ and $\varPhi_b$ of rank $n-q$ such that
% \begin{align}\label{PPEcondition}
% 	{\varPhi_a} \le \int_t^{t + T} {\phi (\tau ){\phi }^{\top}(\tau ){\rm{d}}\tau }  \le {\varPhi_b},& & \forall t \ge 0.
% \end{align}
% \end{defi}

\begin{rem}\label{remark1} Definition~\ref{definition2} is mostly inspired by {the lack of persistency of excitation characterized in} \cite{Marino2022,Tomei2023}, which can also date back to \cite{Sadegh1990,Bittanti1990,Zhonghua1998}.
The PPE condition is weaker than the PE and coincides with the PE condition in the case of $q=0$. 
%i.e., $\varPhi_a$ and $\varPhi_b$ are both positive definite matrices. 
%
It can also be observed that the PPE condition always holds if the regressor is periodic. 
Taking the regressor $\phi (t) =\textnormal{col}({\sin t}, { - \sin t})$ as an example, it lacks persistency of order $1$, with 
\begin{equation*}
\underbrace{\left[ \begin{matrix}
			1&{ - 1}\\
			{ - 1}&1
	\end{matrix} \right]}_{{k_a\varPhi}}\le \int_t^{t + \pi} {\phi (\tau ){\phi }^{\top}(\tau ){\rm{d}}\tau }   \le \underbrace{\left[ \begin{matrix}
	2&{ - 2}\\
	{ - 2}&2
\end{matrix} \right]}_{{k_b\varPhi}}.
\end{equation*}
{The role of $k_b\varPhi$ in Definition~\ref{definition2} is to perfectly capture the rank-deficient case, since a persistently exciting regressor ${\phi }(t)$ naturally satisfies the first inequality in \eqref{PPEcondition} for some $k_a$.
In Definition~\ref{definition1}, the presence of ${k_b}{I_n}$ is mainly for technical purposes. It is used to rule out unbounded cases, so that parameter estimation algorithms can be carried out in practice.}
\end{rem}

Under the PPE condition given in Definition~\ref{definition2} with $q$, $T$, {$k_a$, $k_b$, and $\varPhi$} all unknown, we consider the following least squares problem: 
\begin{problem}
Minimize the cost function 
%\begin{equation}
	\begin{align}\label{cost}
		J\left( {\vartheta (t)} \right)&=\frac{1}{2}\int_0^t {{{\rm{e}}^{ - \beta (t - \tau )}}{{\left( {z(\tau ) - {\vartheta }^{\top}(t)\phi (\tau )} \right)}^2}{\rm{d}}\tau }  \nonumber\\
		&\ \ \ + \frac{\alpha }{2}{{\rm{e}}^{ - \beta t}}{\big\| {\vartheta (t) - {{\hat \theta }_0}} \big\|^2}
	\end{align}
%\end{equation}
with respect to $\vartheta (t)$ at any given time $t$. 
\end{problem}
In the cost function, the positive real $\alpha$ reflects the degree of trust in the prior estimate $\vartheta (0) = {\hat \theta _0}$. The integral action penalizes all the past errors from $\tau = 0$ to $t$ with a forgetting factor $\beta>0$. Discounting the past data by $\beta$ helps keep the cost function alert to a slowly varying parameter. 

The {\bf{first objective}} is to design an online (recursive) algorithm to produce a parameter estimate $\hat \theta (t) \in \mathbb{R}^n$ such that:
\begin{enumerate}
	\item\label{amis1} in the absence of measurement noise, $\hat \theta (t)$ exponentially converges to $\theta$ in a subspace of rank $n-q$, referred to as the {PE} subspace.
	\item\label{amis2} in the presence of measurement noise, $\hat \theta (t)$ exponentially converges to ${\theta ^*(t)} = \mathop {\arg \min } \limits_{\vartheta(t)} J\left( {\vartheta}(t) \right)$, referred to as the least squares solution.
\end{enumerate}

Next, based on the designed online algorithm, consider solving the following distributed parameter estimation problem. Assume that there are $N$ measurements
\begin{align} \label{Nmeasure}
	{z_i}(t) &= \phi _i^{\top}(t)\theta + {\varepsilon _i}(t),& i \in {\cal N} = \left\{ {1, \cdots ,N} \right\},
\end{align}
with $N$ corresponding local estimators, none of which possessing local regressors ${\phi_i} \in \mathbb{R}^n$ that are persistently exciting. More precisely, each local regressor lacks persistency of excitation of order $q_i$, {\it i.e.},
{
\begin{align} \label{varPhiiab}
	{k_{ia}\varPhi_{i}} \le \int_t^{t + T} {\phi_i (\tau ){\phi_i^{\top}}(\tau ){\rm{d}}\tau }  \le {k_{ib}\varPhi_{i}},& & \forall t \ge 0,
\end{align} 
with $k_{ia}$ and $k_{ib}$ two positive reals, and $\varPhi_{i}$ a real orthogonal projection matrix of rank $n-q_i$.}
%with both $\varPhi_{ia}$ and $\varPhi_{ib}$ positive semidefinite matrices of rank $n-q_i$.
This implies that a local estimator can measure and estimate the parameter only in a subspace, {\it i.e.}, its {PE} subspace. As a result, the {\bf{second objective}} is to design a distributed cooperation strategy such that each local estimator can produce an estimate ${\hat \theta _i}(t) \in \mathbb{R}^n$ for the parameter in the whole space, {\it i.e.},
\begin{enumerate}
	\item[3)]\label{amis3} ${\hat \theta _i}(t)$ exponentially converges to $\theta$ for all $i\in {\cal N}$, in the absence of measurement noise, and,
	\item[4)]\label{amis4} ${\hat \theta _i}(t)$ exponentially converges to a neighborhood of $\theta$ for all $i\in {\cal N}$, in the presence of measurement noise.
\end{enumerate}
Here, distributed cooperation means that each local estimator communicates only with one or several neighbors, over a communication network modeled by a directed graph as defined in the following section.

\subsection{Directed Communication Graph}
A directed communication graph $\mathcal{G}=(\mathcal{N},\mathcal{E})$ is composed of a finite nonempty node set $\mathcal{N}=\left\{1,2,\cdots,N\right\}$, and an edge set $\mathcal{E}\subseteq \mathcal{N}\times \mathcal{N}$, in which the elements are ordered pairs of nodes. An edge originating from node $j$ and ending at node $i$ is denoted by $(j, i)\in \mathcal{E}$, which represents the direction of the message passing between the two nodes. 
The adjacency matrix of $\mathcal{G}$ is defined as $\mathcal{A}=[a_{ij}]\in \mathbb{R}^{N\times N}$, where $a_{ij}$ is a positive weight of the edge $(j,i)$ when $(j,i)\in \mathcal{E}$, otherwise $a_{ij}$ is zero. Assume that there are no self loops, i.e., $a_{ii}=0$, $\forall i \in \mathcal{N}$. The Laplacian matrix $\mathcal{L}=[l_{ij}]\in \mathbb{R}^{N\times N}$ of graph $\mathcal{G}$ is constructed by letting $l_{ii}=\sum\nolimits_{k = 1}^N {{a_{ik}}}$ and $l_{ij}=-a_{ij},\ \forall i,j \in \mathcal{N},\ i\neq j$.
A directed path from node $i$ to node $j$ is a sequence of edges $(i_{k-1},\ i_{k})\in \mathcal{E},\ k=1,2,\cdots,\bar k$, where $i_0=i,\ i_{\bar k}=j$. A directed graph $\mathcal{G}$ is said to be strongly connected if there exists at least one directed path from node $i$ to node $j$, $\forall i,j\in\mathcal{N},\ i\neq j$. A more comprehensive description of graph theory can be found in \cite{lewis2013cooperative,zhang2025distributed}.%wang2022adaptive

\subsection{Supporting Lemmas}
\begin{lemma}\cite{Kim2020,Cao2023}\label{lemma1} For a strongly connected directed graph $\mathcal{G}(\cal N)$, there exists a vector $\xi  = {\rm{col}}{\left( \xi_1,\dots, \xi _N\right)} \in \mathbb{R}^N$ such that ${\xi ^{\top}}{\cal L} = 0$, $\mathbf{1}_N^{\top}\xi  = 1$, and ${\varXi _0} = {\rm{diag}}{\left( \xi_1,\dots, \xi _N\right)} > 0$. 
In addition, given matrices $X_i\in{\mathbb{R}^{n\times q_i}}$ satisfying $X_i^\top X_i=I_{q_i}$, ${\forall i \in {\cal N}}$, there is $${\rm{diag}}(X_1,\dots, X_N)^\top (\hat{\mathcal{L}}\otimes I_n){\rm{diag}}(X_1,\dots, X_N)>0$$ with $\hat{\mathcal{L}}={\varXi_0}{\cal L} + {{\cal L}^{\top}}{\varXi_0}$, if and only if $\cap_{i=1}^{N} {\rm{Im}}X_i=\left\lbrace 0 \right\rbrace $.
\end{lemma}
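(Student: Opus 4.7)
The plan is to split the lemma into two parts. For the first part (existence of the left Perron vector $\xi$), I would invoke a standard Perron–Frobenius argument for the Laplacian of a strongly connected digraph: since $\mathcal{G}$ is strongly connected, the matrix $\mathcal{L}$ is irreducible, has a simple eigenvalue at $0$ with right eigenvector $\mathbf{1}_N$, and admits a left eigenvector $\xi$ with strictly positive entries. Normalizing by $\mathbf{1}_N^\top \xi = 1$ then gives $\Xi_0 = \mathrm{diag}(\xi_1,\dots,\xi_N) > 0$ and $\xi^\top \mathcal{L} = 0$.

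For the second part, I would first observe that $\hat{\mathcal{L}} = \Xi_0 \mathcal{L} + \mathcal{L}^\top \Xi_0$ is symmetric and is the Laplacian of the \emph{mirror graph} of $\mathcal{G}$ weighted by $\xi$; in particular, $\hat{\mathcal{L}} \succeq 0$. Moreover, using $\mathcal{L}\mathbf{1}_N = 0$ and $\xi^\top \mathcal{L} = 0$, one has $\hat{\mathcal{L}} \mathbf{1}_N = 0$, and the strong connectedness of $\mathcal{G}$ implies the mirror graph is connected, so $\ker \hat{\mathcal{L}} = \mathrm{span}(\mathbf{1}_N)$. Consequently, by standard Kronecker-product properties,
\begin{equation*}
\ker(\hat{\mathcal{L}} \otimes I_n) = \{\mathbf{1}_N \otimes v : v \in \mathbb{R}^n\}.
\end{equation*}

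Next, letting $X \triangleq \mathrm{diag}(X_1,\dots,X_N)$ and picking an arbitrary $y = \mathrm{col}(y_1,\dots,y_N)$, the quadratic form $y^\top X^\top(\hat{\mathcal{L}} \otimes I_n)Xy$ is nonnegative, and it vanishes if and only if $Xy \in \ker(\hat{\mathcal{L}} \otimes I_n)$. By the kernel characterization above, this is equivalent to the existence of some $v \in \mathbb{R}^n$ with $X_i y_i = v$ for every $i \in \mathcal{N}$, i.e., $v \in \cap_{i=1}^N \mathrm{Im}\, X_i$. Since each $X_i$ has full column rank (from $X_i^\top X_i = I_{q_i}$), $v = 0$ forces $y = 0$; hence a nontrivial null vector $y$ exists precisely when $\cap_{i=1}^N \mathrm{Im}\, X_i \neq \{0\}$. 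Contraposing gives the claimed equivalence $X^\top(\hat{\mathcal{L}} \otimes I_n)X > 0 \Longleftrightarrow \cap_{i=1}^N \mathrm{Im}\, X_i = \{0\}$.

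The only real subtlety I expect is making the link between $\ker(\hat{\mathcal{L}} \otimes I_n)$ and the intersection $\cap_i \mathrm{Im}\, X_i$ rigorous — in particular, ensuring that the vector $v$ obtained from the kernel condition is genuinely common to all images and that the full-column-rank property of each $X_i$ is used to convert $v = 0$ into $y = 0$. Everything else follows from standard Perron–Frobenius and Kronecker-product facts.
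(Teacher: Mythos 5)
Your proof is correct. Note that the paper does not actually prove Lemma~1 --- it is imported verbatim from the cited references --- so there is no in-paper argument to compare against; your route (left Perron vector of the irreducible Laplacian for the first claim; identifying $\hat{\mathcal{L}}$ as the mirror-graph Laplacian with $\ker\hat{\mathcal{L}}=\mathrm{span}(\mathbf{1}_N)$, hence $\ker(\hat{\mathcal{L}}\otimes I_n)=\{\mathbf{1}_N\otimes v\}$, and then using $X_i^\top X_i=I_{q_i}$ to pass between a common vector $v\in\cap_i\mathrm{Im}\,X_i$ and a nontrivial null vector $y$) is exactly the standard one, and the subtlety you flag is handled correctly in both directions of the equivalence.
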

\begin{lemma}\cite{Boyd1994}\label{lemma2} The matrix $\varUpsilon^*$ is Hurwitz, or equivalently, all trajectories of the differential equation $\dot{x}(t)=\varUpsilon^* x(t)$ converge to zero), if and only if there exists a positive definite matrix $\varXi$, such that $\varUpsilon^{*\top}\varXi+\varXi \varUpsilon^*<0$.
\end{lemma}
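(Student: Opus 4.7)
The plan is to prove both directions of this classical Lyapunov characterization. For the sufficiency direction, I would take the candidate Lyapunov function $V(x) = x^{\top}\varXi x$, which is positive definite since $\varXi > 0$, and satisfies $\lambda_{\min}(\varXi)\|x\|^2 \leq V(x) \leq \|\varXi\|\,\|x\|^2$. Differentiating along trajectories of $\dot{x} = \varUpsilon^* x$ yields
\begin{equation*}
\dot V(x) = x^{\top}\bigl(\varUpsilon^{*\top}\varXi + \varXi \varUpsilon^*\bigr)x,
\end{equation*}
which is strictly negative for $x \neq 0$ by hypothesis. Letting $-Q \coloneqq \varUpsilon^{*\top}\varXi + \varXi \varUpsilon^* < 0$, one has $\dot V \leq -\lambda_{\min}(Q)\|x\|^2 \leq -\gamma V$ with $\gamma = \lambda_{\min}(Q)/\|\varXi\|$, so a Gr\"onwall-type argument gives $V(x(t)) \leq V(x(0))\mathrm{e}^{-\gamma t}$, hence $\|x(t)\| \to 0$ exponentially and $\varUpsilon^*$ is Hurwitz.

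For the necessity direction, assuming $\varUpsilon^*$ is Hurwitz, I would explicitly construct $\varXi$ by fixing any symmetric positive definite matrix $Q$ and setting
\begin{equation*}
\varXi \;=\; \int_0^\infty \mathrm{e}^{\varUpsilon^{*\top}t}\, Q \,\mathrm{e}^{\varUpsilon^* t}\,\mathrm{d}t.
\end{equation*}
Because every eigenvalue $\lambda$ of $\varUpsilon^*$ has $\mathrm{Re}(\lambda) < 0$, the matrix exponential admits a uniform bound $\|\mathrm{e}^{\varUpsilon^* t}\| \leq C\mathrm{e}^{-\sigma t}$ for some $C,\sigma > 0$, so the improper integral converges absolutely and $\varXi$ is well-defined. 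Symmetry is immediate from that of $Q$, and positive definiteness follows since for any $x \neq 0$ the integrand $x^{\top}\mathrm{e}^{\varUpsilon^{*\top}t} Q \mathrm{e}^{\varUpsilon^* t} x = (\mathrm{e}^{\varUpsilon^* t}x)^{\top} Q (\mathrm{e}^{\varUpsilon^* t}x)$ is strictly positive for all $t$ and continuous in $t$. Finally, since $\tfrac{\mathrm{d}}{\mathrm{d}t}\bigl(\mathrm{e}^{\varUpsilon^{*\top}t} Q\mathrm{e}^{\varUpsilon^* t}\bigr) = \varUpsilon^{*\top}\mathrm{e}^{\varUpsilon^{*\top}t} Q\mathrm{e}^{\varUpsilon^* t} + \mathrm{e}^{\varUpsilon^{*\top}t} Q\mathrm{e}^{\varUpsilon^* t}\varUpsilon^*$, the fundamental theorem of calculus (justified by the exponential decay that kills the upper boundary term) yields
\begin{equation*}
\varUpsilon^{*\top}\varXi + \varXi \varUpsilon^* \;=\; \bigl[\mathrm{e}^{\varUpsilon^{*\top}t} Q\mathrm{e}^{\varUpsilon^* t}\bigr]_0^\infty \;=\; -Q \;<\; 0,
\end{equation*}
so $\varXi$ is the required certificate.

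The main obstacle is the rigorous handling of the improper integral in the necessity part: one has to justify convergence, positive definiteness, and the interchange of differentiation and integration. All three follow from the same uniform exponential bound on $\|\mathrm{e}^{\varUpsilon^* t}\|$, so the difficulty is essentially bookkeeping rather than conceptual. The sufficiency direction is a routine Lyapunov argument and poses no real challenge beyond extracting an explicit exponential rate.
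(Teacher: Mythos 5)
Your proof is correct and is the standard Lyapunov-equation argument (quadratic Lyapunov function for sufficiency, the integral $\varXi=\int_0^\infty \mathrm{e}^{\varUpsilon^{*\top}t}Q\mathrm{e}^{\varUpsilon^*t}\,\mathrm{d}t$ for necessity). The paper does not prove this lemma at all — it cites it directly from the reference \cite{Boyd1994} — and your argument is precisely the classical proof found there, so there is nothing to reconcile.
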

\begin{lemma}\label{lemma3} Consider the following linear time-varying dynamical system: 
\begin{equation}\label{lem_sys1}
	\dot x(t) = \varUpsilon (t)x(t) + u(t),
\end{equation}
where $x$ is the state vector, $u$ is the input vector, and $\varUpsilon$ is a square matrix of appropriate size. Suppose there exist positive reals ${\rho_a}$, ${\rho_b}$, ${\rho_c}$, and ${\rho_d}$ such that 
\begin{equation*}
	\left\| {\varUpsilon (t) - {\varUpsilon ^*}} \right\| \le {\rho _a}{{\rm{e}}^{ - {\rho _b}t}}\ \text{and}\ \left\| {u(t) - {u^*}(t)} \right\| \le {\rho _c}{{\rm{e}}^{ - {\rho _d}t}},
\end{equation*}
where ${\varUpsilon^*}$ is a stable matrix with all eigenvalues lying in the half-plane ${\rm{Re}}(s) \le  - \upsilon$, and ${u^*}(t)$ is a bounded time-varying signal. Then, for any ${\rho _f}$ satisfying $0 < {\rho _f} < \min \left\{ {\upsilon ,{\rho _b},{\rho _d}} \right\}$, there exists a positive real ${\rho_e}$ such that 
	$$\left\| {x(t) - {x^*}(t)} \right\| \le {\rho _e}{{\rm{e}}^{ - {\rho _f}t}},$$
where $${x^*}(t) = \int_0^t {{{\rm{e}}^{{\varUpsilon ^*}(t - \tau )}}{u^*}(\tau ){\rm{d}}\tau } .$$ In particular, if ${u^*}(t)$ exponentially converges to zero at a decay rate no slower than ${\rho _g}$, then $x(t)$ exponentially converges to zero at a decay rate no slower than any ${\rho _h} < \min \left\{ {\upsilon ,{\rho _b},{\rho _d},{\rho _g}} \right\}$.
\end{lemma}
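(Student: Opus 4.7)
The plan is to study the error $e(t) := x(t) - x^*(t)$. Differentiating the integral expression for $x^*(t)$ via the Leibniz rule gives $\dot x^*(t) = \varUpsilon^* x^*(t) + u^*(t)$, so subtracting from the original dynamics produces the error equation
$$\dot e(t) = \varUpsilon^* e(t) + \Delta(t)\,x(t) + \delta(t),$$
where $\Delta(t) := \varUpsilon(t) - \varUpsilon^*$ and $\delta(t) := u(t) - u^*(t)$ both vanish exponentially by assumption. The viewpoint is that the error is driven by a stable nominal matrix $\varUpsilon^*$ plus two perturbations that decay exponentially, provided $x(t)$ itself stays bounded.

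Next I would establish that $x(t)$ is uniformly bounded, which is needed before $\Delta(t)x(t)$ can be treated as a genuinely vanishing perturbation. By Lemma~\ref{lemma2} there exists a positive definite $P$ such that $\varUpsilon^{*\top}P + P\varUpsilon^* =: -Q < 0$. Taking $V(x) = x^{\top} P x$ and differentiating along $\dot x = \varUpsilon^* x + \Delta(t)x + u(t)$ yields
$$\dot V \le -\lambda_{\min}(Q)\,\|x\|^2 + 2\|P\|\,\rho_a e^{-\rho_b t}\,\|x\|^2 + 2\|P\|\,\|x\|\,\|u(t)\|.$$
For $t$ beyond some $t_0$ the middle term is dominated by half of the first, and $u(t)$ is uniformly bounded (since $u^*$ is bounded and $\delta(t)\to 0$); a standard comparison argument then delivers a constant $c_x$ with $\|x(t)\|\le c_x$ for every $t\ge 0$.

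With boundedness in hand, apply the variation-of-constants formula to the error equation:
$$e(t) = e^{\varUpsilon^* t} e(0) + \int_0^t e^{\varUpsilon^*(t-s)}\bigl[\Delta(s)x(s)+\delta(s)\bigr]\,ds.$$
Since every eigenvalue of $\varUpsilon^*$ has real part at most $-\upsilon$, for any $\upsilon'$ with $\rho_f < \upsilon' < \upsilon$ there is $M>0$ such that $\|e^{\varUpsilon^* t}\|\le M e^{-\upsilon' t}$. Using $\|\Delta(s)x(s)+\delta(s)\| \le (c_x\rho_a+\rho_c)\,e^{-\min\{\rho_b,\rho_d\}s}$ and evaluating the resulting scalar convolution $\int_0^t e^{-\upsilon'(t-s)}e^{-\min\{\rho_b,\rho_d\}s}\,ds$ gives a bound of the form $\|e(t)\|\le \rho_e e^{-\rho_f t}$ for any $\rho_f < \min\{\upsilon',\rho_b,\rho_d\}$, hence for any $\rho_f<\min\{\upsilon,\rho_b,\rho_d\}$. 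The main obstacle sits in this last step: when $\upsilon'$ coincides with $\min\{\rho_b,\rho_d\}$ the convolution produces a $t\,e^{-\upsilon' t}$ factor, which is precisely why the claim only asserts strict inequality on $\rho_f$ rather than attaining the minimum. The secant trick of picking $\upsilon' > \rho_f$ absorbs this polynomial factor into the extra exponential margin.

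For the final assertion, observe that if $u^*(t)$ decays at rate $\rho_g$ then exactly the same convolution bound applied to $x^*(t)=\int_0^t e^{\varUpsilon^*(t-\tau)}u^*(\tau)\,d\tau$ yields $\|x^*(t)\|\le \rho_e' e^{-\rho_h t}$ for any $\rho_h<\min\{\upsilon,\rho_g\}$. The triangle inequality $\|x(t)\|\le\|e(t)\|+\|x^*(t)\|$ then delivers exponential convergence of $x(t)$ at any rate strictly below $\min\{\upsilon,\rho_b,\rho_d,\rho_g\}$, completing the argument.
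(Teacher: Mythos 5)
Your proposal is correct, but it routes the main estimate differently from the paper. The paper also introduces the fictitious system $\dot x^*=\varUpsilon^*x^*+u^*$, $x^*(0)=0$, and studies $\tilde x=x-x^*$, but it splits the perturbation as $\varUpsilon(t)x=\varUpsilon^*\,\cdot+(\varUpsilon(t)-\varUpsilon^*)\tilde x+(\varUpsilon(t)-\varUpsilon^*)x^*$ and runs a single Lyapunov argument on $\tilde V=\tilde x^{\top}M\tilde x$ with $M$ chosen for the shifted matrix $\varUpsilon^*+\tfrac{\upsilon_0+\upsilon}{2}I$: the $(\varUpsilon-\varUpsilon^*)\tilde x$ term is absorbed into the negative-definite part after a finite time $t_0$, and only the a priori boundedness of $x^*$ (immediate from $u^*$ bounded and $\varUpsilon^*$ Hurwitz) is needed, so no separate boundedness proof for $x$ appears. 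You instead keep $\Delta(t)x$ intact, first establish $\|x(t)\|\le c_x$ by an ISS-type Lyapunov estimate on the full system, and then conclude by variation of constants and an explicit scalar convolution bound. Both are sound; your route costs an extra boundedness lemma but makes the final decay-rate bookkeeping more transparent, while the paper's single-Lyapunov route is more economical in hypotheses used. Both arguments confront the same resonance issue — when the decay rate of the forcing coincides with the contraction rate the convolution produces a $t\,\mathrm{e}^{-\upsilon' t}$ factor — and both resolve it identically by conceding an arbitrarily small margin in the exponent (the paper does this explicitly in its three-case analysis of $\varsigma(t)$, you do it by choosing $\rho_f<\upsilon'<\upsilon$); the concluding triangle-inequality step for the vanishing-$u^*$ case is the same in both.
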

See Section \ref{apendix-lemma3} for the Proof of Lemma~\ref{lemma3}. 

\section{Parameter Learning under Partial Persistent Excitation}\label{section3} \label{SLSM}
This section focuses on achieving aims \ref{amis1})--\ref{amis2}) formulated in Section \ref{PF}. Since the order of deficiency of excitation may be unknown {\it a priori}, an online algorithm is designed to identify adaptively which of the orders are lacking. Based on that, an online algorithm for Parameter Learning is developed.

\subsection{Define the {PE and non-PE} Subspaces}
{
The orthogonal projection matrix $\varPhi$ in Definition~\ref{definition2} has the singular value decomposition
\begin{align*} \label{singdecomPhi}
	{\varPhi} = \left[ {\begin{array}{*{20}{c}}
			{{N_{d}}}&{{N_{u}}}
	\end{array}} \right]\left[ {\begin{array}{*{20}{c}}
			{{I_{n-q}}}&{}\\
			{}&{{0_{q \times q}}}
	\end{array}} \right]\left[ {\begin{array}{*{20}{c}}
			{N_{d}^{\top}}\\
			{N_{u}^{\top}}
	\end{array}} \right],
\end{align*}
where $\left[ {\begin{array}{*{20}{c}}
		{{N_{d}}}&{{N_{u}}}
\end{array}} \right]$ is an orthogonal matrix.
According to this decomposition, by pre- and post-multiplication with $N_{d}^{\top}$ and ${N_{d}}$ respectively, the first inequality in \eqref{PPEcondition} becomes
\begin{equation}\label{PPE1}
	{k_a I_{n-q}} \le N_{d}^{\top}\left(\int_t^{t + T} {\phi (\tau ){\phi }^{\top}(\tau ){\rm{d}}\tau } \right){N_{d}}.
\end{equation}
By pre- and post-multiplication with $N_{u}^{\top}$ and ${N_{u}}$ respectively, the second inequality in \eqref{PPEcondition} gives
\begin{equation}\label{PPE3}
	N_{u}^{\top}\left(\int_t^{t + T} {\phi (\tau ){\phi }^{\top}(\tau ){\rm{d}}\tau }\right) {N_{u}} = {0_{q \times q}}.
\end{equation}
Define ${\rm{Im}}{N_d}$ as the {PE} subspace. It will be seen later that the parameter can only be identified in this subspace. Correspondingly, define ${\rm{Im}}{N_u}$ as the non-{PE} subspace.
}

\subsection{Calculate the {PE} Subspace} \label{CIS}
The purpose of this subsection is to estimate ${N_d}N_d^{\top}$, rather than ${N_d}$ directly. There are two benefits to doing so: 
\begin{itemize}
	\item The matrix size of ${N_d}N_d^{\top}$ is $n \times n$, which is fixed and independent of the unknown column numbers of ${N_d}$. {(Note that the order of lack of persistency is unknown.)}
	\item {The matrix ${N_d}N_d^{\top} = \varPhi$ is unique and independent of the specific choices of ${k_a}$, ${k_b}$, and ${N_d}$, leading to a one-to-one correspondence between ${N_d}N_d^{\top}$ and the {PE} subspace.
    }
	% \item The matrix ${N_d}N_d^{\top}$, as will be proved, is independent of the specific choices of ${\varPhi_a}$, ${\varPhi_b}$, and ${N_d}$, which leads to a one-to-one correspondence between ${N_d}N_d^{\top}$ and the {PE} subspace.
\end{itemize}
The following algorithm is designed to estimate ${N_d}N_d^{\top}$:
\begin{subequations}\label{QP}
	\begin{align}
		\dot Q(t)  &=  - \beta Q(t) + \! \phi (t){\phi }^{\top}(t),  \label{Q}\\
		\dot P(t)  &=  - \gamma P(t) + \! \gamma I -\! {\gamma ^2}\!\!\! \int_0^t \!\! {{{\rm{e}}^{ - \gamma (t - \tau )}}{{\bar N}_u}(\tau )\bar N_u^{\top}(\tau ){\rm{d}}\tau },  \label{P}\\
		{{\bar N}_u}(\tau )  &= {{\hat N}_u}(k\delta ),\ \ \ \ \ \ \ \ \ \ \ \ \ \ \ \ \ \ \ k\delta  \le \tau  < \left( {k + 1} \right)\delta, \label{N}
	\end{align}
\end{subequations}
where $Q(0)=P(0)=0_{n\times n}$, $\beta$ appears in the cost function \eqref{cost}, $\gamma$ and $\delta $ are arbitrarily chosen finite positive reals, ${\hat N_u}(k\delta)$ is a matrix formed by an orthonormal basis of ${\rm{Ker}}Q(k\delta)$\footnote{In other words, the column vectors of ${\hat N_u}$ are the right singular vectors of $Q$ corresponding to zero singular values, and so can be obtained from singular value decomposition.}, and $k$ is a nonnegative integer used to locate the interval in which $\tau $ resides.

It should be noted that the eigenspaces of a continuously varying matrix are not necessarily continuous \cite{Kato1995}. Therefore, ${{\hat N}_u}\hat N_u^{\top}$ may not be a continuous function of time, even though $Q(t)$ is continuous in time. Given this unfavourable fact, the role of \eqref{P} and \eqref{N} is to generate a continuously differentiable estimate for ${N_d}N_d^{\top}$ from the information of a possibly discontinuous matrix signal ${{\hat N}_u}\hat N_u^{\top}$. The differentiability of $P$ paves the way for the subsequent algorithm design.

It should also be noted that a possibly discontinuous matrix ${{\hat N}_u}\hat N_u^{\top}$, even if bounded, may not be integrable (for example, in the case of having an oscillating discontinuity). Applying \eqref{N} can obtain an integrable matrix ${{\bar N}_u}\bar N_u^{\top}$, which guarantees a well-defined integral in \eqref{P}. Moreover, it reduces the computational load, in the sense that ${\hat N_u}$ is only computed at a frequency of $\delta$.

\begin{thm}\label{Theorem1} If regressor $\phi(t)$ displays PPE with lack of persistency of order $q$, then matrix $P(t)$ given by algorithm \eqref{QP} is continuously differentiable, satisfying $0 \le {P(t)} \le I$, and there exist two positive reals ${\rho_a}$ and ${\rho_b}$ such that $$\left\| {P(t) - {N_d}N_d^{\top}} \right\| \le {\rho _a}{{\rm{e}}^{ - {\rho _b}t}}.$$ Moreover, the decay rate ${\rho_b}$ can be made arbitrarily fast by increasing $\gamma$.
\end{thm}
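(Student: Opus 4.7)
The plan is to decompose the theorem into (i) regularity together with the sandwich $0\le P(t)\le I$, both of which follow directly from the structure of \eqref{P}, and (ii) exponential convergence of $P(t)$ to $N_dN_d^{\top}$. For (ii) I will first characterise $\mathrm{Ker}\,Q(t)$ after the DE window has elapsed, and then treat \eqref{P} as a linear ODE driven by an exponentially vanishing perturbation so that Lemma~\ref{lemma3} applies.

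For (i), I would apply variation of parameters to \eqref{P} and swap the order of the resulting double integral to obtain the explicit formula
\[
P(t) \;=\; (1-e^{-\gamma t})I \;-\; \int_0^{t}\gamma^{2}(t-\tau)e^{-\gamma(t-\tau)}\,\bar N_u(\tau)\bar N_u^{\top}(\tau)\,d\tau.
\]
Because every $\hat N_u(k\delta)$ has orthonormal columns, $\bar N_u\bar N_u^{\top}$ is an orthogonal projector, hence $0\le\bar N_u\bar N_u^{\top}\le I$. Combining this with the elementary identity $\int_0^{t}\gamma^{2}u\,e^{-\gamma u}du = 1-(1+\gamma t)e^{-\gamma t}$ sandwiches $P(t)$ between $\gamma t e^{-\gamma t}I$ and $(1-e^{-\gamma t})I$, which yields $0\le P(t)\le I$. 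Continuous differentiability follows because even though $\bar N_u\bar N_u^{\top}$ may jump at instants $k\delta$, convolving it against the continuous kernel $e^{-\gamma(t-\tau)}$ produces a function of $t$ that is itself continuous, so the right-hand side of \eqref{P} is continuous and $P\in C^{1}$.

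For (ii), the critical step is to pin down $\mathrm{Ker}\,Q(t)$ for $t\ge T$. From the upper bound in \eqref{DEcondition} together with $N_u=N_{bu}\in\mathrm{Ker}\,\varPhi_b$ I obtain $\int_t^{t+T}\|\phi^{\top}(\tau)N_u\|^{2}\,d\tau = 0$ for every $t\ge 0$; smoothness of $\phi$ then forces $\phi^{\top}(\tau)N_u\equiv 0$, and integrating against $e^{-\beta(t-\tau)}$ in \eqref{Q} gives $Q(t)N_u=0$, i.e., $\mathrm{Im}\,N_u\subseteq\mathrm{Ker}\,Q(t)$. For the reverse inclusion I use the lower bound: for $t\ge T$ the window $[t-T,t]$ lies inside $[0,t]$, whence
\[
N_d^{\top}Q(t)N_d \;\ge\; e^{-\beta T}\,N_d^{\top}\!\!\int_{t-T}^{t}\!\phi\phi^{\top}d\tau\;N_d \;\ge\; e^{-\beta T}\varSigma_{ad}\;>\;0.
\]
Since $[N_d,N_u]$ is orthogonal by \eqref{DE2} and dimension counting, this pins down $\mathrm{Ker}\,Q(t)=\mathrm{Im}\,N_u$ for $t\ge T$, and every orthonormal basis of this kernel necessarily yields the same projector $\hat N_u(k\delta)\hat N_u^{\top}(k\delta)=N_uN_u^{\top}=I-N_dN_d^{\top}$ whenever $k\delta\ge T$.

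With the projector identified, I would split the convolution in \eqref{P} at $T_0:=\lceil T/\delta\rceil\delta$. The piece over $[0,T_0]$ is uniformly bounded by a constant multiple of $e^{-\gamma t}$, while the piece over $[T_0,t]$ evaluates in closed form to $\gamma(1-e^{-\gamma(t-T_0)})(I-N_dN_d^{\top})$. Substituting back rewrites \eqref{P} as
\[
\dot P(t) \;=\; -\gamma P(t) \;+\; \gamma N_dN_d^{\top} \;+\; \Delta(t),\qquad \|\Delta(t)\|\le c\,e^{-\gamma t}.
\]
This fits Lemma~\ref{lemma3} with $\varUpsilon^{*}=-\gamma I$ and constant $u^{*}=\gamma N_dN_d^{\top}$ (whose associated $x^{*}(t)\to N_dN_d^{\top}$), delivering $\|P(t)-N_dN_d^{\top}\|\le\rho_a e^{-\rho_b t}$ for any $\rho_b<\gamma$; since $\gamma$ is a free design parameter, $\rho_b$ can be driven arbitrarily large. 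The main obstacle I anticipate is securing the second kernel inclusion: the upper bound of DE is indispensable there, and without the piecewise-constant construction \eqref{N} the possible discontinuities of $\hat N_u(t)$ would block any attempt to treat \eqref{P} as a smooth ODE at all.
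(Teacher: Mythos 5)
Your proposal is correct and follows essentially the same route as the paper's proof: the bound $0\le P(t)\le I$ from positive semidefiniteness of the projector $\bar N_u\bar N_u^{\top}$, the identification $\mathrm{Ker}\,Q(t)=\mathrm{Im}\,N_u$ for $t\ge T$ obtained by playing the upper and lower DE bounds against each other, and the splitting of the convolution in \eqref{P} at a finite time so that the tail yields $I-N_dN_d^{\top}$ up to exponentially decaying remainders. The only cosmetic differences are that you obtain the sandwich from a single closed-form double integral where the paper runs two separate comparison arguments, and that you delegate the final $t\,\mathrm{e}^{-\gamma t}$-type estimate to Lemma~\ref{lemma3} where the paper computes it explicitly (the paper's Step~2 on the uniqueness of $N_dN_d^{\top}$ is subsumed by your intrinsic kernel characterization).
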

{ See Section~\ref{proofthe1} for the proof of Theorem~\ref{Theorem1}.}

\subsection{Parameter Learning Algorithm} \label{CTPE}
The parameter learning is made possible by the continuously differentiable estimate for ${N_d}{N_d^{\top}}$ given in the previous subsection with the following algorithm to estimate $\theta$: 
\begin{subequations}\label{hat_theta}
	\begin{align}
	{{\dot{\hat\theta}}_d} &= - \varOmega \left( {R{{\hat \theta }_d} - zP\phi  - \dot P\varphi } \right), \label{thetad}\\
	{\hat \theta_u} &= \left( {I - P} \right){\hat \theta _0}, \label{thetau} \\
	\hat \theta \ &= {\hat \theta _d} + {\hat \theta _u}, \label{theta}
	\end{align}
\end{subequations}
where ${{\hat \theta }_d}(0) = 0$, ${\hat \theta }_0$ is the prior estimate already defined in the cost function \eqref{cost}, and $\varphi$, $\varOmega$ and $R$ are generated by
\begin{align}
	\dot \varphi =& - \beta \varphi  + z\phi, &\varphi (0) = \alpha {{\hat \theta }_0}, \tag{\ref{hat_theta}{d}} \label{varphi}\\
	\dot \varOmega =&\ \beta \varOmega - \varOmega R\varOmega, & \varOmega (0) = {\kappa ^{ - 1}}I, \tag{\ref{hat_theta}{e}} \label{omega}\\
			R =&\ P\phi (t){\phi }^{\top}(t)P + \kappa \beta \left( {I - P} \right)&\nonumber\\
		& + \dot PQP + PQ\dot P + \big( {\alpha {{\rm{e}}^{ - \beta t}} - \kappa } \big)\dot P, &
\tag{\ref{hat_theta}{f}} \label{R}
\end{align}
with $\kappa$ an arbitrarily chosen finite positive real, and $Q$, $P$ and $\dot P$ given in \eqref{QP}.

\begin{thm}\label{Theorem2} If regressor $\phi$ displays PPE with lack of persistency of order $q$, then the algorithm given by \eqref{hat_theta} guarantees that there exist two positive reals ${\rho_a}$ and ${\rho_b}$ such that $$\big\| {\hat \theta (t) - {\theta ^*}(t)} \big\| \le {\rho _a}{{\rm{e}}^{ - {\rho _b}t}},$$ where ${\theta^*}(t)$ is the least squares solution that minimizes the cost function $J$, and the decay rate $\rho_b$ can be made arbitrarily fast by increasing $\gamma$. In particular, there exists a positive real $\rho_c$ such that $$\big\| N_d^{\top}\big( {\hat{\theta} (t)  - \theta } \big) \big\| \le {\rho_c}{{\rm{e}}^{ - ({\beta}/{2})t}}$$ in the noise-free case $\varepsilon (t) \equiv 0$. 
\end{thm}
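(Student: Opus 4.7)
My plan is to obtain closed-form expressions for both the least-squares minimizer $\theta^*(t)$ and the algorithm's estimate $\hat\theta_d(t)$, and to compare them using a carefully chosen reference trajectory for $\Psi := \Omega^{-1}$. Setting $\nabla_\vartheta J(\theta^*) = 0$ and invoking the ODE \eqref{varphi} yields the normal equation $F(t)\theta^*(t) = \varphi(t)$ where $F(t) := Q(t) + \alpha{\rm e}^{-\beta t}I$, so $\theta^*(t) = F(t)^{-1}\varphi(t)$. Pre-multiplying this by $N_u^{\top}$ and using the identity $N_u^{\top}\phi(\tau) \equiv 0$ (Step~3 in the proof of Theorem~\ref{Theorem1}) gives $N_u^{\top}\theta^* = N_u^{\top}\hat\theta_0$; pre-multiplying by $N_d^{\top}$ and using the DE lower bound $D(t) := N_d^{\top}Q(t)N_d \ge {\rm e}^{-\beta T}\varSigma_{ad}$ for $t\ge T$ then shows that $\theta^*(t)$ is uniformly bounded.

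Setting $\Psi := \Omega^{-1}$, differentiating \eqref{omega} yields $\dot\Psi = -\beta\Psi + R$ with $\Psi(0) = \kappa I$. Letting $\zeta := \Psi\hat\theta_d$, a direct calculation using $\Psi\Omega = I$ and \eqref{thetad} produces $\dot\zeta = -\beta\zeta + zP\phi + \dot P\varphi$ with $\zeta(0) = 0$. Since $P\varphi$ satisfies the very same linear ODE with the same zero initial value (as follows from \eqref{varphi}), I conclude $\Psi(t)\hat\theta_d(t) \equiv P(t)\varphi(t)$, or equivalently, $\hat\theta_d(t) = \Omega(t)P(t)\varphi(t)$ in closed form.

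Combining these two identities, the estimation error decomposes as
\begin{equation*}
\hat\theta - \theta^* = \Omega\bigl(P - \Psi PF^{-1}\bigr)\varphi + (N_dN_d^{\top} - P)(\hat\theta_0 - \theta^*),
\end{equation*}
where $N_u^{\top}\theta^* = N_u^{\top}\hat\theta_0$ has been used to reduce $(I-P)(\hat\theta_0-\theta^*)$ to the displayed form. The second summand decays exponentially by Theorem~\ref{Theorem1} and boundedness of $\hat\theta_0 - \theta^*$. For the first summand I introduce the reference $\Psi^{\star} := N_d(D + \alpha{\rm e}^{-\beta t}I)N_d^{\top} + \kappa N_uN_u^{\top}$, which by a direct block-matrix computation satisfies the \emph{exact} identity $N_dN_d^{\top}F = \Psi^{\star}N_dN_d^{\top}$, and hence $(\Psi^{\star})^{-1}N_dN_d^{\top} = N_dN_d^{\top}F^{-1}$. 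Setting $S := \Psi - \Psi^{\star}$ and $R^{\star} := \phi\phi^{\top} + \kappa\beta N_uN_u^{\top}$ (the asymptotic value of $R$), one finds $\dot S = -\beta S + (R - R^{\star})$; since $R \to R^{\star}$ exponentially via $P \to N_dN_d^{\top}$ and $\dot P \to 0$ from Theorem~\ref{Theorem1}, Lemma~\ref{lemma3} delivers $S \to 0$ exponentially, and uniform positive-definiteness of $\Psi^{\star}$ for $t \ge T$ propagates this to $\Omega \to (\Psi^{\star})^{-1}$ exponentially. Telescoping $\Omega(P - \Psi PF^{-1})\varphi$ around $\Psi^{\star}$ and $N_dN_d^{\top}$, the identity $N_dN_d^{\top}F = \Psi^{\star}N_dN_d^{\top}$ annihilates the leading contribution, leaving only factors that decay exponentially.

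For the noise-free refinement, substituting $z = \phi^{\top}\theta$ into \eqref{varphi} gives $\varphi = Q\theta + \alpha{\rm e}^{-\beta t}\hat\theta_0$, whence $F(\theta^* - \theta) = \alpha{\rm e}^{-\beta t}(\hat\theta_0 - \theta)$; pre-multiplying by $N_d^{\top}$ and using $N_d^{\top}F^{-1} = (D + \alpha{\rm e}^{-\beta t}I)^{-1}N_d^{\top}$ yields $\|N_d^{\top}(\theta^* - \theta)\| \le C{\rm e}^{-\beta t}$. Combining with the main convergence bound on $\|\hat\theta - \theta^*\|$ and choosing $\gamma$ large enough that its decay rate exceeds $\beta/2$, the triangle inequality delivers the claim. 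The principal technical obstacle lies in the first-term analysis above: because the $N_u$-block of $F^{-1}$ grows like $(\alpha{\rm e}^{-\beta t})^{-1} \sim {\rm e}^{\beta t}$, a naive asymptotic reference for $\Psi$ would leave a blowing-up residual in the non-identifiable direction. The tailor-made choice of $\Psi^{\star}$ so that the cancelling identity $N_dN_d^{\top}F = \Psi^{\star}N_dN_d^{\top}$ is \emph{exact}, rather than merely asymptotic, is what removes this ${\rm e}^{\beta t}$ obstruction and reduces the residual to exponentially decaying quantities.
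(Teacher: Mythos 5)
Your architecture is essentially the paper's in different clothing: your $F=Q+\alpha{\rm e}^{-\beta t}I$ is the unsplit normal-equation matrix, your $\Psi^{\star}$ is exactly the paper's $\varPsi_\kappa$ in \eqref{Psi_alp}, your $\Psi$ is the paper's $\hat\varPsi_\kappa$ in \eqref{hat_psi_alp}, your exact identity $N_dN_d^{\top}F=\Psi^{\star}N_dN_d^{\top}$ is the paper's \eqref{NdNd_the_sta}, and the closed form $\hat\theta_d=\Omega P\varphi$ with the final telescoping reproduces Step 4 of the paper. The one place you genuinely diverge is the noise-free clause: the paper runs a Lyapunov argument on $\dot{\tilde\theta}_d^{*}=-\varPsi^{-1}N_d^{\top}\phi\phi^{\top}N_d\tilde\theta_d^{*}$ and gets rate $\beta/2$, whereas your identity $F(\theta^{*}-\theta)=\alpha{\rm e}^{-\beta t}(\hat\theta_0-\theta)$ together with the uniform bound on $(D+\alpha{\rm e}^{-\beta t}I)^{-1}$ (from \eqref{DE1} for $t\ge T$) gives rate $\beta$ in one line --- cleaner and strictly stronger.

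There are, however, two gaps. First, everything downstream of ``a direct calculation using $\Psi\Omega=I$'' presupposes that $\Omega(t)$ exists and is invertible for all $t$, equivalently that the solution of the linear ODE $\dot\Psi=-\beta\Psi+R$, $\Psi(0)=\kappa I$, is nonsingular for all $t$; your telescoping additionally needs $\sup_{t}\|\Omega(t)\|<\infty$. Neither is automatic: $\Omega$ obeys a Riccati equation that could a priori escape in finite time, and positive definiteness of $\Psi$ is clear only for small $t$ (where $\Psi\approx\kappa I$) and for large $t$ (where $\Psi\to\Psi^{\star}>0$), not in between. The paper closes this in Step 3 by exhibiting the closed form $\Psi=PQP+\alpha{\rm e}^{-\beta t}P+\kappa(I-P)$ and factoring it through the full-row-rank matrix $\left[\begin{smallmatrix}P_d&P_u\end{smallmatrix}\right]$ with a positive definite middle block; you need this step or an equivalent.

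Second, your route to $\Omega\to(\Psi^{\star})^{-1}$ via $\dot S=-\beta S+(R-R^{\star})$ cannot deliver the clause that ``$\rho_b$ can be made arbitrarily fast by increasing $\gamma$'': $S(0)=(\kappa-\alpha)N_dN_d^{\top}\neq 0$ in general, and even if it vanished, the convolution $\int_0^t{\rm e}^{-\beta(t-\tau)}(R-R^{\star})\,{\rm d}\tau$ caps the decay of $S$ at rate $\beta$ no matter how fast $R-R^{\star}$ decays. The repair is again the closed form for $\Psi$: as in \eqref{psi_dif}, $\Psi-\Psi^{\star}$ is algebraically a bounded multiple of $P-N_dN_d^{\top}$ and so inherits the rate $\bar\gamma<\gamma$ from Theorem~\ref{Theorem1}. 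This shortfall does not damage your noise-free clause, since any rate no smaller than $\beta/2$ suffices there, but it does leave part of the theorem unproved as written.
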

{ See Section~\ref{proofthe2} for the proof of Theorem~\ref{Theorem2}.}

\begin{rem}\label{remark2}
Combining \eqref{model} and \eqref{Nd_theta}, with the fact that zero $\varepsilon$ leads to \eqref{til_the_d_con}, it can be easily concluded that bounded $\varepsilon$ leads to bounded $N_{d}^{\top}\theta^*(t)$ for all $t\geq 0$. It then follows from \eqref{nec_con2} that $\theta^*(t)$ is also bounded for all $t\geq 0$.
\end{rem}

\begin{rem}\label{remark3} Although no prior knowledge of the {PE} and non-{PE} subspaces, algorithm \eqref{hat_theta} can adaptively update the estimate in the former subspace while leaving the estimates unchanged in the latter subspace. As a result, the obtained parameter estimate is, not only robust to noises but also optimal with exponential convergence in the sense of least squares. {In contrast, the standard and modified recursive least squares algorithms \cite{Ioannou2006,Ioannou2012} are not guaranteed to give a least squares estimate (even in the subspace where PE is satisfied), because their recursive forms are derived based on PE condition in the whole space.}
%In the absence of measurement noises, the estimates exponentially converge to the true parameters in the {PE} subspace.
\end{rem}
\section{Distributed Parameter Learning Under Complementary Partial Excitation Condition}\label{section4}
The purpose of this section is to achieve aims \ref{amis3})--\ref{amis4}) formulated in Section \ref{PF}. 
In this section, a distributed parameter estimation algorithm is given first, followed by error dynamics analysis, and then the main results about the convergence of the algorithm are presented.

%\subsection{Distributed Learning Algorithm and Error Dynamics}
The distributed parameter learning algorithm is designed based on Section \ref{SLSM}. In distributed situations, the duplication of the algorithms \eqref{QP}, \eqref{thetad}, and \eqref{varphi}--\eqref{R} at each node yields
\begin{subequations}\label{dis_est}
	\begin{align}
		{{\dot{\hat\theta}}_{id}} = - \varOmega_i \left( {R_i{{\hat \theta}_{id}} - z_iP_i\phi_i(t)  - \dot P_i\varphi_i } \right). 
    \end{align}
\end{subequations}
The distributed learning through communication among neighbors is achieved by the parameter update
\begin{equation}
	{\dot {\hat \theta}_{iu}} =  - {\eta _{id}}{P_i}{\hat \theta _{iu}} - {\eta _{iu}}\left( {I - {P_i}} \right)\sum\nolimits_{j = 1}^N {{a_{ij}}\left( {{{\hat \theta }_i} - {{\hat \theta }_j}} \right)}, \tag{\ref{dis_est}{b}} \label{hat_the_iu}
\end{equation}
where the initial condition ${\hat \theta}_{iu}(0)$ is chosen as $\hat \theta_{i0}$, the prior estimate for $\theta$, $\eta _{id}$ and $\eta _{iu}$ are arbitrarily chosen finite positive reals number, and ${{\hat \theta }_i}$ is the parameter estimation computed as
\begin{equation}
	{\hat \theta _i} = {P_i}{\hat \theta _{id}} + \left( {I - {P_i}} \right){\hat \theta _{iu}}. \tag{\ref{dis_est}{c}} \label{hat_the_i}
\end{equation}

{
\begin{rem}\label{addremark} 
In \cite{Uzeda2024}, a $\mu$-modification is applied in the non-PE subspace, which provides robust adaptation without sacrificing the estimation error regulation. In \eqref{hat_the_iu}, a neighboring-feedback consensus strategy is brought into the local non-PE subspaces, which helps achieving full parameter estimation.
\end{rem}
}

The behavior of ${\hat \theta _{id}}$ has already been studied in Section \ref{CTPE}: According to Step 4 in \ref{proofthe2}, there exist $\rho_a,\ \rho_b>0$ such that
\begin{align} \label{hatthetaid-star}
	\left\| {\hat \theta _{id}}(t) - {N_{id}}{N_{id}^{\top}}\theta _i^*(t) \right\| \le {\rho _a}{{\rm{e}}^{ - {\rho _b}t}},
\end{align}
where the column vectors of $N_{id}$ form an orthonormal basis for the local {PE} subspace, and $\theta_i^*$ is defined as
\begin{align*}
	{\theta_i^*(t)} = \mathop {\arg \min } \nolimits_{\vartheta_i(t)} J_i\left( {\vartheta_i(t)} \right),\end{align*}
i.e., the least squares solution that minimizes the cost function 
\begin{align*}
		J_i\left( {\vartheta_i (t)} \right) =&\  \frac{1}{2}\int_0^t {{{\rm{e}}^{ - \beta_i (t - \tau )}}{{\left( {z_i(\tau ) - {\vartheta_i^{\top}}(t) \phi_i (\tau )} \right)}^2}{\rm{d}}\tau }  \\
		& + \frac{\alpha_i}{2}{{\rm{e}}^{ - \beta_i t}}{\big\| {\vartheta_i (t) - {{\hat \theta }_{i0}}} \big\|^2},
	%\end{split}
\end{align*}
with $\alpha_i$ the degree of trust in the prior estimate $\hat \theta_{i0}$. In addition, according to Step 5 in \ref{proofthe2}, there exists $\rho_c>0$ such that 
\begin{align} \label{Nthestar-Nthe_i}
	\left\| N_{id}^{\top}\theta _i^*(t) - N_{id}^{\top}\theta \right\| \le {\rho _c}{{\rm{e}}^{ - ({\beta_i}/{2})t}},
\end{align}
in the absence of measurement noises.

To assess the behavior of ${\hat \theta _{iu}}$ and ${\hat \theta _i}$, the following estimation error vectors are defined:
\begin{align} \label{til_def}
	{{\tilde \theta }_{id}} = {{\hat \theta }_{id}} - \theta,& & {{\tilde \theta }_{iu}} = {{\hat \theta }_{iu}} - \theta,& & {{\tilde \theta }_i} = {{\hat \theta }_i} - \theta. 
\end{align}
Then, from \eqref{hat_the_i},  we have the following equation 
\begin{equation} \label{til_rel}
	\begin{split}
		{{\tilde \theta }_i} &= {P_i}\left( {{{\hat \theta }_{id}} - \theta } \right) + \left( {I - {P_i}} \right)\left( {{{\hat \theta }_{iu}} - \theta } \right) \\
		&= {P_i}{{\tilde \theta }_{id}} + \left( {I - {P_i}} \right){{\tilde \theta }_{iu}}.
	\end{split}
\end{equation}
Utilizing \eqref{til_def} and \eqref{til_rel}, the dynamics of ${\tilde \theta _{iu}}$ is such that
\begin{equation} \label{dyn_til_the}
	\begin{aligned}
		{{\dot {\tilde \theta}}_{iu}} =&  - {\eta _{id}}{P_i}{{\hat \theta }_{iu}} - {\eta _{iu}}\left( {I - {P_i}} \right)\sum\nolimits_{j = 1}^N {{a_{ij}}\left( {{{\tilde \theta }_i} - {{\tilde \theta }_j}} \right)} \\
		=&  - {\eta _{id}}{P_i}{{\hat \theta }_{iu}} - {\eta _{iu}}\left( {I - {P_i}} \right)\sum\nolimits_{j = 1}^N {{l_{ij}}{{\tilde \theta }_j}} \\
		=& - {\eta _{id}}{P_i}{{\hat \theta }_{iu}} - {\eta _{iu}}\left( {I - {P_i}} \right)\sum\nolimits_{j = 1}^N {{l_{ij}}{P_j}{{\tilde \theta }_{jd}}} \\
		& \hspace{-0.25in}- {\eta _{iu}}\! \left( {I \!-\! {P_i}} \right)\!\!\sum\limits_{j = 1}^N \!{{l_{ij}}\!\!\left( {I \!-\! {P_j}} \right)\!\!{\left( {{N_{ju}}{N_{ju}}^{\top} \!+\! {N_{jd}}{N_{jd}}^{\top}} \right)}{{\tilde \theta }_{ju}}}.
	\end{aligned}
\end{equation}
For each agent, the dynamics \eqref{dyn_til_the} are pre-multipled by constant matrices ${N_{iu}^{\top}}$, whose row vectors form an orthonormal basis for the local non-{PE} subspace. Then, by considering all nodes, the overall error dynamics system can be written in the following compact form:
%\begin{equation}
	\begin{align}\label{ove_dyn}
		N_U^{\top}{{\dot {\tilde \theta} }_U} =&  - {H_U}{N_U^{\top}}{P_U}\left( {{\cal L} \otimes I_n} \right){P_U}{N_U}{N_U^{\top}}{{\tilde \theta }_U}\nonumber\\
		& - {H_U}{N_U^{\top}}{P_U}\left( {{\cal L} \otimes I_n} \right){P_U}{N_D}{N_D^{\top}}{{\tilde \theta }_U} \nonumber\\
		& - {H_U}\! {N_U^{\top}}{P_U}\! \left( {{\cal L} \otimes I_n} \right)\!{P_D}{{\tilde \theta }_D} \!-\! {H_D} {N_U^{\top}}{P_D}{{\hat \theta }_U},
	\end{align}
%\end{equation}
where ${N_U} = {\rm{diag}}\big( N_{1u}, \dots, N_{Nu}\big)$, ${{\tilde \theta }_U} = {\rm{col}}{\big( {\tilde \theta }_{1u}, \dots, {\tilde \theta }_{Nu}\big)}$, ${N_D} = {\rm{diag}}\big( N_{1d}, \dots, N_{Nd}\big)$, ${{\hat \theta }_U} = {\rm{col}}{\big( {\hat \theta }_{1u}, \dots, {\hat \theta }_{Nu}\big)}$, ${{\tilde \theta }_D} = {\rm{col}}{\big( {\tilde \theta }_{1d}, \dots,{\tilde \theta }_{Nd}\big)}$, ${H_U} = {\rm{diag}}{\big( {\eta _{1u}}{I_{{q_1}}}, \dots,{\eta _{Nu}}{I_{{q_N}}}\big)}$, ${H_D} = {\rm{diag}}{\big( {{\eta _{1d}}{I_{{q_1}}}},\dots, {{\eta _{Nd}}{I_{{q_N}}}}\big)}$, ${P_D} = {\rm{diag}}{\left( {{P_1}},\dots, {{P_N}}\right)}$, and ${P_U} = {\rm{diag}}{\left( {I_n - {P_1}},\dots, {I_n - {P_N}}\right)}$.

On the right-hand side of \eqref{ove_dyn}, the first term is the autonomous part, while the second, third and fourth terms all contribute to the nonautonomous part. It will be shown shortly that, under a complementary PPE condition, the following properties hold:

\begin{prop}\label{property1}
The coefficient matrix of the autonomous part exponentially converges to a stable matrix.
\end{prop}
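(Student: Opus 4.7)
The plan is to study the autonomous coefficient matrix
\[
A(t) = -H_U N_U^{\top} P_U(t)(\mathcal{L}\otimes I_n) P_U(t) N_U
\]
appearing in \eqref{ove_dyn} by first identifying its limiting value as $t\to\infty$ and then verifying that this limit is Hurwitz via Lemmas \ref{lemma1} and \ref{lemma2}.

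First, I would lift the exponential estimate of Theorem \ref{Theorem1} from each node to the network level. Since $\|P_i(t)-N_{id}N_{id}^{\top}\|\le\rho_{ai}e^{-\rho_{bi}t}$ and $I-N_{id}N_{id}^{\top}=N_{iu}N_{iu}^{\top}$, stacking into the block-diagonal matrix $P_U$ yields $\|P_U(t) - N_U N_U^{\top}\|\le \rho^* e^{-\bar\rho t}$ for some positive constants $\rho^*$ and $\bar\rho$. The orthonormality $N_U^{\top}N_U = I$ gives $N_U^{\top}(N_U N_U^{\top}) = N_U^{\top}$ and $(N_U N_U^{\top})N_U = N_U$, so a routine expansion shows that $A(t)$ converges exponentially to
\[
A^{*} = -H_U N_U^{\top}(\mathcal{L}\otimes I_n) N_U.
\]

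Next, I would show $A^{*}$ is Hurwitz under the complementary DE hypothesis $\bigcap_{i=1}^N \mathrm{Im}\,N_{iu} = \{0\}$. Under this assumption, Lemma \ref{lemma1} supplies $\Xi_0 = \mathrm{diag}(\xi_1,\ldots,\xi_N) > 0$ with $\xi^{\top}\mathcal{L} = 0$ such that $N_U^{\top}(\hat{\mathcal{L}}\otimes I_n) N_U > 0$, where $\hat{\mathcal{L}} = \Xi_0\mathcal{L} + \mathcal{L}^{\top}\Xi_0$. Because $\Xi_0\otimes I_n$ is block diagonal with blocks $\xi_i I_n$ that commute past the block-diagonal $N_U = \mathrm{diag}(N_{1u},\ldots,N_{Nu})$, one can pull it through to obtain $(\Xi_0\otimes I_n)N_U = N_U\hat\Xi$, where $\hat\Xi = \mathrm{diag}(\xi_1 I_{q_1},\ldots,\xi_N I_{q_N})$. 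This rewrites Lemma \ref{lemma1} as $\hat\Xi M + M^{\top}\hat\Xi > 0$ with $M = N_U^{\top}(\mathcal{L}\otimes I_n)N_U$.

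Finally, I choose the Lyapunov weight $\Xi = H_U^{-1}\hat\Xi > 0$. Since $H_U = \mathrm{diag}(\eta_{iu}I_{q_i})$ and $\hat\Xi$ share the same block structure of scalar-multiple-of-identity blocks, they commute, and a short calculation yields
\[
\Xi A^{*} + A^{*\top}\Xi = -(\hat\Xi M + M^{\top}\hat\Xi) < 0,
\]
so Lemma \ref{lemma2} shows $A^{*}$ is Hurwitz, completing Property \ref{property1}. The main obstacle I foresee is the careful bookkeeping of the block-diagonal/Kronecker structure in the reduction from Lemma \ref{lemma1} to the Lyapunov inequality for $A^{*}$; in particular, verifying that $\Xi_0\otimes I_n$ commutes past $N_U$ to yield $\hat\Xi$ with the correct block sizes $q_i$, and that this $\hat\Xi$ then commutes with $H_U$ so that the quadratic form collapses to exactly the positive definite expression guaranteed by Lemma \ref{lemma1}. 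The remaining manipulations follow from Theorem \ref{Theorem1} and elementary properties of Kronecker products.
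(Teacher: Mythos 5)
Your proof is correct and follows essentially the same route as the paper: exponential convergence of the autonomous coefficient matrix to $-H_U N_U^{\top}(\mathcal{L}\otimes I_n)N_U$ via Theorem~\ref{Theorem1}, then Lemma~\ref{lemma1} and Lemma~\ref{lemma2} with exactly the paper's Lyapunov weight $\varXi=\mathrm{diag}(\xi_1\eta_{1u}^{-1}I_{q_1},\dots,\xi_N\eta_{Nu}^{-1}I_{q_N})=H_U^{-1}\hat{\varXi}$, and your block-diagonal bookkeeping $(\varXi_0\otimes I_n)N_U=N_U\hat{\varXi}$ is precisely the step the paper leaves implicit. The only detail you elide is that the complementary DE hypothesis is stated as $\sum_{i=1}^N\varPhi_{ia}>0$, from which $\bigcap_{i=1}^N{\rm{Im}}N_{iu}=\{0\}$ must first be deduced by the short contradiction argument using \eqref{DE3}; otherwise the two proofs coincide.
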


\begin{prop}\label{property2} The nonautonomous part exponentially converges to zero in the absence of measurement noises.\end{prop}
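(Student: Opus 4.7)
My plan is to examine the three nonautonomous terms on the right-hand side of \eqref{ove_dyn} -- call them $T_1 := -H_U N_U^{\top} P_U(\mathcal{L}\otimes I_n)P_U N_D N_D^{\top}\tilde{\theta}_U$, $T_2 := -H_U N_U^{\top} P_U(\mathcal{L}\otimes I_n)P_D \tilde{\theta}_D$, and $T_3 := -H_D N_U^{\top} P_D \hat{\theta}_U$ -- and show that each converges to zero exponentially. The main tools are Theorem \ref{Theorem1}, which yields $P_i(t) \to N_{id}N_{id}^{\top}$ exponentially, together with the noise-free node-wise estimates \eqref{hatthetaid-star} and \eqref{Nthestar-Nthe_i}, and the orthogonality $N_{iu}^{\top}N_{id}=0$ inherited from the SVD of $\varPhi_{ia}$ and $\varPhi_{ib}$.

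For $T_1$, the exponential decay is packed into the block-diagonal factor $P_U N_D$, whose $i$-th block equals $(I - P_i(t))N_{id}$. Since $(I - N_{id}N_{id}^{\top})N_{id} = N_{id} - N_{id}(N_{id}^{\top}N_{id}) = 0$, Theorem \ref{Theorem1} gives $\|(I - P_i(t))N_{id}\| \le \rho_a e^{-\rho_b t}$. Analogously, for $T_3$, the $i$-th block of $N_U^{\top} P_D$ is $N_{iu}^{\top} P_i(t)$, and the orthogonality $N_{iu}^{\top}N_{id}=0$ combined with Theorem \ref{Theorem1} gives $\|N_{iu}^{\top}P_i(t)\| \le \rho_a e^{-\rho_b t}$. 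For $T_2$, I would instead look at $P_D \tilde{\theta}_D$ block-wise: its $i$-th block is $P_i(\hat{\theta}_{id} - \theta)$; writing $\hat{\theta}_{id}(t) = N_{id}N_{id}^{\top}\theta_i^*(t) + e_i(t)$ with $\|e_i(t)\|$ exponentially small via \eqref{hatthetaid-star}, and noting that $N_{id}^{\top}\theta_i^*(t) - N_{id}^{\top}\theta$ is also exponentially small by \eqref{Nthestar-Nthe_i} in the noise-free case, one finds that both $P_i(t)\hat{\theta}_{id}(t)$ and $P_i(t)\theta$ tend exponentially to the common limit $N_{id}N_{id}^{\top}\theta$, so their difference $P_i(\hat{\theta}_{id} - \theta)$ vanishes exponentially.

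The step I expect to be the main obstacle is converting these ``matrix factor'' decays into decay of the full terms, which requires uniform boundedness of the vector factors $N_D^{\top}\tilde{\theta}_U$, $\tilde{\theta}_D$, and $\hat{\theta}_U$. Boundedness of $\tilde{\theta}_D$ is inherited from the single-agent analysis of Section \ref{CTPE} together with Remark \ref{remark2}. Boundedness of $\hat{\theta}_U$ (equivalently $\tilde{\theta}_U$) is not immediate, however, because \eqref{hat_the_iu} is itself driven by the very terms whose smallness is being asserted; I would close this loop by a bootstrap argument that pairs Property \ref{property1} with Lemma \ref{lemma3}, or more concretely by splitting $\hat{\theta}_{iu}$ into its $N_{id}$-component, which satisfies an equation of the form $\frac{d}{dt}(N_{id}^{\top}\hat{\theta}_{iu}) = -\eta_{id}N_{id}^{\top}\hat{\theta}_{iu} + \varepsilon_i(t)$ with $\varepsilon_i$ exponentially small, and its $N_{iu}$-component, whose boundedness is controlled by the consensus dynamics to be treated in Property \ref{property1}. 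Once boundedness is in hand, the exponential decay of $T_1$, $T_2$, and $T_3$ follows by multiplying the bounded vector factor by the exponentially decaying matrix factor identified above.
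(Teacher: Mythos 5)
Your proposal is correct and follows essentially the same route as the paper: Step~3 of the proof of Theorem~\ref{Theorem3} likewise bounds $\left\| P_U N_D \right\|$ and $\left\| N_U^{\top}P_D \right\|$ by $\left\| P_D - N_D N_D^{\top} \right\|$ via Theorem~\ref{Theorem1}, handles $P_D\tilde\theta_D$ through \eqref{hatthetaid-star} and \eqref{Nthestar-Nthe_i}, and obtains the needed uniform boundedness of $\tilde\theta_U$ (hence $\hat\theta_U$) exactly as you anticipate, by applying Lemma~\ref{lemma3} to the full $\tilde\theta_U$ dynamics \eqref{ove_dyn_U} with the limiting stable matrix $\varLambda_a^*$ that combines the $-H_D$ block with the consensus block from Property~\ref{property1}. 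The circularity you flag is resolved in the paper in precisely the way you propose, so no gap remains.
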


\begin{prop}\label{property3} The nonautonomous part exponentially converges to a bounded set containing the origin in the presence of measurement noise.\end{prop}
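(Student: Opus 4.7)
The plan is to split the nonautonomous part of \eqref{ove_dyn} -- the second, third, and fourth summands, which I denote (T2), (T3), and (T4) -- and show that each is dominated by an exponentially decaying signal plus a uniformly bounded residual of size proportional to $\|\varepsilon\|_\infty$. Theorem~\ref{Theorem1} supplies the exponentially vanishing factors, Theorem~\ref{Theorem2} together with Remark~\ref{remark2} supplies the bounded residuals, and a coupled boundedness argument ties the two strands together.

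I would first exploit Theorem~\ref{Theorem1} to extract decaying coefficients. Using $\|P_i(t)-N_{id}N_{id}^\top\|\le\rho_a e^{-\rho_b t}$ together with the orthogonality identities $N_{id}^\top N_{id}=I$ and $N_{iu}^\top N_{id}=0$, both $(I-P_i)N_{id}$ and $N_{iu}^\top P_i$ vanish exponentially, so the block-diagonal matrices $P_U N_D$ and $N_U^\top P_D$ vanish exponentially at a rate tunable via $\gamma$. In (T2) this makes the coefficient multiplying $N_D^\top\tilde\theta_U$ exponentially small, and in (T4) the coefficient multiplying $\hat\theta_U$ is exponentially small; once the signals $N_D^\top\tilde\theta_U$ and $\hat\theta_U$ are known to be bounded, (T2) and (T4) themselves decay exponentially.

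For (T3)$=-H_U N_U^\top P_U(\mathcal{L}\otimes I_n)P_D\tilde\theta_D$, I would split
\begin{align*}
P_i\tilde\theta_{id}=(P_i-N_{id}N_{id}^\top)\tilde\theta_{id}+N_{id}\bigl(N_{id}^\top\hat\theta_{id}-N_{id}^\top\theta\bigr).
\end{align*}
The first summand decays exponentially by Theorem~\ref{Theorem1} (provided $\tilde\theta_{id}$ is bounded, which is immediate from \eqref{hatthetaid-star} and Remark~\ref{remark2}), while $N_{id}^\top\hat\theta_{id}-N_{id}^\top\theta_i^*$ decays exponentially by \eqref{hatthetaid-star}; Remark~\ref{remark2} then bounds $\|\theta_i^*(t)\|$ by a constant proportional to $\|\varepsilon_i\|_\infty$. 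Hence (T3) is the sum of an exponentially vanishing signal and a bounded residual whose size is controlled by the noise level. In the noise-free setting of Property~\ref{property2}, \eqref{Nthestar-Nthe_i} shrinks this residual to zero exponentially as well.

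The main obstacle is to justify the a priori boundedness of $\hat\theta_U$ and $N_D^\top\tilde\theta_U$ used in paragraph two, since these signals appear in dynamical feedback with the quantities being bounded. To close the loop I would analyze $N_{id}^\top\hat\theta_{iu}$ by premultiplying \eqref{hat_the_iu} by $N_{id}^\top$ and decomposing $\hat\theta_{iu}=N_{id}N_{id}^\top\hat\theta_{iu}+N_{iu}N_{iu}^\top\hat\theta_{iu}$: the resulting dynamics have limiting autonomous coefficient $-\eta_{id}I$ (Hurwitz) with exponentially vanishing multiplicative perturbations, and the consensus forcing carries the prefactor $N_{id}^\top(I-P_i)\to 0$. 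Lemma~\ref{lemma3} then yields uniform boundedness of $N_{id}^\top\hat\theta_{iu}$, hence of $N_D^\top\tilde\theta_U$. For the complementary projection $N_{iu}^\top\hat\theta_{iu}$, which is the state of \eqref{ove_dyn}, boundedness follows from Property~\ref{property1} together with an ISS-style Gr\"onwall estimate that absorbs the state-dependent pieces of (T2) and (T4) -- already known to have exponentially decaying prefactors -- as vanishing perturbations of the Hurwitz autonomous limit. Collecting the three term-wise bounds, the full nonautonomous part satisfies $\|\cdot\|\le c_1 e^{-\rho t}+c_2\|\varepsilon\|_\infty$, which is precisely the content of Property~\ref{property3}.
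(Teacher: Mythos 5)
Your overall strategy matches the paper's: the paper also isolates the second, third and fourth terms of \eqref{ove_dyn} as a single signal $\varLambda_c(t)$, shows via Theorem~\ref{Theorem1} that $\|P_U N_D\|$ and $\|N_U^{\top}P_D\|$ decay exponentially, uses \eqref{hatthetaid-star} together with the Remark~\ref{remark2}-type boundedness of $N_{id}^{\top}\theta_i^*$ to handle the $P_D\tilde\theta_D$ term (your decomposition of $P_i\tilde\theta_{id}$ is exactly the paper's bound \eqref{PDtiltheD-NNthestar-theI}), and concludes that $\varLambda_c$ converges exponentially to the bounded signal $\varLambda_c^*(t)=-H_U N_U^{\top}(\mathcal{L}\otimes I_n)N_D N_D^{\top}(\theta_I^*-\theta_I)$, which is Property~\ref{property3}.

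The one step that does not survive as literally written is your closing of the boundedness loop. You propose to bound $N_{id}^{\top}\hat\theta_{iu}$ \emph{first}, arguing that its dynamics have limiting coefficient $-\eta_{id}I$ and that the consensus forcing carries the vanishing prefactor $N_{id}^{\top}(I-P_i)$. But that forcing is $N_{id}^{\top}(I-P_i)\sum_j a_{ij}(\hat\theta_i-\hat\theta_j)$, and $\hat\theta_j$ contains $N_{ju}N_{ju}^{\top}\hat\theta_{ju}$ --- precisely the complementary projection you bound only afterwards. An exponentially decaying coefficient times a signal not yet known to be bounded gives nothing, so the sequential argument is circular. The repair is to treat the two projections jointly: the paper's Step~2 writes the full stacked dynamics $\dot{\tilde\theta}_U=\varLambda_a(t)\tilde\theta_U+\varLambda_b(t)$, absorbs every state-dependent term (including those with decaying prefactors) into $\varLambda_a(t)$, shows $\varLambda_a(t)\to\varLambda_a^*=-\left[\begin{smallmatrix}N_D&N_U\end{smallmatrix}\right]\mathrm{diag}\bigl(H_D,\,H_U N_U^{\top}(\mathcal{L}\otimes I)N_U\bigr)\left[\begin{smallmatrix}N_D&N_U\end{smallmatrix}\right]^{\top}$, which is stable by $H_D>0$ and Property~\ref{property1}, and invokes Lemma~\ref{lemma3} once with the genuinely exogenous $\varLambda_b$ (which depends only on $\tilde\theta_D$ and $\theta$, bounded independently). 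Your two limiting blocks are exactly the two diagonal blocks of $\varLambda_a^*$, so the fix costs nothing conceptually, but the stacking is necessary, not optional.
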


%\subsection{Convergence Proof}
The convergence of algorithm \eqref{dis_est} can be characterized with the help of the following reference system:
\begin{subequations}\label{refsys_cha}
\begin{align} 
	\tilde \theta _D^*(t) &= {{N_D^{\top}}}{\rm{col}}{\left( {\theta_1^*(t) - \theta }, \dots, {\theta_N^*(t) - \theta }\right)}\\
	{\dot {\tilde \theta}^*_U }(t) &= - {H_U}{N_U^{\top}}\left( {{\cal L} \otimes I} \right)\left({N_U}\tilde \theta _U^*(t)+{N_D}\tilde \theta _D^*(t)\right) \label{reference_solutionU}\\
	{\tilde \theta _I}^*(t) &= {N_D}\tilde \theta _D^*(t) + {N_U}\tilde \theta _U^*(t),
\end{align}
\end{subequations}
where ${\tilde \theta}^*_U(0)=0$, and $\theta _i^*$ is the optimal parameter estimate in the sense of minimizing $J_i$. In fact, the solution to \eqref{reference_solutionU} is 
\[\tilde \theta _U^*(t) =  - \!\! \int_0^t \!\!{{{\rm{e}}^{ - {H_U}{N_U^\top} \left( {{\cal L} \otimes I} \right){N_U}(t - \tau )}}{H_U}{N_U^\top} \left( {{\cal L} \otimes I} \right){N_D}\tilde \theta _D^*(\tau ){\rm{d}}\tau }. \]

\begin{thm}\label{Theorem3} Suppose the regressor at the $i$th node ${\phi _i}$ lacks persistency of order $q_i$, the complementary PPE condition {$\sum\nolimits_{i = 1}^N {{\varPhi_{i}}}  > 0$} is satisfied, and the communication graph is strongly connected. Then the algorithm \eqref{dis_est} guarantees that there exist two positive reals $\rho_a$ and $\rho_b$ such that
\begin{align*}
	\left\| {\tilde \theta _I}(t) -{\tilde \theta _I}^*(t) \right\| \le {\rho _a}{{\rm{e}}^{ - {\rho _b}t}},
\end{align*}
where ${\tilde \theta _I} = {\rm{col}}{( {{{\tilde \theta }_1}},\dots, {{{\tilde \theta }_N}})}$ is the overall parameter estimation error vector, ${\tilde \theta _I}^*$ is the trajectory of system \eqref{refsys_cha}, and $\rho_b$ can be made arbitrarily large by increasing ${\gamma _i}$ and ${\eta _{iu}}$. 
In particular, for any $\rho_d < \mathop {\min }\limits_{i\in \mathcal{N}} \left\{ {{\beta _i}/2} \right\}$, there exists a positive real $\rho_c$ such that $\big\| {\tilde \theta _I}(t) \big\| \le {\rho_c}{{\rm{e}}^{ - \rho_d t}}$ in the noise-free case $\varepsilon_i (t) \equiv 0$.
\end{thm}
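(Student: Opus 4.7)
The plan is to recast the error dynamics \eqref{ove_dyn} into the form handled by Lemma~\ref{lemma3}, then combine with Theorems~\ref{Theorem1}--\ref{Theorem2}. Using \eqref{til_rel}, decompose
\begin{equation*}
\tilde{\theta}_I - \tilde{\theta}_I^{*} = \bigl(P_D \tilde{\theta}_D - N_D \tilde{\theta}_D^{*}\bigr) + (P_U - N_U N_U^{\top})\tilde{\theta}_U + N_U\bigl(N_U^{\top}\tilde{\theta}_U - \tilde{\theta}_U^{*}\bigr).
\end{equation*}
The first summand decays exponentially by combining $P_i\to N_{id}N_{id}^{\top}$ (Theorem~\ref{Theorem1}) with $\hat\theta_{id}\to N_{id}N_{id}^{\top}\theta_i^{*}$ (Theorem~\ref{Theorem2}); the second likewise decays once $\tilde\theta_U$ is shown bounded. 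The essential work is to prove that $x(t):=N_U^{\top}\tilde\theta_U(t)$ converges exponentially to $\tilde\theta_U^{*}(t)$.

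After substituting $\tilde\theta_U = N_U x + N_D N_D^{\top}\tilde\theta_U$ into \eqref{ove_dyn}, the dynamics of $x$ takes the form $\dot x = \Upsilon(t) x + \tilde u(t)$ with $\Upsilon(t) := -H_U N_U^{\top} P_U(\mathcal{L}\otimes I_n) P_U N_U$ and $\tilde u(t)$ collecting the remaining terms. Using $N_U^{\top}N_U=I$, $N_U^{\top}N_D=0$, and Theorem~\ref{Theorem1}, I would verify $\Upsilon(t)\to\Upsilon^{*}:=-H_U N_U^{\top}(\mathcal{L}\otimes I_n) N_U$ exponentially. Exploiting that $P_U N_D$ and $N_U^{\top}P_D$ both vanish exponentially together with $P_D\tilde\theta_D\to N_D\tilde\theta_D^{*}$, the input should satisfy $\tilde u(t)\to u^{*}(t):=-H_U N_U^{\top}(\mathcal{L}\otimes I_n) N_D\tilde\theta_D^{*}(t)$, which is precisely the driving signal of the reference system \eqref{reference_solutionU}. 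Boundedness of $\hat\theta_U$ and $N_D^{\top}\tilde\theta_U$ entering $\tilde u$ will be secured by a cascaded argument: the identifiable component $N_{id}^{\top}\hat\theta_{iu}$ is damped at rate $\eta_{id}$ through the first term of \eqref{hat_the_iu}, while the non-identifiable components inherit boundedness from the stability of the autonomous part identified below.

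The main obstacle is proving that $\Upsilon^{*}$ is Hurwitz, which is exactly where the complementary DE hypothesis is needed. Since $\mathrm{Im}\,N_{iu}=\mathrm{Ker}\,\varPhi_{ia}$, the hypothesis $\sum_{i}\varPhi_{ia}>0$ is equivalent to $\bigcap_{i\in\mathcal{N}}\mathrm{Im}\,N_{iu}=\{0\}$. Combined with strong connectivity of $\mathcal{G}$, Lemma~\ref{lemma1} provides $\xi_i>0$ such that $N_U^{\top}(\hat{\mathcal{L}}\otimes I_n) N_U > 0$ with $\hat{\mathcal{L}}=\varXi_0\mathcal{L}+\mathcal{L}^{\top}\varXi_0$. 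Choosing the weighting $\Xi:=\mathrm{diag}(\xi_1/\eta_{1u}\,I_{q_1},\ldots,\xi_N/\eta_{Nu}\,I_{q_N})>0$ and using the identity $\Xi H_U N_U^{\top} = N_U^{\top}(\varXi_0\otimes I_n)$, a direct computation gives $\Xi\Upsilon^{*}+\Upsilon^{*\top}\Xi = -N_U^{\top}(\hat{\mathcal{L}}\otimes I_n)N_U < 0$, so Lemma~\ref{lemma2} yields Hurwitzness with $\mathrm{Re}(s)\le -\upsilon$, where $\upsilon$ grows with $\min_i\eta_{iu}$. Lemma~\ref{lemma3} then gives exponential convergence of $x$ to $\tilde\theta_U^{*}$, at a rate bounded by the minimum of $\upsilon$ and the convergence rates of $\Upsilon(t)\to\Upsilon^{*}$ and $\tilde u\to u^{*}$, all tunable through $\gamma_i$ and $\eta_{iu}$. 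Collecting the three summands of the decomposition yields the claimed bound, with decay rate made arbitrarily large as stated. Finally, in the noise-free case, \eqref{Nthestar-Nthe_i} gives $\tilde\theta_D^{*}(t)\to 0$ at rate $\min_i\beta_i/2$; applying the second half of Lemma~\ref{lemma3} to the reference cascade \eqref{reference_solutionU} delivers $\tilde\theta_U^{*}(t)\to 0$ and hence $\tilde\theta_I^{*}(t)\to 0$ at any rate below $\min\{\upsilon,\min_i\beta_i/2\}$, giving the advertised bound $\rho_d<\min_i\beta_i/2$ for $\tilde\theta_I$.
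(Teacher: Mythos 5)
Your proposal is correct and follows essentially the same route as the paper: the same decomposition of $\tilde\theta_I-\tilde\theta_I^*$, the same use of the complementary DE condition with Lemma~\ref{lemma1} and Lemma~\ref{lemma2} to show $-H_UN_U^{\top}(\mathcal{L}\otimes I)N_U$ is Hurwitz (including the identical weighting $\varXi=\mathrm{diag}(\xi_i\eta_{iu}^{-1}I_{q_i})$), the same application of Lemma~\ref{lemma3} to the $N_U^{\top}\tilde\theta_U$ dynamics, and the same treatment of the noise-free case via \eqref{Nthestar-Nthe_i}. Your ``cascaded argument'' for the boundedness of $\tilde\theta_U$ is a sketch of what the paper carries out explicitly in its Step~2 (writing the full $\tilde\theta_U$ dynamics with a limiting block-diagonal stable matrix and a bounded limiting input, then invoking Lemma~\ref{lemma3}), so no substantive difference remains.
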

{ See Section~\ref{proofthe3} for the proof of Theorem~\ref{Theorem3}.}
\begin{rem}\label{remark4} 
Similar to Remark~\ref{remark2}, zero $\varepsilon _i$ leads to \eqref{Nthestar-Nthe_i}, and bounded $\varepsilon _i$ leads to the bounded function $N_{id}^{\top}\theta _i^*$.
Then it follows from \eqref{Nthestar-Nthe_i} and \eqref{Lamdacexp} that Properties~\ref{property2} and \ref{property3} hold true. 
In addition, since $- {H_U}{N_U^{\top}}\left( {{\cal L} \otimes I} \right){N_U}$ is Hurwitz, it is guaranteed that the trajectory ${\tilde \theta _I}^*$ of system \eqref{refsys_cha} is bounded if $\varepsilon _i$ is bounded, and converges to zero if $\varepsilon _i$ is zero.
\end{rem}
\section{Applications in System Identification}\label{section5}

This section provides two simulation examples of the proposed algorithms to demonstrate their possible applications in system identification.

\subsection{Application 1: Identification for Linear Systems}
Consider the identification problem for a linear time-invariant dynamical system
%\begin{subequations}
	\begin{align}\label{SISO}
		\dot x &= Fx + bu,\quad
		&y = h_{(1)}^{\top}x, 
	\end{align}
%\end{subequations}
where $x\in \mathbb{R}^{n_F}$, $u\in \mathbb{R}$, and $y\in \mathbb{R}$ are the state, input, and output respectively, with unknown system parameters $F \in \mathbb{R}^{n_F \times n_F}$, $b,\ {h_{(1)}} \in \mathbb{R}^{n_F}$. The objective is to estimate the unknown parameters from the input and output of the system. If $\big( {F,h_{(1)}^{\top}} \big)$ is observable, it entails no loss of generality to suppose that
\begin{equation} \label{Fh}
	\setlength\arraycolsep{3 pt}
	F = \left[\begin{array}{*{20}{c:c}}
    f\ & \begin{array}{*{20}{c}}
    I_{{n_F} - 1}\\
    \hdashline
    {0_{1 \times ({n_F} - 1)}}
    \end{array}\end{array}
    \right]\ \ \textnormal{and}\ \ {h_{(1)}} = \left[ {\begin{array}{*{20}{c}}
	1\\
    \hdashline
	{{0_{({n_F} - 1) \times 1}}}
\end{array}} \right],
\end{equation} 
 %    \left[ {\begin{array}{*{20}{c:c}}
	% 		{{f_1}}&{}\\
	% 		\vdots &{{I_{{n_F} - 1}}}\\
	% 		{{f_{{n_F} - 1}}}&{}\\
	% 		\hdashline
	% 		{{f_{{n_F}}}}&{{0_{1 \times ({n_F} - 1)}}}
	% \end{array}} \right]
    with $f = \textnormal{col}({f_1}, \dots,f_{n_F})$ and $b =\textnormal{col}({b_1}, \dots, b_{n_F})$.
The state space representation \eqref{SISO} with \eqref{Fh} is referred to as the observable canonical form \cite{Bay1999}, which is equivalent to any other state space representation. Under this form, only $f$ and $b$ are unknown parameters that need to be estimated. Based on this form, one can finally arrive at (see Section \ref{ARS-repre} for details) the following algebraic representation of system \eqref{SISO}:
\begin{equation}\label{AR}
	y = h_{(1)}^{\top}{{\rm{e}}^{Wt}}x(0) + h_{(1)}^{\top}{\varPi _y}\left( {f - w} \right) + h_{(1)}^{\top}{\varPi _u}b, 
\end{equation} 
where $w = \textnormal{col}({w_1}, \dots ,{w_{n_F}})$ is a vector designed such that
\begin{equation} \label{W}
	\setlength\arraycolsep{3 pt}
	W = \left[\begin{array}{*{20}{c:c}}
    w\ & \begin{array}{*{20}{c}}
    I_{{n_F} - 1}\\
    \hdashline
    {0_{1 \times ({n_F} - 1)}}
    \end{array}\end{array}
    \right]
    %\left[ {\begin{array}{*{20}{c:c}}
			%{{w_1}}&{}\\
		%	\vdots &{{I_{{n_F} - 1}}}\\
		%	{{w_{{n_F} - 1}}}&{}\\
		%	\hdashline
		%	{{w_{{n_F}}}}&{{0_{1 \times ({n_F} - 1)}}}
	%\end{array}} \right]
\end{equation}
is a stable matrix, and $r_u$ and $r_y$ generated by
\begin{subequations}
	\begin{align}
		{{\dot r}_u} &= {W^{\top}}{r_u} + {h_{(1)}}u,& {r_u}(0) = 0, \label{r_u}\\
		{{\dot r}_y} &= {W^{\top}}{r_y} + {h_{(1)}}y,& {r_y}(0) = 0,  \label{r_y}
	\end{align}
\end{subequations}
are both bounded signals. The matrices ${\varPi _u}$ and ${\varPi _y}$ in \eqref{AR} are written as
\begin{subequations}
\begin{align}
	{\varPi _u} &=H_W^{-1} \textnormal{\col}({r_u^{\top}}, {r_u^{\top}W},\dots, {r_u^{\top}{W^{{n_F} - 1}}}),\label{Pi_u_mat} \\
 %    &{\left[ {\begin{array}{*{20}{c}}
	% 			{h_{(1)}^{\top}}\\
	% 			{h_{(1)}^{\top}W}\\
	% 			\vdots \\
	% 			{h_{(1)}^{\top}{W^{{n_F} - 1}}}
	% 	\end{array}} \right]^{ - 1}}\left[ {\begin{array}{*{20}{c}}
	% 		{r_u^{\top}}\\
	% 		{r_u^{\top}W}\\
	% 		\vdots \\
	% 		{r_u^{\top}{W^{{n_F} - 1}}}
	% \end{array}} \right] \label{Pi_u_mat} \\
%\setlength\arraycolsep{3 pt}
	{\varPi _y} &=H_W^{-1}\textnormal{\col}({r_y^{\top}},{r_y^{\top}W},\dots, {r_y^{\top}{W^{{n_F} - 1}}}), \label{Pi_y_mat}%\\
 %    &{\left[ {\begin{array}{*{20}{c}}
	% 			{h_{(1)}^{\top}}\\
	% 			{h_{(1)}^{\top}W}\\
	% 			\vdots \\
	% 			{h_{(1)}^{\top}{W^{{n_F} - 1}}}
	% 	\end{array}} \right]^{ - 1}}\left[ {\begin{array}{*{20}{c}}
	% 		{r_y^{\top}}\\
	% 		{r_y^{\top}W}\\
	% 		\vdots \\
	% 		{r_y^{\top}{W^{{n_F} - 1}}}
	% \end{array}} \right], \label{Pi_y_mat}
\end{align}
\end{subequations}
where $H_W=\textnormal{col}({h_{(1)}^{\top}}, {h_{(1)}^{\top}} W,\dots, h_{(1)}^{\top}{W^{{n_F} - 1}})$.
The algebraic representation \eqref{AR} coincides with the regression model \eqref{model}, i.e.,
\begin{equation*}
	\underbrace{y + h_{(1)}^{\top}{\varPi _y}w}_z =\underbrace{h_{(1)}^{\top}\left[ {\begin{array}{*{20}{c}}
			{{\varPi _u}}&{{\varPi _y}}
	\end{array}} \right]}_{\phi^\top} \underbrace{\left[ {\begin{array}{*{20}{c}}
	b\\
	f
\end{array}} \right]}_{\theta} + \underbrace{h_{(1)}^{\top}{{\rm{e}}^{Wt}}x(0)}_{\varepsilon}. 
\end{equation*} 

\subsubsection*{\textbf{Numerical Example of Application 1}}
Let $n_F=3$, $b = \textnormal{col}(1,{ - 5},9)$, $f = \textnormal{col}({ - 2.5},{ - 11},{ - 5})$, and choose $w =\textnormal{col}({ - 4}, { - 9.25}, { - 6.25})$, ${{\hat \theta }_0} = {\mathbf{1}_{6 \times 1}}$, $\alpha  =1$, $ \beta  =1$, $\gamma  =1$, $\delta  =1$, and $\kappa  = 1$. {Consider the following exploration inputs 
$$u = 10\sum\nolimits_{j = 1}^k {\sin \left( {(2j - 1)t + 2j} \right)}, \ k = 1,2,3$$ 
for system \eqref{SISO}. The simulation results shown in Figs.~\ref{one_fre_noisefree} and \ref{three_fre_noisefree} are obtained in the absence of noise, whereas the results shown in Figs.~\ref{one_fre}, \ref{three_fre}, \ref{two_fre_con1}, and \ref{two_fre_con2} are obtained by adding white Gaussian noise with a standard deviation of 1 to $y$. In Figs.~\ref{two_fre_con1} and \ref{two_fre_con2}, the proposed algorithm \eqref{hat_theta} is compared with the following standard recursive least squares algorithm \cite{Ioannou2006}: 
\begin{align*}
	\dot {\breve \theta} &=  - \breve\Omega \phi ({\phi ^ \top }\breve \theta  - z),\ \breve \theta (0) = {{\hat \theta }_0}\nonumber\\ 
	\dot {\breve \Omega} &= \beta \breve\Omega  - \breve\Omega \phi {\phi ^ \top }\breve\Omega,\ \breve\Omega (0) = {\kappa ^{ - 1}}I.
\end{align*}
}
\begin{figure}[htpb]\centering
	\centerline{\includegraphics[width=0.45\textwidth]{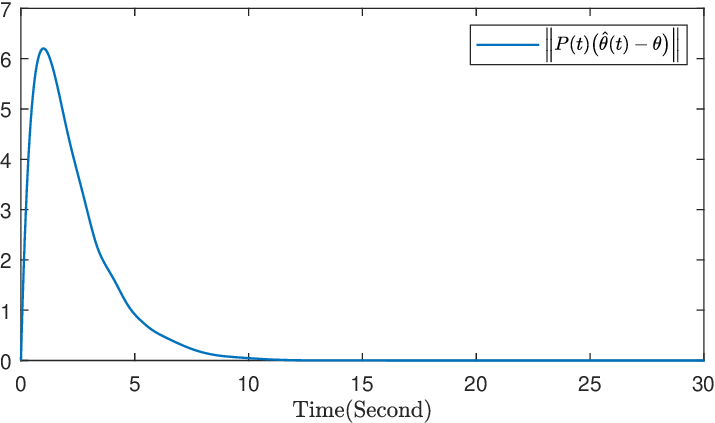}}
	\caption{Subspace parameter learning error when $k=1$.}
	\label{one_fre_noisefree}
% \quad
% 	\centerline{\includegraphics[width=0.5\textwidth]{FIG/k2_noisefree.eps}}
% 	\caption{Subspace parameter learning error when $u=10\sin(t+2) +10\sin(3t+4)$.}
% 	\label{two_fre_noisefree}
\quad
	\centerline{\includegraphics[width=0.45\textwidth]{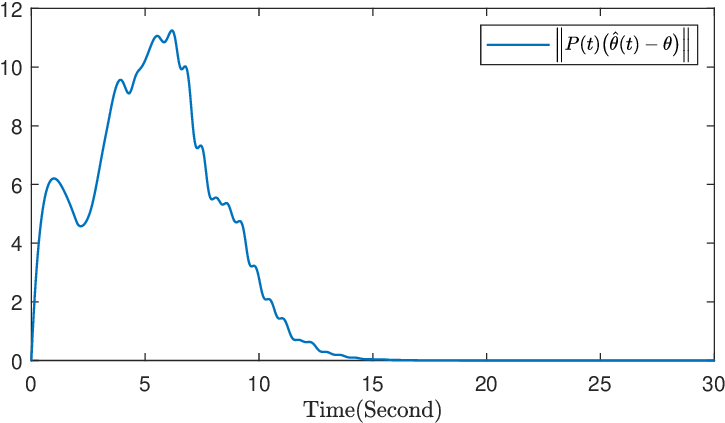}}
	\caption{Subspace parameter learning error when $k=3$.}
	\label{three_fre_noisefree}
\end{figure}  

\begin{figure}[htpb]\centering
	\centerline{\includegraphics[width=0.45\textwidth]{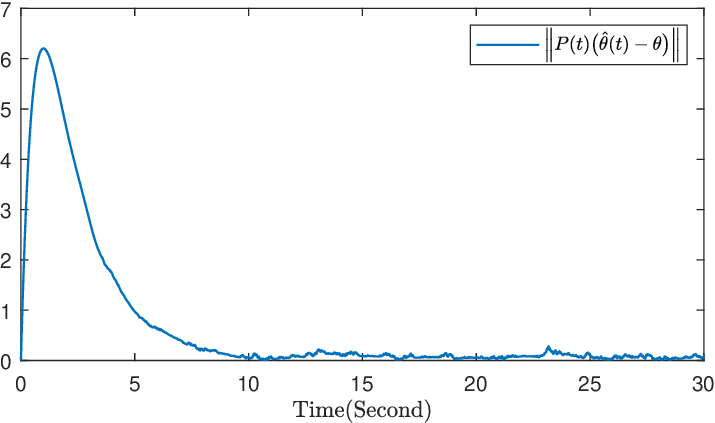}}
	\caption{Subspace parameter learning error when $k=1$ in the presence of white noise.}
	\label{one_fre}
% \quad
% 	\centerline{\includegraphics[width=0.5\textwidth]{FIG/k2.eps}}
% 	\caption{Subspace parameter learning error when $u=10\sin(t+2) +10\sin(3t+4)$ in the presence of white noise.}
% 	\label{two_fre}
\quad
	\centerline{\includegraphics[width=0.45\textwidth]{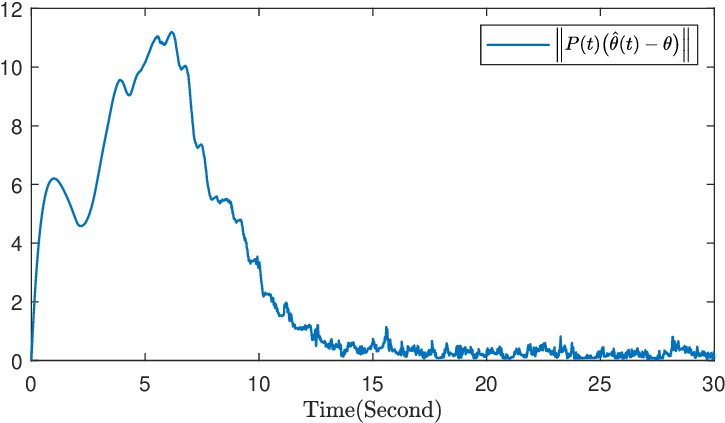}}
	\caption{Subspace parameter learning error when $k=3$ in the presence of white noise.}
	\label{three_fre}
\end{figure}

\begin{figure}[htpb]\centering
	\centerline{\includegraphics[width=0.45\textwidth]{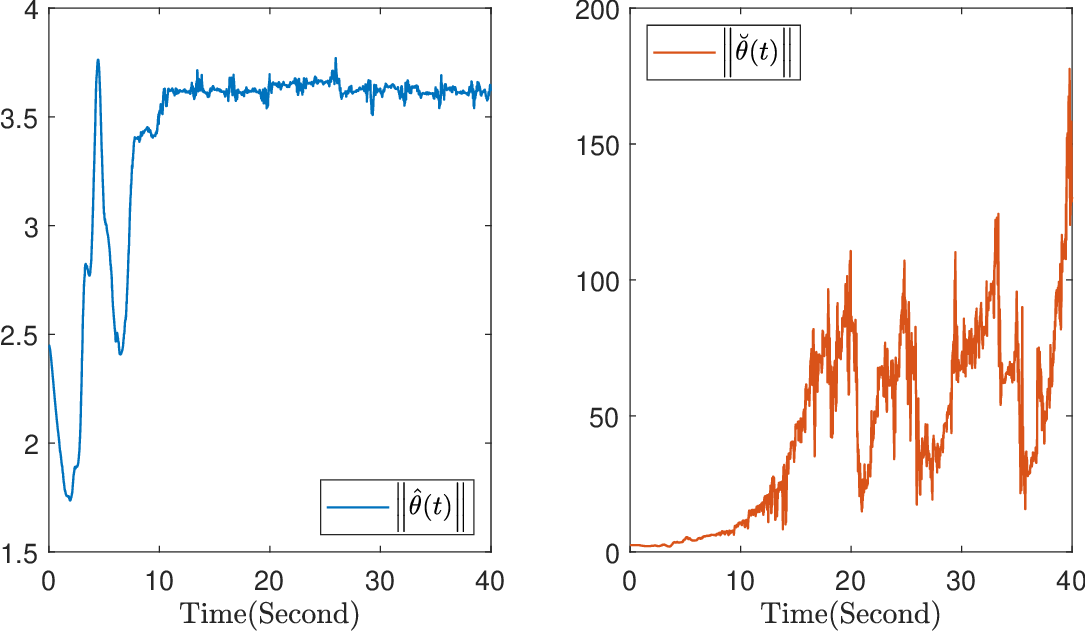}}
	\caption{{Parameter estimate when $k=2$ in the presence of white noise.}}
	\label{two_fre_con1}
\quad
	\centerline{\includegraphics[width=0.45\textwidth]{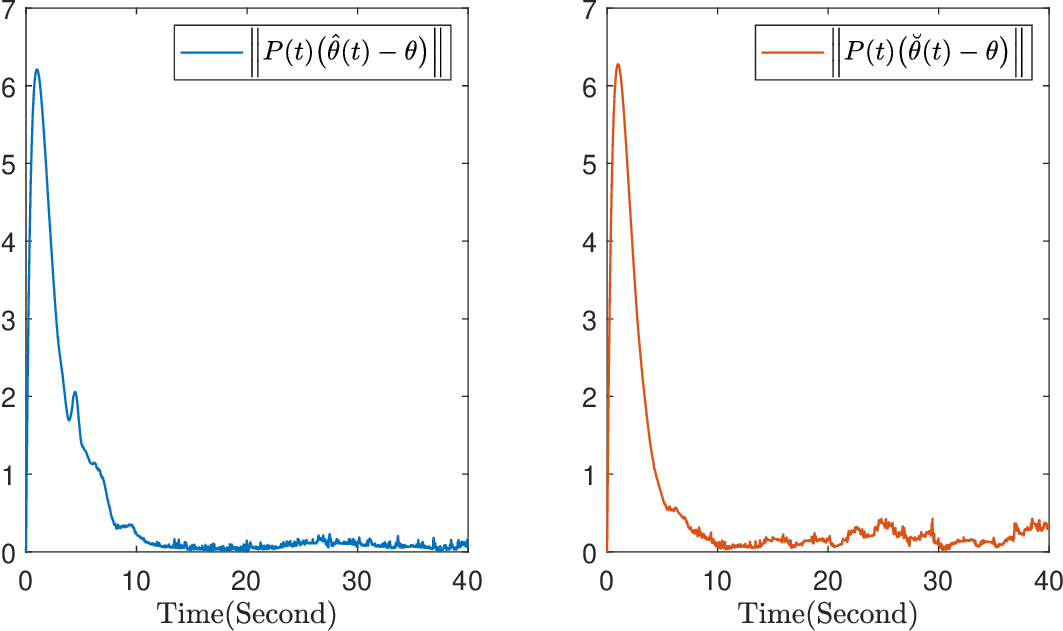}}
	\caption{{Subspace parameter learning error when $k=2$ in the presence of white noise.}}
	\label{two_fre_con2}
\end{figure}

% \begin{figure}[htpb]\centering
% 	\centerline{\includegraphics[width=0.5\textwidth]{FIG/k1_elewise_noisefree.png}}
% 	\caption{Parameter learning error when $u=10\sin(t+2)$ (where $\tilde{\theta}_{[i]}(t)$ is the $i$th element of $(\hat{\theta}(t)-\theta$).}
% 	\label{onefre_elewise_noisefree}
% % \quad
% % 	\centerline{\includegraphics[width=0.5\textwidth]{FIG/k2_elewise_noisefree.png}}
% % 	\caption{Parameter learning error when $u=10\sin(t+2)+10\sin(3t+4)$.}
% % 	\label{twofre_elewise_noisefree}
% \quad
% 	\centerline{\includegraphics[width=0.5\textwidth]{FIG/k3_elewise_noisefree.png}}
% 	\caption{Parameter learning error when $u=10\sin(t+2) +10\sin(3t+4)+10\sin(5t+6)$.}
% 	\label{threefre_elewise_noisefree}
% \end{figure}
%Recall from the proposed algorithm \eqref{QP} that $P(t)$ is the estimation for the {PE} subspace. 
From Figs. \ref{one_fre_noisefree}--\ref{three_fre}, it can be seen that, in the estimated PE subspace ${\mathop{\rm Im}\nolimits} P$, the parameter estimation error $\hat \theta(t)  - \theta$ can converge to zero in the absence of measurement noise, and can converge to a small residual set in the presence of measurement noise. 
%
% The subspace parameter estimation errors converge to zero (Fig.~\ref{one_fre_noisefree}), even though the parameter estimation errors do not (Fig.~\ref{onefre_elewise_noisefree}).  %and \ref{two_fre_noisefree}% and \ref{twofre_elewise_noisefree}
Here are some further discussions about the simulation results:\\
%\begin{itemize}%\setlength{\leftmargin}{0pt}

\textbf{1)} When the frequencies contained in $u$ are not sufficiently rich ($k=1$), the unknown parameters cannot be correctly estimated.  %However, the relation $P(t)\big( {\hat \theta(t)  - \theta } \big) = 0$ can reveal some useful information. 
Nevertheless, at time $t=30s$, we can calculate a full rank factorization $P = {P_d}P_d^{\top}$. According to ${P_d}$, $\hat \theta $, and the relation $P_d^{\top}(t)\left( {\hat \theta (t) - \theta } \right) \approx 0$, the unknown parameters are supposed to satisfy the following two independent constraints:
	\begin{equation*}
		\setlength\arraycolsep{1.5 pt}
		{10^{ - 2}}\left[ {\begin{array}{*{20}{c}}
				{ - 59}&{ - 6}&{59}&{11}&{53}&{ - 12}\\
				0&{ - 62}&0&{ - 54}&{17}&{54}
		\end{array}} \right]\left[ {\begin{array}{*{20}{c}}
				b\\
				f
		\end{array}} \right] \approx {10^{ - 2}}\left[ {\begin{array}{*{20}{c}}
				{ - 52}\\
				{ - 16}
		\end{array}} \right].
	\end{equation*}
{ 
The constraints reflect internal relations among system parameters in real time. The appearance of abnormal relations or relation changes may indicate that the system encounters a poorly conditioned estimation problem. This interesting property could potentially facilitate system monitoring and fault detection.  However, this remains to be investigated.
}
 % which can be used for further analysis in real applications like fault detection and diagnosis.   
% \textbf{2)} With the increasing number of the distinct frequencies contained in $u$ (see the case shown in Fig. \ref{two_fre}), more information about the unknown parameters is revealed. At time $t=30s$, one can obtain, in the same way as above, the following four independent constraints:
% 	\begin{equation*}
% 		\setlength\arraycolsep{1.5 pt}
% 		{10^{ - 2}} \left[ {\begin{array}{*{20}{c}}
% 				{ - 89}&{ - 11}&{29}&{33}&5&{ - 5}\\
% 				{31}&{14}&{ - 4}&{85}&2&{ - 40}\\
% 				{ - 23}&{68}&{ - 24}&{ - 14}&{ - 59}&{ - 24}\\
% 				{ - 7}&{ - 50}&{ - 52}&{25}&{ - 54}&{34}
% 		\end{array}} \right] \left[ {\begin{array}{*{20}{c}}
% 				b\\
% 				f
% 		\end{array}} \right]  = {10^{ - 2}} \left[ {\begin{array}{*{20}{c}}
% 				{122}\\
% 				{ - 119}\\
% 				{220}\\
% 				{144}
% 		\end{array}} \right],
% 	\end{equation*}
%  which can be used for further analysis in real applications like fault detection and diagnosis.

\textbf{2)} When there are three distinct frequencies contained in $u$ ($k=3$), the regressor becomes persistently exciting, and, consequently, it the unknown parameters can be estimated correctly. At time $t=30s$, the matrix $P(t)$ is of full rank, and the values of $\textnormal{col}(b,f)$ are estimated as
	\begin{equation*}
		\textnormal{col}(b,f) \approx \textnormal{col}({1.00}, { - 5.01}, {8.99}, { - 2.51}, { - 11.04}, { - 5.02}).
	\end{equation*}

{\textbf{3)} The standard recursive least squares algorithm is not guaranteed to give a bounded parameter estimate in the presence of noise when the PE condition is not satisfied. In contrast, the proposed algorithm \eqref{hat_theta} can provide a bounded parameter estimate with a better estimation accuracy in the estimated PE subspace.
}

{
\begin{rem}
In view of numerical errors, the singular values of $Q$ are regarded as zero if they are less than a small threshold, based on which $\hat N_u$ is computed in \eqref{QP}.
If the noise added to $y$ is insignificant, the slight mismatch between the measured and the true regressors will not make any difference to $\hat N_u$.
If the noise is significant, it might be impossible to distinguish the PE and non-PE subspace accurately. In this case, by performing the parameter learning algorithm \eqref{hat_theta}, the regressor mismatch is actually taken as part of the term $\varepsilon$ in \eqref{model}.
%the parameter estimation algorithm will treat the regressor mismatch as part of $\varepsilon$ in \eqref{model}.
\end{rem}
}
%\end{itemize}
\subsection{Application 2: Identification for Interconnected Linear Systems}
{In homogeneous multi-agent systems \cite{lewis2013cooperative,Chen2014} and homogeneous large-scale interconnected systems \cite{Yu2017,Xue2021,LiBiao2024}, agents or subsystems share the same dynamic model. The identification problem therein can be formulated as a distributed parameter estimation problem.}
Consider a network of $N$ identical linear time-invariant dynamical systems:
\begin{subequations}\label{NSISO}
	\begin{align}
		{{\dot x}_i} &= F{x_i} + b{u_i} + g\sum\limits_{j = 1}^N {{c_{ij}}h_{(1)}^{\top}{x_j}}, &\label{NSISO_state} \\
		{y_i} &= h_{(1)}^{\top}{x_i},& i = 1,2, \dots,N\ , \label{NSISO_output}
	\end{align}
\end{subequations}
where $x_i \in \mathbb{R}^{n_F}$, $u_i \in \mathbb{R}$, and $y_i \in \mathbb{R}$ are respectively the state, input, and output of the $i$th subsystem, with $F \in \mathbb{R}^{n_F \times n_F}$, $b,\ g,\ {h_{(1)}} \in \mathbb{R}^{n_F}$ all unknown. Let ${c_{ij}}$ be given as either $0$ or $1$, which is known and used to denote the coupling relations among the subsystems. 

It might be difficult or impossible to estimate the unknown parameters by using the input and output information from only one subsystem. So the objective is to design $N$ cooperative estimators for parameter estimation, where the $i$th estimator is in charge of the $i$th subsystem, collecting the information of $u_i$, $y_i$, and $\sum\nolimits_{j = 1}^N {{c_{ij}}{y_j}}$. If $\big( {F,h_{(1)}^{\top}} \big)$ is observable, it imposes no loss of generality to choose the observable canonical form \eqref{Fh} for system identification. Similarly to \eqref{AR}, one can finally arrive at the following algebraic representation of system \eqref{NSISO}:
\begin{equation} \label{Ny_AR}
	{y_i} \!= h_{(1)}^{\top}{{\rm{e}}^{Wt}}{x_i}(0) + h_{(1)}^{\top} \! \left[{\varPi _{yi}}\!\left( {f - w} \right) + {\varPi _{ui}}b + {\varPi _{ci}}g\right],
\end{equation}
where $\setlength\arraycolsep{3 pt} w = \textnormal{col}(
			{w_1}, \dots ,{w_{n_F}})$ is a vector designed such that \eqref{W} is a stable matrix, and $r_{ui}$, $r_{yi}$, and $r_{ci}$ generated by
\begin{subequations}
	\begin{align}
		{{\dot r}_{ui}} &= {W^{\top}}{r_{ui}} + {h_{(1)}}u_i,\ &{r_{ui}}(0) = 0, \label{r_ui}\\
		{{\dot r}_{yi}} &= {W^{\top}}{r_{yi}} + {h_{(1)}}y_i,\ &{r_{yi}}(0) = 0,  \label{r_yi} \\
		{{\dot r}_{ci}} &= {W^{\top}}{r_{ci}} + {h_{(1)}}\sum\nolimits_{j = 1}^N {{c_{ij}}{y_j}},\ &{r_{ci}}(0) = 0,  \label{r_ci}
	\end{align}
\end{subequations}
are bounded signals. The matrices ${\varPi _{ui}}$, ${\varPi _{yi}}$, and ${\varPi _{ci}}$ in \eqref{Ny_AR} are given as
\begin{align*}
 	{\varPi _{ui}} &= H_W^{-1}\col({r_{ui}^{\top}},{r_{ui}^{\top}W},\dots, {r_{ui}^{\top}{W^{{n_F} - 1}}})\\
    %{\left[ {\begin{array}{*{20}{c}}
% 				{h_{(1)}^{\top}}\\
% 				{h_{(1)}^{\top}W}\\
% 				\vdots \\
% 				{h_{(1)}^{\top}{W^{{n_F} - 1}}}
% 		\end{array}} \right]^{ - 1}}\left[ {\begin{array}{*{20}{c}}
% 			{r_{ui}^{\top}}\\
% 			{r_{ui}^{\top}W}\\
% 			\vdots \\
% 			{r_{ui}^{\top}{W^{{n_F} - 1}}}
% 	\end{array}} \right] \\
% \setlength\arraycolsep{3 pt}
{\varPi _{yi}} &=H_W^{-1}\col({r_{yi}^{\top}},{r_{yi}^{\top}W},\dots, {r_{yi}^{\top}{W^{{n_F} - 1}}})\\
%
% {\left[ {\begin{array}{*{20}{c}}
% 			{h_{(1)}^{\top}}\\
% 			{h_{(1)}^{\top}W}\\
% 			\vdots \\
% 			{h_{(1)}^{\top}{W^{{n_F} - 1}}}
% 	\end{array}} \right]^{ - 1}}\left[ {\begin{array}{*{20}{c}}
% 		{r_{yi}^{\top}}\\
% 		{r_{yi}^{\top}W}\\
% 		\vdots \\
% 		{r_{yi}^{\top}{W^{{n_F} - 1}}}
% \end{array}} \right] \\
% \setlength\arraycolsep{3 pt}
{\varPi _{ci}} &=H_W^{-1}\col({r_{ci}^{\top}},{r_{ci}^{\top}W},\dots, {r_{ci}^{\top}{W^{{n_F} - 1}}}),
%
% {\left[ {\begin{array}{*{20}{c}}
% 			{h_{(1)}^{\top}}\\
% 			{h_{(1)}^{\top}W}\\
% 			\vdots \\
% 			{h_{(1)}^{\top}{W^{{n_F} - 1}}}
% 	\end{array}} \right]^{ - 1}}\left[ {\begin{array}{*{20}{c}}
% 		{r_{ci}^{\top}}\\
% 		{r_{ci}^{\top}W}\\
% 		\vdots \\
% 		{r_{ci}^{\top}{W^{{n_F} - 1}}}
% \end{array}} \right],
\end{align*}
where $H_W=\textnormal{col}({h_{(1)}^{\top}}, {h_{(1)}^{\top}} W,\dots, h_{(1)}^{\top}{W^{{n_F} - 1}})$.
The algebraic representation \eqref{Ny_AR} coincides with the regression model \eqref{Nmeasure}, i.e.,
\begin{equation*}
	\setlength\arraycolsep{2 pt}
	\underbrace{{y_i} + h_{(1)}^{\top}{\varPi _{yi}}w}_{z_i} =\underbrace{h_{(1)}^{\top}\left[ {\begin{array}{*{20}{c}}
				{{\varPi _{ui}}}&{{\varPi _{yi}}}&{{\varPi _{ci}}}
		\end{array}} \right]}_{\phi_i^\top} \underbrace{\left[ {\begin{array}{*{20}{c}}
				b\\
				f\\
				g
		\end{array}} \right]}_{\theta} + \underbrace{h_{(1)}^{\top}{{\rm{e}}^{Wt}}{x_i}(0)}_{\varepsilon_i}. 
\end{equation*}

\subsubsection*{\textbf{Numerical Example of Application 2}}
Let $n_F=3$, $N=5$, $b =\textnormal{col}(1, -5, 9)$, $f =\textnormal{col}({-2.5}, {-11}, {-5})$, $g = \textnormal{col}(0, 0, 1)$,
\begin{equation*}
	{c_{ij}} = \left\{ {\begin{array}{*{20}{c}}
			{1,}&{ij \in \left\{ {12,23,34,45,51} \right\}};\\
			{0,}&{\text{otherwise}};
	\end{array}} \right.
\end{equation*}
and choose $\setlength\arraycolsep{3 pt} w = \textnormal{col}(
			{-4}, {-9.25}, {-6.25})$, ${{\hat \theta }_{i0}} = (6-i){\mathbf{1}_{9 \times 1}}$, $\alpha_i = \gamma_i = {\eta _{id}} = i$, $\beta_i =1$, $\delta_i =1$, $\kappa_i = 1$ and ${\eta _{iu}} = 6-i$, $\forall i$. 
\begin{figure}[htbp]
\begin{center}
\tikzstyle{pinstyle} = [pin edge={to-,dashed,very thick,red}]
\begin{tikzpicture}[transform shape]
    \centering%
    \node (3) [circle,draw=blue!20, fill=blue!60, very thick, minimum size=7mm] {\textbf{3}};
   % \node (6) [circle,right=of 4, draw=blue!80, fill=blue!20, very thick, minimum size=7mm] {\textbf{6}};
     \node (4) [circle,  left =of 3, draw=blue!20, fill=blue!60, very thick, minimum size=7mm] {\textbf{4}};
     \node (5) [circle, above=of 4, draw=blue!20, fill=blue!60, very thick, minimum size=7mm] {\textbf{5}};
   \node (1) [circle, above=of 3, draw=blue!20, fill=blue!60, very thick, minimum size=7mm] {\textbf{1}};
   \node (2) [circle, above right= 0.325cm and 1cm of 3, draw=blue!20, fill=blue!60, very thick, minimum size=7mm] {\textbf{2}};
   % \draw[ very  thick,->,  right] (0) edge (1);
    \draw[ very  thick,->,  left] (1) --node[pos=0.75,left]{\color{blue}\small}
    (2);
    \draw[ very  thick,->,  left] (1) --node[pos=0.45,below]{\color{blue}\small}
    (5);
    \draw[ very  thick,->,  left] (5) --node[pos=0.45,right]{\color{blue}\small}
    (4);
    \draw[ very  thick,->,   left] (2) --node[pos=0.225,left]{\color{blue}\small}
    (3);
    \draw[ very  thick,->,  left] (4) --node[pos=0.45,above]{\color{blue}\small} (3);
    \draw[ very  thick,->, left] (3) --node[pos=0.45,left]{\color{blue}\small} (1);
   % \draw[ very  thick,-,  bend right] (4) edge (6);
    %\draw[ very  thick,->,  bend  right] (school3) edge (school2);
    % \draw[ very  thick,<-, bend left] (home) edge (school3);
\end{tikzpicture}
\end{center}%\vspace{-0.5cm}
	%\centerline{\includegraphics[width=0.43\columnwidth]{FIG/communication.png}}
	\caption{Communication graph.}
	\label{CG}
\end{figure}
Suppose the parameter estimators communicate in a distributed manner as shown in Fig.~\ref{CG}, where the edge weights are all equal to $1$. Take $u_1 = 10\sin(t+1)$, $u_2 = 10\sin(3t+3)$, $u_3 = 10\sin(5t+4)$, $u_4 = 10\sin(3t+3)$, $u_5 = 10\sin(2t+2)$ to obtain the simulation results shown in Fig.~\ref{alle_noisefree}. As in the first example, different white Gaussian noise with unit variance is added to each $y_i$, which leads to the results shown in Fig.~\ref{alle}. 
\begin{figure}[htbp]\centering
	\centerline{\includegraphics[width=0.5\textwidth]{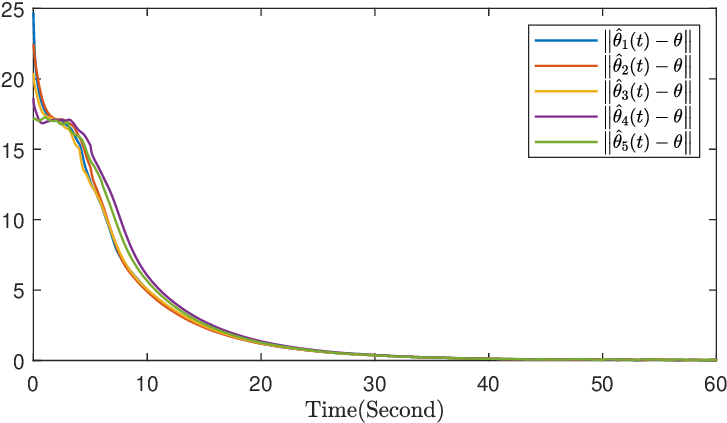}}
	\caption{Distributed parameter learning error at each estimator.}
	\label{alle_noisefree}
\quad
	\centerline{\includegraphics[width=0.5\textwidth]{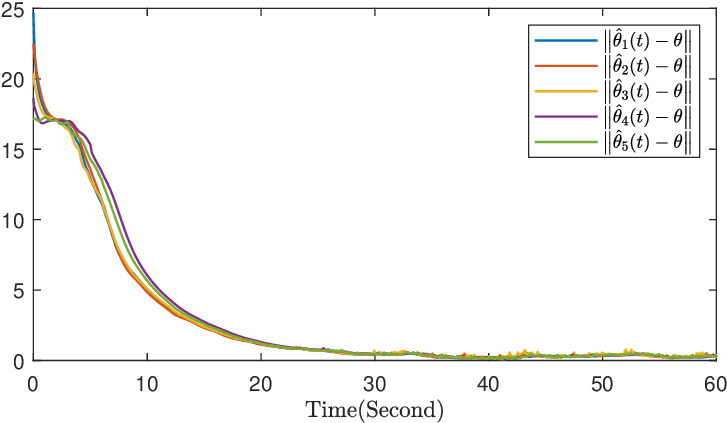}}
	\caption{Distributed parameter learning error at each estimator in the presence of white noise.}
	\label{alle}
\end{figure}
The simulation results show that different estimators can cooperate with each other to estimate all unknown parameters correctly, even though the information collected by any one of the estimators is inadequate for full parameter estimation. 
{
\begin{rem}
In fact, setting $g = 0_{3\times1}$ yields simulation results similar to Figs.~\ref{alle_noisefree} and \ref{alle}. This implies that the estimation of the full parameters is essentially achieved by the consensus protocol, not by the excitation contained in the interconnected term in \eqref{NSISO_state}.
\end{rem}
}

\section{Conclusion} \label{conclution}
A natural extension of the PE condition leads to a PPE condition which induces the definition of {PE} and non-{PE} subspaces.
Despite no prior knowledge of the two subspaces, the proposed algorithm can adaptively distinguish one from the other, and generate an optimal parameter estimate with an exponential rate of convergence. 
Based on that, a distributed parameter learning algorithm is developed, which enables a group of local estimators to cooperate with each other.
The cooperation is established by neighboring communications among local estimators, contributing to a parameter learning task that no local estimator can accomplish alone.
%impossible to be accomplished by a local estimator itself.
%which would be impossible for a single local estimator due to the presence of only PPE and the unknown order of the large-scale system.
%
The proposed algorithms are helpful in addressing a system identification problem. The simulation results show that the dynamics of the system can be learned from local measurements, none of which satisfies the PE condition.
%achieving the learning of system dynamics from measured data even in the absence of persistent excitation.
%
{ One of the future research direction is the development of a finite-time partial persistent excitation condition.
%in the non-PE subspace, as well as the case where the PPE condition holds not from some initial time point, but as time goes to infinity.
Another meaningful direction is to consider more general model structures that are nonlinear in the parameters.
A linear regression model is only capable of capturing some simple models used in dynamic system identification, such as the finite impulse response (FIR) model and autoregressive with exogenous input (ARX) model. 
Although the ARX model is by far the most widely applied linear dynamic model, it may produce biased and non-consistent estimation in practice \cite{Nelles2023}. In addition, discrete-time versions of the proposed algorithms remain to be developed.
}

\section{Appendix}\label{apendix}
\subsection{Proof of Lemma~\ref{lemma3}}\label{apendix-lemma3}
\begin{proof}
    Consider the following fictitious system:
\begin{equation}\label{lem_sys2}
	{\dot x^*} = {\varUpsilon ^*}{x^*}  + {u^*} ,
\end{equation}
where ${x^*}(0) = 0$. The state difference $\tilde x = x - {x^*}$ between systems \eqref{lem_sys1} and \eqref{lem_sys2} is governed by the dynamics
\begin{align}\label{lem_til_dyn}
	%\begin{split}
		\dot {\tilde x} &= \varUpsilon (t)x  - {\varUpsilon ^*}{x^*}  + u  - {u^*} \nonumber\\
		& = {\varUpsilon ^*}\tilde x  + \left( {\varUpsilon (t) - {\varUpsilon ^*}} \right)\tilde x + \left( {\varUpsilon (t) - {\varUpsilon ^*}} \right){x^*}  + u  - {u^*} .
	%\end{split}
\end{align}
For any ${\upsilon _0} < \upsilon $, ${\varUpsilon ^*} + \frac{{{\upsilon _0} + \upsilon }}{2}I$ is a stable matrix. In other words, there exists a positive definite matrix $M$ satisfying
\begin{align*}
	M\left( {{\varUpsilon ^*} + (({{{\upsilon _0} + \upsilon }})/{2})I} \right) + {\left( {{\varUpsilon ^*} + (({{{\upsilon _0} + \upsilon }})/{2})I} \right)^{\top}}M < 0.
\end{align*}
Now choose the Lyapunov candidate $\tilde V = {\tilde x^{\top}}M\tilde x$, whose time derivative along the trajectory of \eqref{lem_til_dyn} satisfies
\begin{align*}
	%\begin{split}
		{\dot {\tilde V}}=&\ 2{{\tilde x}^{\top}}M{\varUpsilon ^*}\tilde x + 2{{\tilde x}^{\top}}M\left( {\varUpsilon  - {\varUpsilon ^*}} \right)\tilde x + 2{{\tilde x}^{\top}}M\left( {\varUpsilon  - {\varUpsilon ^*}} \right){x^*} \\
		& + 2{{\tilde x}^{\top}}M\left( {u - {u^*}} \right)\\
		\le&  - \left( {{\upsilon _0} + \upsilon } \right){{\tilde x}^{\top}}M\tilde x + 2{\rho _a}{{\rm{e}}^{ - {\rho _b}t}}\left\| M \right\|{\left\| {\tilde x} \right\|^2}\\
		& + 2\left( {{\rho _a}{{\rm{e}}^{ - {\rho _b}t}}x_m^* + {\rho _c}{{\rm{e}}^{ - {\rho _d}t}}} \right)\left\| {{M^{\frac{1}{2}}}} \right\|\left\| {{M^{\frac{1}{2}}}\tilde x} \right\| \\
		\le& \big( { - {\upsilon _0} - \upsilon  + 2{\rho _a}{{\rm{e}}^{ - {\rho _b}t}}\big\| M \big\|\big\| {{M^{ - 1}}} \big\|} \big){{\tilde x}^{\top}}M\tilde x \\
		& + \frac{{\upsilon  - {\upsilon _0}}}{2}{{\tilde x}^{\top}}M\tilde x + \frac{2{\left( {{\rho _a}x_m^* + {\rho _c}} \right)^2}}{{\upsilon  - {\upsilon _0}}}{{\rm{e}}^{ - 2\rho t}}\left\| M \right\|,
%	\end{split}
\end{align*}
where $x_m^* = {\sup _{t \ge 0}}\left\| {{x^*}(t)} \right\|$ and $\rho=\min \left\{ {{\rho _b},{\rho _d}} \right\}$. Note that there exists a finite time $t_0$ such that $$4{\rho _a}{{\rm{e}}^{ - {\rho _b}{t}}}\big\| M \big\|\big\| {{M^{ - 1}}} \big\| \le \upsilon  - {\upsilon _0}$$ for all $t \ge {t_0}$. Hence, after time $t_0$, ${\tilde V}$ satisfies
\begin{equation}\label{lem_int_V}
	\tilde V(t) \le {{\rm{e}}^{ - 2{\upsilon _0}(t-t_0)}}\tilde V({t_0}) + \frac{{2{{\left( {{\rho _a}x_m^* + {\rho _c}} \right)}^2}}}{{\upsilon  - {\upsilon _0}}}\left\| M \right\| \varsigma(t), 
\end{equation}
where $\varsigma(t)=\int_{{t_0}}^t {{{\rm{e}}^{ - 2{\upsilon _0}(t - \tau )}}{{\rm{e}}^{ - 2\rho \tau }}{\rm{d}}\tau }$.\\
\textbf{For the case:} $\rho  < {\upsilon _0} < \upsilon$,
\begin{equation} \label{lem_int_V1}
	\varsigma(t)={{\rm{e}}^{ - 2\rho t}}\int_{{t_0}}^t {{{\rm{e}}^{ - 2({\upsilon _0} - \rho )(t - \tau )}}{\rm{d}}\tau } . 
\end{equation}
\textbf{For the case:} $0 < {\upsilon _0} < \rho$,
\begin{equation} \label{lem_int_V2}
	\varsigma(t)={{\rm{e}}^{ - 2{\upsilon _0}t}}\int_{{t_0}}^t {{{\rm{e}}^{ - 2(\rho  - {\upsilon _0})\tau }}{\rm{d}}\tau }. 
\end{equation}
\textbf{For the case:} ${\upsilon _0} = \rho$,
\begin{align} \label{lem_int_V3}
	\varsigma(t) &\le \int_{{t_0}}^t {{{\rm{e}}^{ - 2{\rho _0}(t - \tau )}}{{\rm{e}}^{ - 2\rho \tau }}{\rm{d}}\tau }\nonumber \\
    &= {{\rm{e}}^{ - 2{\rho _0}t}} \! \int_{{t_0}}^t {{{\rm{e}}^{-2({\rho - \rho _0} )\tau }}{\rm{d}}\tau }, 
\end{align}
for any ${\rho _0}$ satisfying $0 < {\rho _0} < \rho $.
Combining \eqref{lem_int_V}, \eqref{lem_int_V1}, \eqref{lem_int_V2}, and \eqref{lem_int_V3} yields that $\left\| {\tilde x}(t)\right\|$ exponentially converges to zero at a decay rate no slower than $\min \left\{ {{\upsilon _0},{\rho _0}} \right\}$. If, in addition, ${u^*}(t)$ vanishes, then system \eqref{lem_sys2} can be analyzed in the same way system \eqref{lem_til_dyn} is analyzed. Taking any ${\rho _{00}}$ satisfying $0 < {\rho _{00}} < {\rho _g}$, it can be proven that $\left\| {{x^*}}(t) \right\|$ exponentially converges to zero at a decay rate no slower than $\min \left\{ {{\upsilon _0},{\rho _{00}}} \right\}$. Given that $\left\| x \right\| \le \left\| {\tilde x} \right\| + \left\| {{x^*}} \right\| $, it is concluded that $\left\| x(t) \right\|$ exponentially converges to zero at a decay rate no slower than $\min \left\{ {{\upsilon _0},{\rho _0},{\rho _{00}}} \right\}$, which completes the proof.
\end{proof}

\subsection{Proof of Theorem~\ref{Theorem1}}\label{proofthe1}
%The property of continuous differentiability is obvious since the integral on the right-hand side of \eqref{P} is continuous in $t$. The rest of the proof is done in the following five steps. 
\begin{proof}
\textbf{Step 1:} Prove $0 \le {P(t)} \le I$. Rewrite \eqref{P} as 
\begin{equation}\label{rewriteP}
	\frac{{\rm{d}}\left( {P(t) - I} \right)}{{{\rm{d}}t}} =  - \gamma \left( {P(t) - I} \right) - {\gamma ^2}\bar P(t),
\end{equation}
where $$\bar P(t) = \int_0^t {{{\rm{e}}^{ - \gamma (t - \tau )}}{{\bar N}_u}(\tau )\bar N_u^{\top}(\tau ){\rm{d}}\tau }.$$ Then the solution to \eqref{rewriteP} is 
\begin{equation}\label{soluP}
	P(t) - I = {{\rm{e}}^{ - \gamma t}}\left( {P(0) - I} \right) - {\gamma ^2}\int_0^t {{{\rm{e}}^{ - \gamma (t - \tau )}}\bar P(\tau ){\rm{d}}\tau }.
\end{equation}
It follows from $P(0)=0$ and $\bar P(\tau) \ge 0$ that $P(t) - I \le 0$. Meanwhile, it can be evaluated from \eqref{P} that
\begin{align*}
%	\begin{split}
		\dot P(t) \ge & - \gamma P(t) + (1-{\rm{e}}^{ - \gamma t})\gamma I - {\gamma ^2} \bar P(t)\nonumber\\
		=& - \gamma P(t) + {\gamma ^2} \underbrace{\int_0^t {{{\rm{e}}^{ - \gamma (t - \tau )}}\left(I-{{\bar N}_u}(\tau )\bar N_u^{\top}(\tau ) \right){\rm{d}}\tau }}_{\tilde P(t)},
	%\end{split}
\end{align*}
%where $\tilde P(t) = \int_0^t {{{\rm{e}}^{ - \gamma (t - \tau )}}\left(I-{{\bar N}_u}(\tau )\bar N_u^{\top}(\tau ) \right){\rm{d}}\tau }$. 
%
{where ${{\bar N}_u}\bar N_u^{\top}$ is an orthogonal projection matrix obtained from \eqref{N}, and so $I-{{\bar N}_u}\bar N_u^{\top}\ge 0$.} Consequently, 
\begin{equation*}
	P(t) \ge {{\rm{e}}^{ - \gamma t}} P(0) + {\gamma ^2}\int_0^t {{{\rm{e}}^{ - \gamma (t - \tau )}}\tilde P(\tau ){\rm{d}}\tau }.
\end{equation*}
It follows from $P(0)=0$ and $\tilde P(\tau) \ge 0$ that $P(t) \ge 0$.

\textbf{{Step 2}:} Prove ${\phi^{\top}(t)}{N_{u}} = 0,\ \forall t\ge 0$. Suppose there exist time ${t_u}$ and a column vector $v_u$ in $N_{u}$ such that ${\phi }^{\top}({t_u}){v_u} \ne 0$, then there should be $$v_u^{\top}\phi ({t_u}){\phi }^{\top}({t_u}){v_u} > 0.$$ It combines with the facts that $\phi$ is continuous and $$v_u^{\top}\phi (\tau ){\phi }^{\top}(\tau ){v_u} \ge 0$$ to give $$v_u^{\top}\left(\int_{{t_u}}^{{t_u} + T} {\phi (\tau ){\phi }^{\top}(\tau ){\rm{d}}\tau } \right){v_u} > 0,$$ which contradicts \eqref{PPE3}.

\textbf{{Step 3}:} Prove that ${\bar N_u}\bar N_u^{\top}$ equals $I - {N_d}N_d^{\top}$ in finite time. The solution to \eqref{Q} is  
\begin{equation}\label{SoluQ}
	Q(t) = {{\rm{e}}^{ - \beta t}}Q(0) + \int_0^t {{{\rm{e}}^{ - \beta (t - \tau )}}\phi (\tau ){\phi }^{\top}(\tau ){\rm{d}}\tau }. 
\end{equation}
In view of $Q(0)=0$, the following inequalities hold  %for %time $t \ge T$:
%\begin{equation*}
	\begin{align*}
		Q(t) &\ge \int_{t - T}^t {{{\rm{e}}^{ - \beta (t - \tau )}}\phi (\tau ){\phi }^{\top}(\tau ){\rm{d}}\tau } & \\
		&\ge {{\rm{e}}^{ - \beta T}}\int_{t - T}^t {\phi (\tau ){\phi }^{\top}(\tau ){\rm{d}}\tau }, &\forall t \ge T.
	\end{align*}
%\end{equation*}
This leads to
{
\begin{align*}{N^{\top}_{d}}Q(t)N_{d} \ge {{\rm{e}}^{ - \beta T}} k_a I_{n-q},&& \forall t \ge T,
\end{align*}
according to \eqref{PPE1}. Meanwhile, from \eqref{SoluQ}, $Q(0)=0$, and the result of Step 2, it follows that
\begin{align*}
	{N_{u}^{\top}}Q(t)N_{u} = 0, & & \forall t \ge 0.
\end{align*}
Then based on the fact that $\text{rank}({N_{d}})=n-q$, $\text{rank}({N_{u}})=q$, and $N_{d}^\top N_u=0$, an observation is that
\begin{align*}
	{\rm{Ker}}Q(t) = {\rm{Im}}{N_{u}}, & &\forall t \ge T.
\end{align*}
Therefore, applying \eqref{N} gives
\begin{align*}
	{\bar{N}_u(t)}{{\bar{N}}_u}^{\top}(t)= {N_u}{N_u^{\top}} = I - {N_d}{N_d^{\top}},& & \forall t \ge \left\lceil {T/\delta } \right\rceil \delta, 
\end{align*}
where $\left\lceil \cdot \right\rceil$ is the ceiling function. For simplicity, $t \ge T + \delta$ is considered in what follows, instead of $t \ge \left\lceil {T/\delta } \right\rceil \delta$.}
% This leads to
% 	\begin{align*}{N_{ad}}^{\top}Q(t)N_{ad} \ge {{\rm{e}}^{ - \beta T}} \varSigma_{ad},&& \forall t \ge T,\end{align*}
% according to \eqref{PPE1}. Meanwhile, from \eqref{SoluQ}, $Q(0)=0$, and the result of Step 3, it follows that
% \begin{align*}
% 	{N_{bu}}^{\top}Q(t)N_{bu} = 0, & & \forall t \ge 0.
% \end{align*}
% Then by combining the fact that ${N_{ad}}$ and ${N_{bu}}$ are of rank $n-q$ and $q$ respectively, a key observation is that
% \begin{align*}
% 	{\rm{Ker}}Q(t) = {\rm{Im}}{N_{bu}}, & &\forall t \ge T.
% \end{align*}
% It implies
% \begin{align*}
% 	{\hat N_u}(t){\hat{N}_u}^{\top}(t) = {N_{bu}}{N_{bu}}^{\top} = I - {N_{ad}}{N_{ad}}^{\top},& & \forall t \ge T,
% \end{align*}
% where the second equality is established from ${N_{ad}}^{\top}{N_{ad}} = I$, ${N_{bu}}^{\top}N_{bu} = I$, and \eqref{PPE2}. Therefore, applying \eqref{N} gives
% \begin{align*}
% 	{\bar{N}_u(t)}{{\bar{N}}_u}^{\top}(t) = I - {N_d}{N_d^{\top}},& & \forall t \ge T + \delta. 
% \end{align*}

\textbf{{Step 4}:} Complete the proof. Continue the derivation in Step~1 by substituting the result of {Step 3} into $\bar P$:
\begin{align*}
	\bar P(t) = \bar P_a(t) + \bar P_b(t),& & \forall t \ge T + \delta,
\end{align*}
where ${{\bar P}_a}(t) = \int_0^{T + \delta } {{{\rm{e}}^{ - \gamma (t - s)}}{\bar{N}_u}(s){\bar{N}_u}^\top{(s)}{\rm{d}}s} $,
\begin{align*}
		{{\bar P}_b}(t) &= \int_{T + \delta }^t {{{\rm{e}}^{ - \gamma (t - s)}}\left( {I - {N_d}{N_d^\top} } \right){\rm{d}}s } \\
	&= \frac{1}{\gamma }\left( {1 - {{\rm{e}}^{ - \gamma (t - T - \delta )}}} \right)\left( {I - {N_d}{N_d^\top} } \right).
\end{align*}
Further calculations yield
\begin{align}
	\left\| {{{\bar P}_a}(\tau )} \right\| \le \int_0^{T + \delta } {{{\rm{e}}^{ - \gamma (\tau  - s)}}{\rm{d}}s}  =&\ \frac{{{{\rm{e}}^{ - \gamma \tau }}}}{\gamma }\left( {{{\rm{e}}^{\gamma (T + \delta )}} - 1} \right),\nonumber\\
% \end{equation*}
% \begin{equation} 
	%\begin{split}
		\left\| {{\gamma ^2}\!\!\!\int_0^t \!{{{\rm{e}}^{ - \gamma (t - \tau )}}{{\bar P}_a}(\tau ){\rm{d}}\tau } } \right\| %&\le {\gamma ^2}\!\!\!\int_0^t \!{{{\rm{e}}^{ - \gamma (t - \tau )}}\left\| {{{\bar P}_a}(\tau )} \right\|{\rm{d}}\tau }  \nonumber\\
		\le&\ t{{\rm{e}}^{ - \gamma t}}\gamma \left( {{{\rm{e}}^{\gamma (T + \delta )}} - 1} \right),\label{gamma2Pa}\\
	%\end{split}
%\end{align}
%and 
%\begin{equation}
	%\begin{align}
		{\gamma ^2}\int_0^t {{{\rm{e}}^{ - \gamma (t - \tau )}}\bar P_b(\tau ){\rm{d}}\tau }
		=&\ \big( {1  - t{{\rm{e}}^{ - \gamma t}}\gamma {{\rm{e}}^{\gamma (T + \delta )}}} \nonumber\\
        &- {{\rm{e}}^{ - \gamma t}}\big)\left( {I - {N_d}{N_d^\top} } \right).\label{int_Pb}
	\end{align} 
%\end{equation}
Then, it follows from \eqref{soluP} and \eqref{int_Pb} that
%\begin{equation}
	\begin{align}
		P(t) - {N_d}N_d^{\top} &- {{\rm{e}}^{ - \gamma t}}\left( {P(0) - I} \right)\nonumber\\
		= &- {\gamma ^2}\int_0^t {{{\rm{e}}^{ - \gamma (t - \tau )}}\bar P(\tau ){\rm{d}}\tau }  + I - {N_d}{N_d^\top} \nonumber\\
		= &-{\gamma ^2}\int_0^t {{{\rm{e}}^{ - \gamma (t - \tau )}}{{\bar P}_a}(\tau ){\rm{d}}\tau } \nonumber\\
		& + \left( {{{\rm{e}}^{ - \gamma t}} + t{{\rm{e}}^{ - \gamma t}} \gamma {{\rm{e}}^{\gamma (T + \delta )}} } \right)\left( {I - {N_d}{N_d^\top}} \right).\label{P-NN}
	\end{align}
%\end{equation}
By combining \eqref{gamma2Pa} and \eqref{P-NN}, one can arrive at
\begin{equation*}
	\left\| {P(t) - {N_d}N_d^{\top}} \right\| \le 2{{\rm{e}}^{ - \gamma t}} + t{{\rm{e}}^{ - \gamma t}}\gamma \left( {2{{\rm{e}}^{\gamma (T + \delta )}} - 1} \right). 
\end{equation*}
Note that for any positive $\bar \gamma $ less than $\gamma $,
\begin{equation*}
	\begin{split}
		t{{\rm{e}}^{ - \gamma t}}
		= \int_0^t {{{\rm{e}}^{ - \gamma (t - \tau )}}{{\rm{e}}^{ - \gamma \tau }}{\rm{d}}\tau }% \\
        &\le {{\rm{e}}^{ - \bar \gamma t}}\int_0^t {{{\rm{e}}^{ - (\gamma  - \bar \gamma )(t - \tau )}}{\rm{d}}\tau } \\
		&= {{\rm{e}}^{ - \bar \gamma t}}\frac{{1 - {{\rm{e}}^{ - (\gamma  - \bar \gamma )t}}}}{{\gamma  - \bar \gamma }} 
        \le \frac{{{{\rm{e}}^{ - \bar \gamma t}}}}{{\gamma  - \bar \gamma }},
	\end{split}
\end{equation*}
which leads to
\begin{equation} \label{geTdelta}
	\left\| {P(t) - {N_d}{N_d^{\top}}} \right\| \le \frac{{2\gamma {{\rm{e}}^{\gamma (T + \delta )}} + \gamma  - 2\bar \gamma }}{{\gamma  - \bar \gamma }}{{\rm{e}}^{ - \bar \gamma t}},\ \forall t \ge T + \delta. 
\end{equation}
For the case $0\le t<T + \delta$, it can be obtained from \eqref{soluP} that
%\begin{equation}
	\begin{align}
		\left\| {P(t) - {N_d}{N_d^{\top}}} \right\| &\le {{\rm{e}}^{ - \gamma t}} \!+\! {\gamma ^2}\!\!\!\int_0^t \!{{{\rm{e}}^{ - \gamma (t - \tau )}}\!\left\| {\bar P(\tau )} \right\|\!{\rm{d}}\tau } \! + \!1\nonumber\\
		&\le 2 - t{{\rm{e}}^{ - \gamma t}}\gamma  \le 2.\label{leTdelta}
	\end{align}
%\end{equation}
According to \eqref{geTdelta} and \eqref{leTdelta},
\begin{equation*}
	\left\| {P(t) - {N_d}{N_d^{\top}}} \right\| \le {\rho _a}{{\rm{e}}^{ - {\rho _b}t}},
\end{equation*}
where $\rho_a= \max \left\{ {2{{\rm{e}}^{\bar \gamma (T + \delta )}},\frac{{2\gamma {{\rm{e}}^{\gamma (T + \delta )}} + \gamma  - 2\bar \gamma }}{{\gamma  - \bar \gamma }}} \right\}$ and $\rho _b=\bar \gamma$, for any positive $\bar \gamma $ less than $\gamma $.
\end{proof}

\subsection{Proof of Theorem~\ref{Theorem2}}\label{proofthe2}
\begin{proof}%The proof is done in the following 5 steps.\\ %VaryingNumber
\textbf{Step 1:} Find the least squares solution. The least squares solution that minimizes $J$ can be obtained by solving
\begin{align}\label{par_der}
%	\begin{split}
		{\left. {\frac{{\partial J(\vartheta)}}{{\partial \vartheta }}} \right|_{\vartheta  = {\theta ^*}}}  =&  -   \int_0^t \! {{{\rm{e}}^{ - \beta (t - \tau )}}\left( {z(\tau ) - {\phi }^{\top}(\tau ){\theta ^*}(t)} \right)\phi (\tau ){\rm{d}}\tau } \nonumber\\
		& + \alpha{{\rm{e}}^{ - \beta t}}\big( {{\theta ^*}(t) - {{\hat \theta }_0}} \big) \equiv 0
	%\end{split}
\end{align}
for ${\theta ^*}(t)$. Recall that Step 2 in \ref{proofthe1} has proved ${\phi^{\top}(t)}{N_{u}} = 0$.
Pre-multiplying both sides of \eqref{par_der} by $N_u^{\top}$ gives $N_u^{\top}{\theta ^*}(t) = N_u^{\top}{\hat{\theta}_0}$, which is a necessary condition for the least squares solution ${\theta ^*}(t)$. In other words,
\begin{align} \label{nec_con2}
	{\theta ^*} = {N_d}{N_d^{\top}}{\theta ^*} + {N_u}{N_u^{\top}}{\hat \theta _0},
\end{align}
which is obtained using the fact ${N_d}{N_d^{\top}} + {N_u}{N_u^{\top}} = I$. Meanwhile, both sides of \eqref{par_der} are pre-multiplied by ${N_d^{\top}}$, and, upon substitution of  \eqref{nec_con2} into the resulting expression, with the help of identities $${\phi^{\top}(t)}{N_{u}} = 0\ \ \ \textnormal{and}\ \ \ {N_d}{N_d^{\top}} + {N_u}{N_u^{\top}} = I,$$ the following result can be obtained:
\begin{align*}
\varPsi(t) N_d^{\top}{\theta ^*}(t) = N_d^{\top}{\int_0^t {{{\rm{e}}^{ - \beta (t - \tau )}}z(\tau )\phi (\tau ){\rm{d}}\tau }  + \alpha{{\rm{e}}^{ - \beta t}}N_d^{\top}{{\hat \theta }_0}},
\end{align*}
where $$\varPsi(t) = N_d^{\top}\left(\int_0^t {{{\rm{e}}^{ - \beta (t - \tau )}}\phi (\tau ){\phi }^{\top}(\tau ){\rm{d}}\tau } \right){N_d} + \alpha{{\rm{e}}^{ - \beta t}}I.$$ 
\textbf{For time $t < T$}, there are $\varPsi  > \alpha{{\rm{e}}^{ - \beta T}}I > 0$. \\
\textbf{For time $t \ge T$}, there are
\begin{align*}
		\varPsi(t) \ge&\ N_d^{\top}\left(\int_{t - T}^t {{{\rm{e}}^{ - \beta (t - \tau )}}\phi (\tau ){\phi }^{\top}(\tau ){\rm{d}}\tau }\right) {N_d}\\
		\ge&\ {{\rm{e}}^{ - \beta T}}N_d^{\top}\left(\int_{t - T}^t {\phi (\tau ){\phi }^{\top}(\tau ){\rm{d}}\tau }\right) {N_d}.
\end{align*}
Then, it follows from \eqref{PPE1} that $\varPsi(t)$ is positive definite for all time, and so invertible for all time. Therefore, the least squares solution is given by \eqref{nec_con2} with 
\begin{equation} \label{Nd_theta}
	N_d^{\top}{\theta^*}(t)  = {\varPsi}^{-1}(t) N_d^{\top} \varphi(t),
\end{equation}
where $$\varphi(t)={\int_0^t \!\! {{{\rm{e}}^{ - \beta (t - \tau )}}z(\tau )\phi (\tau ){\rm{d}}\tau }  + \alpha{{\rm{e}}^{ - \beta t}}{{\hat \theta }_0}}.$$

\textbf{Step 2:} Rewrite the least squares solution. Given that the matrix ${N_d}$, and even the number of columns that it contains, are totally unknown, the above least squares solution cannot be used to derive an online algorithm. Instead, the solution needs to be rewritten into an appropriate form. Let
\begin{equation} \label{Psi_alp}
	{\varPsi _\kappa }(t) = \left[ {\begin{array}{*{20}{c}}
			{{N_d}}&{{N_u}}
	\end{array}} \right]\left[ {\begin{array}{*{20}{c}}
			\varPsi(t) &0\\
			0&{\kappa I_q}
	\end{array}} \right]\left[ {\begin{array}{*{20}{c}}
			{N_d^{\top}}\\
			{N_u^{\top}}
	\end{array}} \right],
\end{equation}
where $\kappa$ is a positive real constant. It can be checked that ${\varPsi_\kappa}(t)$ is invertible and
	$$\varPsi_\kappa^{ - 1}(t){N_d}{N_d^{\top}} = {N_d}{\varPsi^{ - 1}(t)}{N_d^{\top}},$$
by exploiting the facts ${N_d^{\top}}{N_d} = I$ and ${N_u^{\top}}{N_d} = 0$. Then it follows from \eqref{varphi} and \eqref{Nd_theta} that
\begin{equation} \label{NdNd_the_sta}
	{N_d}{N_d^{\top}}{\theta ^*}(t) = \varPsi_\kappa^{ - 1}(t) {N_d}{N_d^{\top}}\varphi(t),
\end{equation}
where, according to \eqref{Psi_alp},
\begin{align} \label{psi_alp}
	%\begin{split}
		{\varPsi _\kappa }(t) =&\ {N_d}\varPsi(t) {N_d^{\top}} + \kappa {N_u}{N_u^{\top}}\nonumber\\
		 =&\ {N_d}{N_d^{\top}} \left({\int_0^t {{{\rm{e}}^{ - \beta (t - \tau )}}\phi (\tau ){\phi }^{\top}(\tau ){\rm{d}}\tau }}\right) {N_d}{N_d^{\top}}\nonumber\\
		& + \alpha{\rm{e}}^{ - \beta t}{N_d}{N_d^{\top}} + \kappa \left( {I - {N_d}{N_d^{\top}}} \right).
%	\end{split}
\end{align}
It must be noted that the matrices ${N_d}$ and ${N_d^{\top}}$ only appear in pairs in the above form. Although ${N_d}$ alone has an unknown number of columns $n-q$, ${N_d}{N_d^{\top}}$ has a known fixed size $n \times n$.

\textbf{Step 3:} Prove the invertibility of $\varOmega$. Let
\begin{equation}\label{hat_psi_alp}
	{\hat \varPsi _\kappa }(t) = P(t)Q(t)P(t)+\alpha{{\rm{e}}^{ - \beta t}}P(t)+ \kappa \left( {I - P}(t) \right).
\end{equation}
From \eqref{Q}, \eqref{psi_alp}, and \eqref{hat_psi_alp}, it follows that
%\begin{equation}
	\begin{align}
		{{{\hat \varPsi }_\kappa } - {\varPsi _\kappa }}=&\ {\left( {P - {N_d}{N_d^{\top}}} \right)QP}+\alpha{{\rm{e}}^{ - \beta t}}\left( {P - {N_d}{N_d^{\top}}} \right)\nonumber\\
		&\ { + {N_d}{N_d^{\top}}Q\left( {P - {N_d}{N_d^{\top}}} \right) }- \kappa \left( {P - {N_d}{N_d^{\top}}} \right).\label{psi_dif}
	\end{align}
%\end{equation}
Given that $Q$, $P$, and ${N_d}{N_d^{\top}}$ are all bounded, it is clear from Theorem~\ref{Theorem1} that $${\lim _{t \to \infty }}({{\hat \varPsi }_\kappa }(t)-{\varPsi _\kappa }(t)) = 0.$$
Since the roots of a polynomial vary continuously as a function of the coefficients \cite{Harris1987}, the eigenvalues of ${\hat \varPsi _\kappa }$ vary continuously and converge to the eigenvalues of ${\varPsi _\kappa }$ as time goes to infinity. 
Note that the matrix ${\varPsi _\kappa }$ is positive definite. Therefore, there exists a time $t_\kappa$ such that all eigenvalues of ${\hat \varPsi_\kappa }$ remain in the half-plane ${\rm{Re}}(s) > \frac{1}{2}{\lambda _{\min }}({\varPsi _\kappa })$ after time $t_\kappa$. This implies the invertibility of ${\hat \varPsi _\kappa }$ for all time $t>t_\kappa$. 

For $t\le t_\kappa$, the invertibility of ${\hat \varPsi_\kappa }$ is proved as follows. Recall from Theorem~\ref{Theorem1} that both $P$ and $I-P$ are positive semidefinite matrices. For the case $P \ne 0$ and $I - P \ne 0$, there are full-rank factorizations $$P = {P_d}{P_d^{\top}}\quad \textnormal{and} \quad I - P = {P_u}{P_u^{\top}},$$ with $P_d$ and $P_u$ each having full column rank. Then ${\hat \varPsi_\kappa }$ can be rewritten as
\begin{equation*}
	\begin{split}
		{{\hat \varPsi }_\kappa } &= {P_d}{P_d^{\top}}Q{P_d}{P_d^{\top}}+ \alpha{{\rm{e}}^{ - \beta t}}{P_d}{P_d^{\top}} + \kappa {P_u}{P_u^{\top}} \\
		& = \left[ {\begin{array}{*{20}{c}}
				{{P_d}}&{{P_u}}
		\end{array}} \right]\left[ {\begin{array}{*{20}{c}}
				{P_d^{\top}Q{P_d}}+ \alpha{{\rm{e}}^{ - \beta t}}I&0\\
				0&{\kappa I}
		\end{array}} \right]\left[ {\begin{array}{*{20}{c}}
				{{P_d^{\top}}}\\
				{{P_u^{\top}}}
		\end{array}} \right].
	\end{split}
\end{equation*}
With $Q(t) \ge 0$, it is not difficult to verify that $${{P_d^{\top}}(t)Q(t){P_d}}(t)+ \alpha{{\rm{e}}^{ - \beta t}}I>0$$ for time $t\le t_\kappa$.
Meanwhile, it is observed that the matrix $\left[ {\begin{array}{*{20}{c}}
		{{P_d}}&{{P_u}}
\end{array}} \right]$ has full row rank because otherwise it contradicts the fact ${P_d}P_d^{\top} + {P_u}P_u^{\top} = I$.
Then ${\hat \varPsi }_\kappa$ must be positive definite, and therefore invertible for time $t\le t_\kappa$.
The proof for the case $P = 0$ or $I - P = 0$ is straightforward.
From the developments above, it is safe to say that ${\hat \varPsi _\kappa }$ is invertible all the time. Taking the time derivative of $\hat \varPsi _\kappa^{ - 1}$ gives
\begin{align}\label{dyn_psi_inv}
	\dot {\hat \varPsi}_\kappa ^{ - 1} =  - \hat \varPsi _\kappa ^{ - 1}{\dot {\hat \varPsi}_\kappa }\hat \varPsi _\kappa ^{ - 1} 
    =  - \hat \varPsi _\kappa ^{ - 1}\left(R-\beta \hat \varPsi _\kappa\right)\hat \varPsi_\kappa ^{ - 1},
\end{align}
where the second equality can be checked from \eqref{Q}, \eqref{R} and \eqref{hat_psi_alp}. Hence, $\hat \varPsi _\kappa ^{ - 1}$ evolves according to the dynamics \eqref{dyn_psi_inv} with $\hat \varPsi _\kappa ^{ - 1}(0) = {\kappa ^{ - 1}}I$. Due to the existence and uniqueness of a solution to differential equations, comparing \eqref{omega} and \eqref{dyn_psi_inv} yields $\hat \varPsi _\kappa ^{ - 1} = \varOmega $, and therefore $\varOmega $ is invertible.

\textbf{Step 4:} Prove $\big\| {\hat \theta (t) - {\theta ^*}(t)} \big\| \le {\rho _a}{{\rm{e}}^{ - {\rho _b}t}}$. It can be obtained from \eqref{thetad} and \eqref{omega} that
\begin{align} \label{ome_inv_the}
	%\begin{split}
		\frac{{\rm{d}}\left( {{\varOmega ^{ - 1}}{{\hat \theta }_d}} \right)}{{{\rm{d}}t}} =&  - {\varOmega ^{ - 1}}\dot \varOmega {\varOmega ^{ - 1}}{{\hat \theta }_d} + {\varOmega ^{ - 1}}{{\dot {\hat \theta} }_d} \nonumber\\
		=&  - \beta {\varOmega ^{ - 1}}{{\hat \theta }_d} + zP\phi  + \dot P\varphi.
	%\end{split}
\end{align}
The solution to \eqref{ome_inv_the} is
\begin{equation}\label{dyn_ome_the}
	{\varOmega }(t)^{ - 1}{{\hat \theta }_d}(t) = {{\rm{e}}^{ - \beta t}}{\varOmega ^{ - 1}}(0){{\hat \theta }_d}(0)+ \bar \varphi(t),
\end{equation}
where $$\bar \varphi(t) = \int_0^t {{{\rm{e}}^{ - \beta (t - \tau )}}\left( {z(\tau )P(\tau )\phi (\tau ) + \dot P(\tau )\varphi (\tau )} \right){\rm{d}}\tau }.$$ With \eqref{varphi}, it is not difficult to verify that
\begin{equation*}
	\begin{split}
		\frac{{\rm{d}}\big( {{{\rm{e}}^{ - \beta (t - \tau )}}P(\tau )\varphi (\tau )} \big)}{{{\rm{d}}\tau }}
		 =&\ {{\rm{e}}^{ - \beta (t - \tau )}}\left( {\dot P(\tau )\varphi (\tau ) + \beta P(\tau )\varphi (\tau )} \right)\\
		& + {{\rm{e}}^{ - \beta (t - \tau )}}P(\tau )\left( { - \beta \varphi (\tau ) + z(\tau )\phi (\tau )} \right)\\
		=&\ {{\rm{e}}^{ - \beta (t - \tau )}}\left( {\dot P(\tau )\varphi (\tau ) + z(\tau )P(\tau )\phi (\tau )} \right).
	\end{split}
\end{equation*}
Then a direct calculation gives $\bar \varphi (t) = P(t)\varphi (t)$, which, together with \eqref{dyn_ome_the} and ${{\hat \theta }_d}(0)=0$, leads to ${\hat \theta }_d(t) = \varOmega(t) P(t)\varphi(t)$. 
Now combining it with \eqref{thetau}, \eqref{theta}, \eqref{nec_con2}, \eqref{NdNd_the_sta}, and $\hat \varPsi _\kappa ^{ - 1} = \varOmega $ proved in Step 3, the following expression is obtained
\begin{equation} \label{thehat-thestar}
	\begin{split}
		\hat \theta  - {\theta ^*} &= {{\hat \theta }_d} - {N_d}{N_d^{\top}}{\theta ^*} + {{\hat \theta }_u} - {N_u}{N_u^{\top}}{\theta ^*}\\
		&= \big( {\hat \varPsi _\kappa ^{ - 1}P - \varPsi _\kappa ^{ - 1}{N_d}{N_d^{\top}}} \big)\varphi  + \big( {I - P - {N_u}{N_u^{\top}}} \big){{\hat \theta }_0}.
	\end{split}
\end{equation}
Given that the measurement noise is bounded, the vector $\varphi$ generated by \eqref{varphi} is also bounded. Then from \eqref{psi_dif}, \eqref{thehat-thestar},
\begin{align*}
	%\begin{split}
		\big\| {\hat \varPsi _\kappa ^{ - 1}P - \varPsi _\kappa ^{ - 1}{N_d}N_d^{\top}} \big\| 
		\le&\ \big\| {\hat \varPsi _\kappa ^{ - 1}\big( {P - {N_d}N_d^{\top}} \big)} \big\| \\
       & + \big\| {\big( {\hat \varPsi _\kappa ^{ - 1} - \varPsi _\kappa ^{ - 1}} \big){N_d}N_d^{\top}} \big\|\\
		\le&\ \big\| {\hat \varPsi _\kappa ^{ - 1}} \big\|\big\| {P - {N_d}N_d^{\top}} \big\| + \big\| {\hat \varPsi _\kappa ^{ - 1} - \varPsi _\kappa ^{ - 1}} \big\|,\\
	%\end{split}
%
% \begin{equation*}
% 	\begin{split}
		\left\| {\hat \varPsi _\kappa ^{ - 1} - \varPsi _\kappa ^{ - 1}} \right\| =&\ \big\| {\hat \varPsi _\kappa ^{ - 1}\big( {{\varPsi _\kappa } - {{\hat \varPsi }_\kappa }} \big)\varPsi _\kappa ^{ - 1}} \big\| \\
		\le&\ \big\| {\hat \varPsi _\kappa ^{ - 1}} \big\|\big\| {\varPsi _\kappa ^{ - 1}} \big\|\big\| {{{\hat \varPsi }_\kappa } - {\varPsi _\kappa }} \big\|,\\
% 	\end{split}
% \end{equation*}
% \end{align*}
% and
% \begin{equation*}
\left\| {I - P - {N_u}N_u^{\top}} \right\| =& \left\| {P - {N_d}N_d^{\top}} \right\|,
\end{align*}
which lead to
\begin{equation*}
	\begin{split}
	  \big\| \hat \theta (t) -& {\theta ^*}(t) \big\| \le \left\| {P(t) - {N_d}N_d^ \top } \right\| \\
      \times& \left[ {\left( {1 + \left( {2{Q_m} + \left| {\alpha {{\rm{e}}^{ - \beta t}} - \kappa } \right|} \right)\varPsi _{\kappa m}^{ - 1}} \right)\hat \varPsi _{\kappa m}^{ - 1}{\varphi _m} + \left\| {{{\hat \theta }_0}} \right\|} \right],
	\end{split}
\end{equation*}
where 
\begin{align*}
{Q_m} &= \mathop {\max }\limits_{t \ge 0} \left\| {Q(t)} \right\|,\quad~\varPsi _{\kappa m}^{ - 1} = \mathop {\max }\limits_{t \ge 0} \left\| {\varPsi _\kappa ^{ - 1}(t)} \right\|,\\
\hat \varPsi _{\kappa m}^{ - 1} &= \mathop {\max }\limits_{t \ge 0} \left\| {\hat \varPsi _\kappa ^{ - 1}(t)} \right\|,\quad {\varphi _m} = \mathop {\max }\limits_{t \ge 0} \left\| {\varphi (t)} \right\|.
\end{align*}
 According to Theorem~\ref{Theorem1}, the exponential convergence of $\hat \theta(t)  - {\theta ^*}(t)$ follows. In particular, it is obtain that
\begin{equation} \label{inteme_thehat-thestar}
    \big\| {\hat \theta (t) - {\theta ^*}(t)} \big\| \le {\rho _a}{{\rm{e}}^{ - {\rho _b}t}}
\end{equation}
with 
\begin{align*} \rho_a=\, \bigg[ \bigg( 1 + \big( 2{Q_m} + & \alpha + \kappa  \big)\varPsi _{\kappa m}^{ - 1} \bigg)\hat \varPsi _{\kappa m}^{ - 1}{\varphi _m} + \left\| {{{\hat \theta }_0}} \right\| \bigg] \\&\cdot \max \left\{ {2{{\rm{e}}^{\bar \gamma (T + \delta )}},\frac{{2\gamma {{\rm{e}}^{\gamma (T + \delta )}} + \gamma  - 2\bar \gamma }}{{\gamma  - \bar \gamma }}} \right\}\end{align*} and $\rho _b=\bar \gamma$, for any positive $\bar \gamma $ less than $\gamma $.

\textbf{Step 5:} Prove $\big\| N_d^{\top}\big( {\hat \theta(t)  - \theta } \big) \big\| \le {\rho_c}{{\rm{e}}^{ - \frac{\beta}{2}t}}$.
Recall the expression for $\varPsi$ from Step 1:
\begin{equation*}
	\varPsi(t) = {N_d^{\top}}\left(\int_0^t {{{\rm{e}}^{ - \beta (t - \tau )}}\phi (\tau ){\phi }^{\top}(\tau ){\rm{d}}\tau }\right) {N_d} + \alpha{{\rm{e}}^{ - \beta t}}I.
\end{equation*}
Take the time derivative of both sides to yield
\begin{equation}\label{dyn_psi}
	\dot \varPsi =  - \beta \varPsi + {N_d^{\top}}\phi (t){\phi }^{\top}(t){N_d}.
\end{equation}
Now let $\tilde \theta _d^* = {N_d^{\top}}{\theta ^*} - {N_d^{\top}}\theta$. It follows from \eqref{varphi}, \eqref{Nd_theta}, \eqref{dyn_psi}, and \eqref{model} with $\varepsilon=0$ that
\begin{equation*}
	\begin{split}
		\dot {\tilde \theta}_d^* =&  - {\varPsi ^{ - 1}}\dot \varPsi {\varPsi ^{ - 1}}{N_d^{\top}}\varphi  + {\varPsi ^{ - 1}}N_d^{\top}\dot \varphi \\
        %=&\   \beta {\varPsi ^{ - 1}} \varPsi {\varPsi ^{ - 1}}{N_d^{\top}}\varphi- {\varPsi ^{ - 1}}{N_d^{\top}}\phi (t){\phi }^{\top}(t){N_d} {\varPsi ^{ - 1}}{N_d^{\top}}\varphi + {\varPsi ^{ - 1}}N_d^{\top}\dot \varphi \\
        =&\   \beta {\varPsi ^{ - 1}} {N_d^{\top}}\varphi- {\varPsi ^{ - 1}}{N_d^{\top}}\phi (t){\phi }^{\top}(t){N_d} {\varPsi ^{ - 1}}{N_d^{\top}}\varphi\\
        &- \beta{\varPsi ^{ - 1}}N_d^{\top}  \varphi  + {\varPsi ^{ - 1}}N_d^{\top}\phi{\phi }^{\top}(t)\theta \\
         =&  - {\varPsi ^{ - 1}}{N_d^{\top}}\phi (t){\phi }^{\top}(t){N_d} \underbrace{{\varPsi ^{ - 1}}{N_d^{\top}}\varphi}_{{N_d^{\top}}{\theta ^*} (\textnormal{see}~\eqref{Nd_theta})}  + {\varPsi ^{ - 1}}N_d^{\top}\phi{\phi }^{\top}(t)\theta \\
		%=&  - {\varPsi ^{ - 1}}{N_d^{\top}}\phi (t){\phi }^{\top}(t){N_d} {N_d^{\top}}{\theta ^*} + {\varPsi ^{ - 1}}{N_d^{\top}}\phi (t){\phi }^{\top}(t)\theta\\
          =&- {\varPsi ^{ - 1}}{N_d^{\top}}\phi (t){\phi }^{\top}(t){N_d}\underbrace{({N_d^{\top}}{\theta ^*}-{N_d^{\top}}\theta)}_{\tilde \theta _d^*}\\
         &+ {\varPsi ^{ - 1}}{N_d^{\top}}\phi (t){\phi }^{\top}(t){N_u}{N_u^{\top}}\theta.
	\end{split}
\end{equation*}
Then, by exploiting ${\phi^{\top}(t)}{N_{u}} = 0$ and ${N_d}{N_d^{\top}} + {N_u}{N_u^{\top}} = I$, it can be obtained that
\begin{equation} \label{dyn_til_the_sta}
	\dot {\tilde \theta}_d^* =  - {\varPsi ^{ - 1}}{N_d^{\top}}\phi (t){\phi }^{\top}(t){N_d}\tilde \theta _d^*.
\end{equation}
In order to prove the exponential convergence of ${\tilde \theta}_d^*$, choose a Lyapunov candidate
\begin{equation}\label{Lya}
	V(\tilde{\theta}_d^*)= \tilde \theta _d^{*{\top}}\varPsi \tilde \theta _d^*,
\end{equation}
where $\varPsi$ is positive definite according to Step 1.
Take the time derivative of \eqref{Lya} along the trajectories of \eqref{dyn_psi} and \eqref{dyn_til_the_sta} to give
\begin{align*}
	%\begin{*}
		\dot V &= \tilde \theta _{d}^{*{\top}}\dot \varPsi \tilde \theta _{d}^* + 2\tilde \theta _{d}^{*{\top}}\varPsi \dot {\tilde \theta}_{d}^* \\
		&=  - \beta V - \tilde \theta _d^{*{\top}}N_d^{\top}\phi (t){\phi }^{\top}(t){N_d}\tilde \theta _d^* \le - \beta V.
	%\end{align}
\end{align*}
This, together with the fact that $$V(t) \ge {\inf\nolimits_{t \ge 0}}\left( {{\lambda _{\min }}\left( {\varPsi (t)} \right)} \right){\tilde \theta _d^{*{\top}} \tilde \theta _d^*},$$ leads to
\begin{equation}\label{til_the_d_con}
	\left\| {\tilde \theta _d^*} \right\| =\left\|N_d^{\top}({\theta ^*} - \theta)\right\| \le {{\rm{e}}^{ - \frac{\beta }{2}t}}\frac{\sqrt {{V(0)}}}{{\sqrt {{{\inf }_{t \ge 0}}\big( {{\lambda _{\min }}\big({\varPsi (t)} \big)} \big)}}}, 
\end{equation}
which implies the exponential convergence of $N_d^{\top}\left( {{\theta ^*}(t) - \theta } \right)$ at a decay rate no slower than $\beta /2$ with respect to time $t$. Given that
\begin{equation*}
	N_d^{\top}\big( {\hat \theta  - \theta } \big) = N_d^{\top}\big( {\hat \theta  - {\theta ^*}} \big) + N_d^{\top}\left( {{\theta ^*} - \theta } \right),
\end{equation*}
and combining \eqref{inteme_thehat-thestar} and \eqref{til_the_d_con}, the exponential convergence of $N_d^{\top}\big( {\hat \theta - \theta } \big)$ follows, i.e.,
\begin{equation*}
    \big\| N_d^{\top}\big( {\hat \theta(t)  - \theta } \big) \big\| \le {\rho_c}{{\rm{e}}^{ - \frac{\beta}{2}t}}
\end{equation*}
with \begin{align*}\rho_c=& \left[ {\left( {1 + \left( {2{Q_m} + {\alpha + \kappa }} \right)\varPsi _{\kappa m}^{ - 1}} \right)\hat \varPsi _{\kappa m}^{ - 1}{\varphi _m} + \left\| {{{\hat \theta }_0}} \right\|} \right]\\
&\cdot \max \left\{ {2{{\rm{e}}^{\bar \gamma (T + \delta )}},\frac{{2\gamma {{\rm{e}}^{\gamma (T + \delta )}}
+ \gamma  - 2\bar \gamma }}{{\gamma  - \bar \gamma }}} \right\}\\& + \frac{{\sqrt {V(0)}}}{\sqrt {{{{\inf }_{t \ge 0}}\big( {{\lambda _{\min }}\big({\varPsi (t)} \big)} \big)}}},\end{align*} 
for any $\bar \gamma$ and $\gamma$ satisfying $\beta /2 \le \bar \gamma < \gamma$.
\end{proof}

\subsection{Proof of Theorem~\ref{Theorem3}}\label{proofthe3}
\begin{proof}
\textbf{Step 1:} Prove Property~\ref{property1}. Consider the following relations:
%\begin{equation} 
	\begin{align}\label{int_dif1}
		\big\|  {N_U^{\top}}\big( {\cal L} \otimes& I \big){N_U}  - {N_U^{\top}}{P_U}\left( {{\cal L} \otimes I} \right){P_U}{N_U}  \big\| \nonumber\\
		\le& \left\| {\left( {{N_U^{\top}}{P_U} - N_U^{\top}} \right)\left( {{\cal L} \otimes I} \right){P_U}{N_U}} \right\|  \nonumber\\
		& + \left\| {{N_U^{\top}}\left( {{\cal L} \otimes I} \right)\left( {{P_U}{N_U} - {N_U}} \right)} \right\| \nonumber\\
		\le &\left\| {N_U^{\top}}{P_U} - {N_U^{\top}} \right\|\left\| {{\cal L} \otimes I} \right\|\left( {\left\| {{P_U}{N_U}} \right\| + \left\| {{N_U}} \right\|} \right)
	\end{align}
%\end{equation}
and 
\begin{align} \label{int_dif2}
	\left\|{N_U^{\top}}{P_U} - N_U^{\top}\right\| &= \left\|{N_U^{\top}}{P_D}\right\| = \left\|{N_U^{\top}}{P_D} - {N_U^{\top}}N_D {N_D^{\top}} \right\| \nonumber\\
	&\le \left\| {P_D} - N_D N_D^{\top} \right\|.
\end{align}
Given that ${\cal L}$, $P_U(t)$, $N_U$, and $H_U$ are all bounded, it comes from \eqref{int_dif1}, \eqref{int_dif2}, and Theorem~\ref{Theorem1} that there exist $\rho_a,\ \rho_b>0$ such that
\begin{align}\label{HNPLPN-HNLN}
	\big\| {H_U}{N_U^{\top}}{P_U(t)}( {\cal L} &\otimes I ){P_U(t)}{N_U}  \nonumber\\
	&  \left.-{H_U}{N_U^{\top}}\left( {{\cal L} \otimes I} \right){N_U} \right\| \le {\rho _a}{{\rm{e}}^{ - {\rho _b}t}},
\end{align}
where $\rho_b$ can be made arbitrarily large by increasing $\gamma_i$. Suppose there exists a nonzero vector ${\bar v_u} \in  \cap _{i = 1}^N{\mathop{\rm Im}\nolimits} {N_{iu}}$, then according to \eqref{varPhiiab} and \eqref{PPE3}, {$\sum\nolimits_{i = 1}^N {\bar v_u^{\top}{\varPhi_{i}}{{\bar v}_u}}  = 0$,}
% $$\sum\nolimits_{i = 1}^N {\bar v_u^{\top}{\varPhi_{ia}}{{\bar v}_u}}  \le \sum\nolimits_{i = 1}^N {\bar v_u^{\top}{\varPhi_{ib}}{{\bar v}_u}}  = 0,$$ 
which contradicts the complementary PPE condition. Hence, $ \cap _{i = 1}^N{\mathop{\rm Im}\nolimits} {N_{iu}} = \left\{ 0 \right\}$. Then by Lemma~\ref{lemma1}, there exists a positive definite matrix $\varXi_0  = {\rm{diag}}{\left( \xi_1,\dots, \xi _N\right)}$ such that 
\begin{equation*}
{N_U^{\top}}\left[ {\left( {{\varXi _0}{\cal L} + {{\cal L}^{\top}}{\varXi _0}} \right) \otimes I} \right]{N_U} > 0,
\end{equation*}
which implies that the inequality
\begin{equation*}
	\varXi {H_U}{N_U^{\top}}\big( {{\cal L} \otimes I} \big){N_U} + {N_U^{\top}}\left( {{{\cal L}^{\top}} \otimes I} \right){N_U}{H_U}\varXi > 0
\end{equation*}
has a positive definite solution $$\varXi  = {\rm{diag}}{\big( {{\xi _1}\eta _{1u}^{ - 1}{I_{{q_1}}}},\dots, {{\xi_N
}\eta _{Nu}^{ - 1}{I_{{q_N}}}}\big)}.$$ 
Therefore, according to Lemma~\ref{lemma2}, $- {H_U}{N_U^{\top}}\left( {{\cal L} \otimes I} \right){N_U}$ is a stable matrix. Moreover, its eigenvalues can be placed arbitrarily far from the imaginary axis, by increasing $\eta _{iu}$.

\textbf{Step 2:} Prove the boundedness of ${\tilde \theta _U}$. According to \eqref{til_def} and \eqref{dyn_til_the}, the overall error dynamics system can be written as
%\begin{equation} 
	\begin{align}
		{{\dot {\tilde \theta}}_U} =& \underbrace{(- {H_D}{P_D(t)} - {H_U}{P_U(t)}\left( {{\cal L} \otimes I} \right){P_U(t)})}_{\varLambda_a(t)} {{\tilde \theta }_U}\nonumber \\
		& \underbrace{- {H_U}{P_U(t)}\left( {{\cal L} \otimes I} \right){P_D(t)}{{\tilde \theta }_D}- {H_D}{P_D(t)}\left( {{\mathbf{1}_N} \otimes \theta } \right) }_{\varLambda_b(t)}\nonumber \\
		=&\ \varLambda_a(t){{\tilde \theta }_U} + \varLambda_b(t),\label{ove_dyn_U}
	\end{align}
%\end{equation}
%where 
%\begin{align*}
%	\varLambda_a(t) &=- {{H_D}{P_D(t)} - {H_U}{P_U(t)}\left( {{\cal L} \otimes I} \right){P_U(t)}}\\
%	\varLambda_b(t)&=- {H_U}{P_U(t)}\left( {{\cal L} \otimes I} \right){P_D(t)}{{\tilde \theta }_D(t)} - {H_D}{P_D(t)}\left( {{\mathbf{1}_N} \otimes \theta } \right).
%\end{align*}
It can be proved in the same way as in Step 1 that there exist $\rho_c,\ \rho_d>0$ such that
%\begin{align*}
	$\left\| \varLambda_a(t) -\varLambda_a^* \right\| \le {\rho _c}{{\rm{e}}^{ - {\rho _d}t}}$,
%\end{align*}
where
\begin{equation*} 
	\varLambda_a^* = \setlength\arraycolsep{3 pt}
	-\left[ {\begin{array}{*{20}{c}}
			{{N_D}}&{{N_U}}
	\end{array}} \right] \! \left[ {\begin{array}{*{20}{c}}
			{{H_D}}&0\\
			0&{{H_U}{N_U^{\top}}\left( {{\cal L} \otimes I} \right){N_U}}
	\end{array}} \right] \! \left[ {\begin{array}{*{20}{c}}
			{{N_D^{\top}}}\\
			{{N_U^{\top}}}
	\end{array}} \right].
\end{equation*}
It follows from ${H_D} > 0$, Step 1, and the orthogonality of $\left[ {\begin{array}{*{20}{c}}
		{{N_D}}&{{N_U}}
\end{array}} \right]$ that $\varLambda_a^*$ is stable. At the same time, based on \eqref{hatthetaid-star} and Theorem~\ref{Theorem1}, it can be verified that there exist $\rho_e,\ \rho_f>0$ such that
\begin{align*}
	\left\| \varLambda_b(t) -\varLambda_b^*(t) \right\| \le {\rho _e}{{\rm{e}}^{ - {\rho _f}t}},
\end{align*}
where 
$$\varLambda_b^*(t)=- {H_U}{N_U}{N_U^{\top}}\left( {{\cal L} \otimes I} \right){N_D}{N_D^{\top}}\left( {\theta _I^*- {\theta _I}} \right) - {H_D}{N_D}{N_D^{\top}}{\theta _I}$$
with $\theta _I^* = {\rm{col}}{\left( {\theta_1^*},\dots, {\theta_N^*}\right)}$ and ${\theta _I} = {\mathbf{1}_N} \otimes \theta$.
The signal $\varLambda_b^*(t)$ is uniformly bounded since the measurement noise $\varepsilon_i$ is bounded and the parameter $\theta$ is constant as formulated in Section \ref{PF}. By applying Lemma~\ref{lemma3} to system \eqref{ove_dyn_U}, it can be concluded that ${\tilde \theta _U}$ is uniformly bounded.

\textbf{Step 3:} Complete the proof. Based on Step 2, it follows from \eqref{til_def} that ${\hat \theta _U}$ is also uniformly bounded. 
Meanwhile, according to \eqref{int_dif2}, Theorem~1, and the relation
\begin{align*} 
	\left\|{P_U}N_D\right\| &= \left\|\left(I-{P_D}\right)N_D\right\| \nonumber\\
	&= \left\|\left(N_D {N_D^{\top}}-P_D\right)N_D\right\| 
	\le \left\| {P_D} - N_D {N_D^{\top}} \right\|,
\end{align*}
there exist $\rho_g,\ \rho_h>0$ such that
\begin{align} \label{PNNPexp}
	\left\| {{P_U}{N_D}} \right\| \le {\rho _g}{{\rm{e}}^{ - {\rho _h}t}}\ \text{and}\ \left\|{N_U^{\top}{P_D}} \right\| \le {\rho _g}{{\rm{e}}^{ - {\rho _h}t}},
\end{align}
where $\rho_h$ can be made arbitrarily large by increasing $\gamma_i$.
According to \eqref{hatthetaid-star}, \eqref{int_dif2}, and Theorem 1, there exist $\rho_l,\ \rho_m>0$ such that
\begin{subequations}\label{HNPLPtheta}
	\begin{align}
		\left\| {N_U^{\top}}{P_U}(t) - {N_U^{\top}} \right\| &\le {\rho _l}{{\rm{e}}^{ - {\rho _m}t}}\\
		\left\|{P_D}(t){{\tilde \theta }_D}(t) - {N_D}{N_D^{\top}}\left( {\theta _I^*(t) - {\theta _I}} \right) \right\| &\le {\rho _l}{{\rm{e}}^{ - {\rho _m}t}},\label{PDtiltheD-NNthestar-theI}
	\end{align}
\end{subequations}
where $\rho_m$ can be made arbitrarily large by increasing $\gamma_i$. 
Due to the boundedness of ${\hat \theta _U}$ and ${\tilde \theta _U}$, it follows from \eqref{PNNPexp} and \eqref{HNPLPtheta} that there exist $\rho_o,\ \rho_p>0$ such that
\begin{align} \label{Lamdacexp}
	\left\| \varLambda_c(t) -\varLambda_c^*(t) \right\| \le {\rho _o}{{\rm{e}}^{ - {\rho _p}t}},
\end{align}
where $\rho_p$ can be made arbitrarily large by increasing $\gamma_i$, and 
\begin{align*}
	\varLambda_c(t)=&- {H_U}{N_U^{\top}}{P_U}(t)\left( {{\cal L} \otimes I_n} \right){P_U}(t){N_D}{N_D^{\top}}{{\tilde \theta }_U}(t) \\
	& - {H_U}\! {N_U^{\top}}{P_U}(t)\! \left( {{\cal L} \otimes I_n} \right)\!{P_D}(t){{\tilde \theta }_D}(t)\\
    & - {H_D} {N_U^{\top}}{P_D}(t){{\hat \theta }_U}(t),\\
	\varLambda_c^*(t) =& - {H_U}\! {N_U^{\top}}\! \left( {{\cal L} \otimes I_n} \right)\!{N_D}{N_D^{\top}}\left( {\theta _I^*(t) - {\theta _I}} \right).
\end{align*}
With \eqref{Nthestar-Nthe_i}, \eqref{HNPLPN-HNLN}, and \eqref{Lamdacexp}, by applying Lemma~\ref{lemma3} to system \eqref{ove_dyn}, it follows that there exist $\rho_q,\ \rho_r>0$ such that
\begin{align} \label{NUtiltheU-thitheUstar}
	\left\| {N_U^{\top}}{{\tilde \theta }_U}(t) -\tilde \theta _U^*(t) \right\| \le {\rho _q}{{\rm{e}}^{ - {\rho _r}t}},
\end{align}
where $\rho_r$ can be made arbitrarily large by increasing $\gamma_i$ and $\eta_{iu}$. In particular, for any $\rho_t < \mathop {\min }\limits_{i\in \mathcal{N}} \left\{ {{\beta _i}/2} \right\}$, there exists $\rho_s > 0$ such that
\begin{align} \label{NUtiltheUonly}
	\left\| {N_U^{\top}}{{\tilde \theta }_U}(t) \right\| \le {\rho _s}{{\rm{e}}^{ - {\rho _t}t}},
\end{align}
in the noise-free case $\varepsilon_i (t) \equiv 0$.
According to \eqref{til_rel} and \eqref{refsys_cha},
\begin{align} 
{{\tilde \theta }_I} =&\ {P_D}{{\tilde \theta }_D} + \left( {{N_D}{N_D^\top}  - {P_D}} \right){{\tilde \theta }_U} + {N_U}{N_U^\top} {{\tilde \theta }_U}\label{tiltheI_rewrite}\\
	{{\tilde \theta }_I} - \tilde \theta _I^* =&\ {P_D}{{\tilde \theta }_D} - {N_D}\tilde \theta _D^* + \left( {{N_D}{N_D^\top}  - {P_D}} \right){{\tilde \theta }_U} \nonumber\\
    &+ {N_U}\left( {{N_U^\top} {{\tilde \theta }_U} - \tilde \theta _U^*} \right)\label{diff_tiltheI-Istar}.
\end{align}
Then combining \eqref{Nthestar-Nthe_i}, \eqref{PDtiltheD-NNthestar-theI}, \eqref{NUtiltheU-thitheUstar}, \eqref{NUtiltheUonly}, \eqref{tiltheI_rewrite}, \eqref{diff_tiltheI-Istar}, and following Theorem~\ref{Theorem1}, the results of Theorem~\ref{Theorem3} can be proven.
\end{proof}

\subsection{Algebraic representation of system \eqref{SISO}}\label{ARS-repre}
The algebraic representation dates back to \cite{Kreisselmeier1977}. It is derived and presented here in a more concise way. Consider the fictitious system
\begin{align} \label{Pi_u}
	{{\dot \varPi }_u} = W{\varPi _u} + {I_{{n_F}}}u,& & {\varPi _u}(0) = 0,
\end{align}
where $W \in \mathbb{R}^{n_F \times n_F}$ has the form \eqref{W}, $u \in \mathbb{R}$ is the same as that in \eqref{SISO}, and ${\varPi _u} \in \mathbb{R}^{n_F \times n_F}$ is the state.
In the frequency domain, systems \eqref{Pi_u} and \eqref{r_u} can be expressed as
\begin{subequations}\begin{align}
	{\varPi _u}(s) &= {\left( {sI - W} \right)^{ - 1}}{I_{{n_F}}}u(s), \label{fre_Pi_u}\\
	{r_u}(s) &= {\left( {sI - {W^{\top}}} \right)^{ - 1}}{h_{(1)}}u(s), \label{fre_r_u}
\end{align}\end{subequations}
Let ${h_{(i)}}$ denote the $i$th column of $I_{n_F}$. According to
\begin{align*}
	h_{(1)}^{\top}{\left( {sI - W} \right)^{ - 1}}{h_{(i)}} = h_{(i)}^{\top}{\left( {sI - {W^{\top}}} \right)^{ - 1}}{h_{(1)}}, & &\forall i \in \left\{ {1, \dots ,{n_F}} \right\},
\end{align*}
it can be obtained from \eqref{fre_r_u} and \eqref{fre_r_u} that
\begin{equation*}
	\setlength\arraycolsep{3 pt}
	h_{(1)}^{\top}{\varPi _u}(s) = \underbrace{\left[ {\begin{array}{*{20}{c}}
			{h_{(1)}^{\top}{r_u}(s)}& \cdots &{h_{({n_F})}^{\top}{r_u}(s)}
	\end{array}} \right]}_{ r_u^{\top}(s)}.
\end{equation*}
Likewise, in light of the fact that
\begin{equation*}
	\begin{split}
		h_{(1)}^{\top}{W^{j - 1}}{\left( {sI - W} \right)^{ - 1}}{h_{(i)}} &= h_{(1)}^{\top}{\left( {sI - W} \right)^{ - 1}}{W^{j - 1}}{h_{(i)}} \\
		&= h_{(i)}^{\top}{\big( {{W^{j - 1}}} \big)^{\top}}{\big( {sI - {W^{\top}}} \big)^{ - 1}}{h_{(1)}},
	\end{split}
\end{equation*}
$\forall i,j \in \left\{ {1, \cdots ,{n_F}} \right\}$, the following expression can be obtained:
%\begin{equation} 
	\begin{align}\label{hWPi}
		h_{(1)}^{\top}\!{W^{j - 1}}{\varPi _u} =& \left[\! {\setlength\arraycolsep{3 pt}\begin{array}{*{20}{c}}
				{h_{(1)}^{\top}\!{{\left( {{W^{j - 1}}} \right)\!}^{\top}}\!{r_u}}& \cdots\! &{h_{({n_F})}^{\top}\!{{\left( {{W^{j - 1}}} \right)\!}^{\top}}\!{r_u}}
		\end{array}} \!\!\right] \nonumber\\
		=&\ r_u^{\top}{W^{j - 1}}, \quad \quad  \quad \forall j \in \big\{ {1, \cdots ,{n_F}} \big\}
	\end{align}
%\end{equation}
Since $\big( {W,h_{(1)}^{\top}} \big)$ is observable, the following matrix is invertible
\begin{equation*}
\textnormal{col}\Big({h_{(1)}^{\top}},{h_{(1)}^{\top}W},\dots, {h_{(1)}^{\top}{W^{{n_F} - 1}}}\Big).
	% \setlength\arraycolsep{3 pt}
	% \left[ {\begin{array}{*{20}{c}}
	% 		{h_{(1)}^{\top}}\\
	% 		{h_{(1)}^{\top}W}\\
	% 		\vdots \\
	% 		{h_{(1)}^{\top}{W^{{n_F} - 1}}}
	% \end{array}} \right]
\end{equation*}
 Then, \eqref{hWPi} leads to \eqref{Pi_u_mat}, which means ${\varPi _u}$ generated by \eqref{Pi_u} can be expressed in terms of $r_u$ generated by \eqref{r_u}.

In the same way as above, it can be verified that $\varPi_y \in \mathbb{R}^{n_F \times n_F}$ generated by the fictitious system
\begin{align} \label{Pi_y}
	{{\dot \varPi }_y} = W{\varPi _y} + {I_{{n_F}}}y,& & {\varPi _y}(0) = 0
\end{align}
can be expressed in terms of $r_y$ generated by \eqref{r_y}, and the expression is \eqref{Pi_y_mat}. 
After rewriting \eqref{SISO} as
\begin{equation*}
	\begin{split}
		\dot x &= Wx + \left( {F - W} \right)x + bu = Wx + \left( {f - w} \right)y + bu,
	\end{split}
\end{equation*}
it is straightforward from \eqref{SISO} that 
%\begin{equation} 
	\begin{align}\label{y_ar}
		y(t) =&\ h_{(1)}^{\top}{{\rm{e}}^{Wt}}x(0) + h_{(1)}^{\top}\int_0^t {{{\rm{e}}^{W(t - \tau )}}\left( {f - w} \right)y(\tau ){\rm{d}}\tau } \nonumber \\
		&  + h_{(1)}^{\top}\int_0^t {{{\rm{e}}^{W(t - \tau )}}bu(\tau ){\rm{d}}\tau }. 
	\end{align}
%\end{equation}
Finally, combining \eqref{y_ar} with \eqref{Pi_u} and \eqref{Pi_y} gives the algebraic representation \eqref{AR}.

%\section*{References}
\footnotesize
\bibliographystyle{ieeetr}
\bibliography{myref}
\vspace{-35pt}
\begin{IEEEbiography}[{\includegraphics[height=1.25in, clip,keepaspectratio]{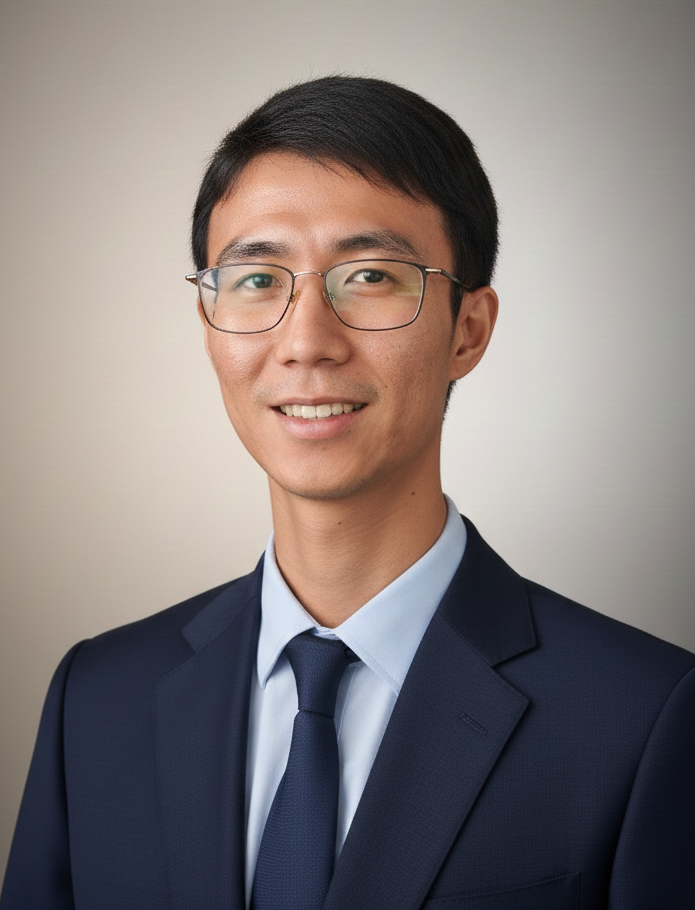}}]{Ganghui Cao} received the B.E. degree in Flight Vehicle Design and Engineering from Harbin Engineering University, China, in 2020. He is currently pursuing a Ph.D. degree with the Department of Mechanics and Engineering Science, College of Engineering, Peking University, China. Since November 2023, he has been a visiting Ph.D. candidate with the KIOS Research and Innovation Center of Excellence at the University of Cyprus. His research interests include distributed estimation and multi-agent systems.
\end{IEEEbiography}
\vspace{-15pt}
 \begin{IEEEbiography}[{\includegraphics[height=1.25in, clip,keepaspectratio]{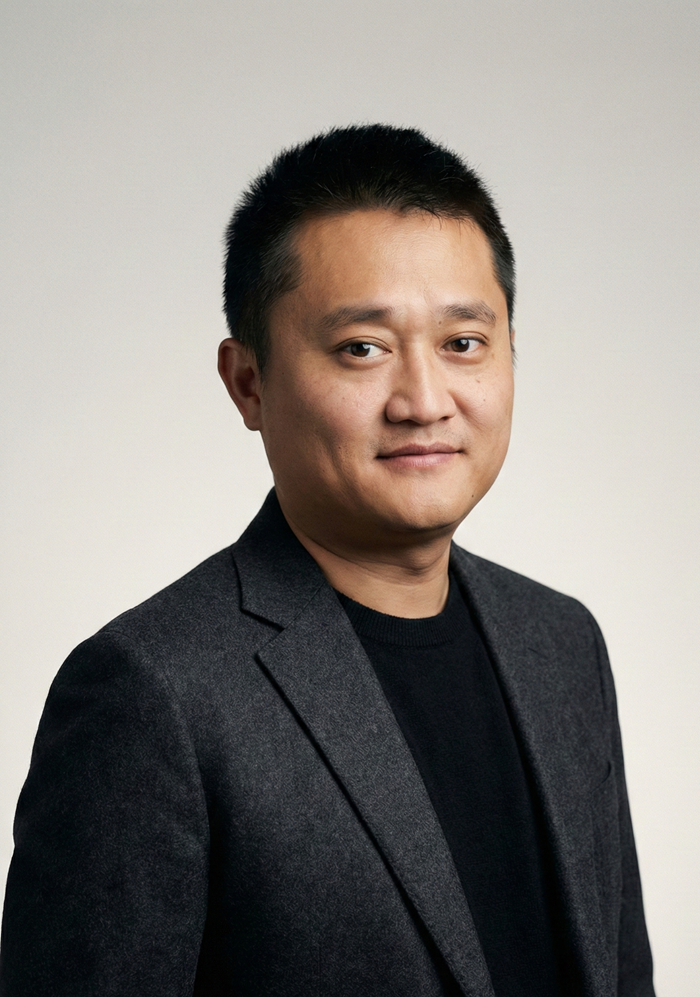}}]{Shimin Wang} (Senior Member, IEEE) received his Ph.D. from the Chinese University of Hong Kong. He is a recipient of the NSERC Postdoctoral Fellowship, the Best Poster Award at the 2024 Nonlinear Systems and Control Conference, and Best Conference Paper Awards at the 2025 IEEE International Conference on Unmanned Systems and 2018 IEEE International Conference on Information and Automation. His research bridges data science, embodied AI, and machine learning to build advanced intelligent autonomous systems.\end{IEEEbiography}  

%\end{IEEEbiography}
\vspace{-15pt}
\begin{IEEEbiography}[{\includegraphics[height=1.25in, clip,keepaspectratio]{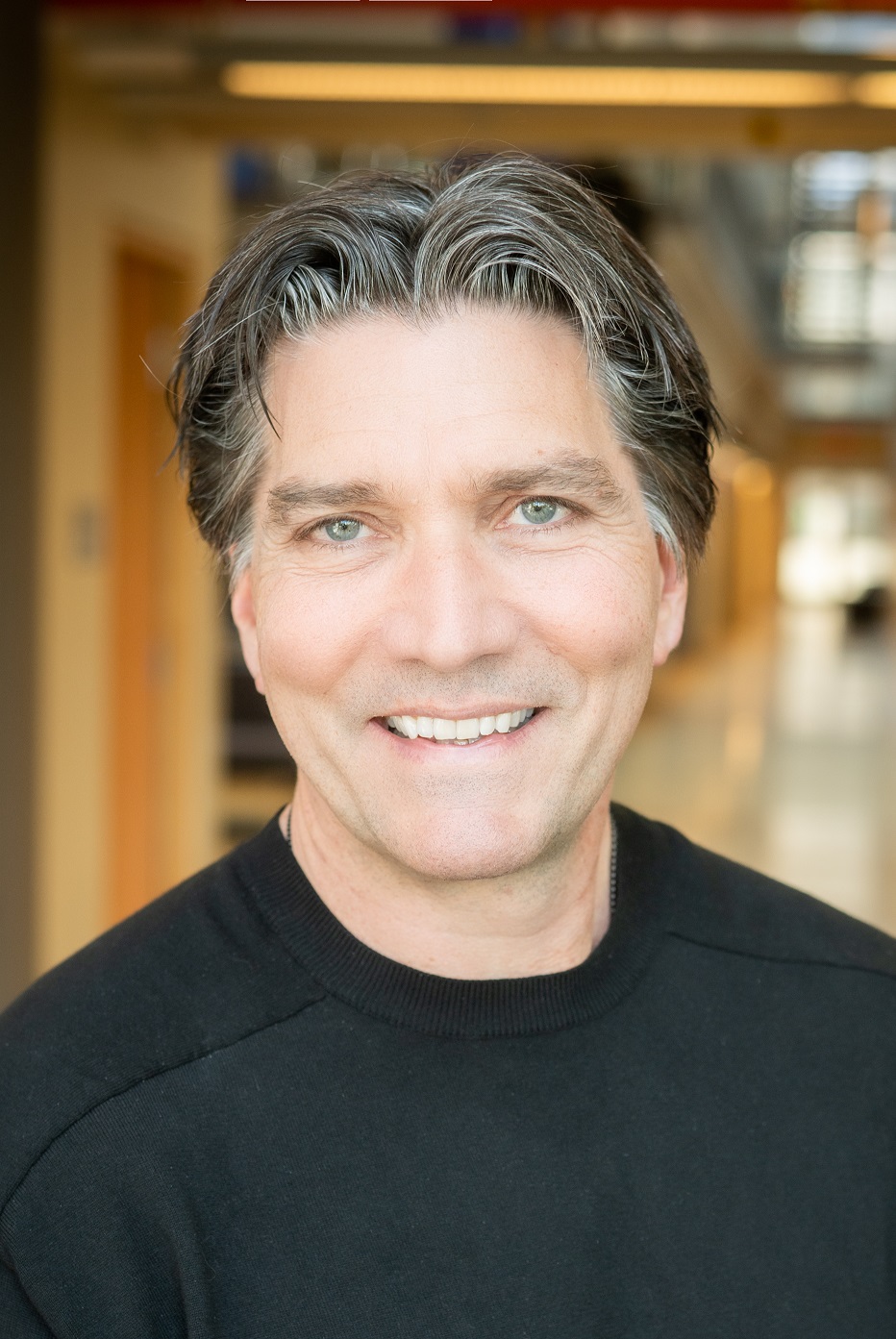}}]{Martin Guay} received a Ph.D. from Queen’s University, Kingston, ON, Canada in 1996. He is currently a Professor in the Department of Chemical Engineering at Queen’s University. His current research interests include nonlinear control systems, especially extremum-seeking control, nonlinear model predictive control, adaptive estimation and control, and geometric control. 
He was a recipient of the Syncrude Innovation Award, the D. G. Fisher from the Canadian Society of Chemical Engineers, and the Premier Research Excellence Award. He is a Senior Editor of IEEE Transactions on Automatic Control. He is the Editor-in-Chief of the Journal of Process Control. He is also an Associate Editor for Automatica and the Canadian Journal of Chemical Engineering. 
\end{IEEEbiography}
\vspace{-15pt}
\begin{IEEEbiography}[{\includegraphics[height=1.25in, clip,keepaspectratio]{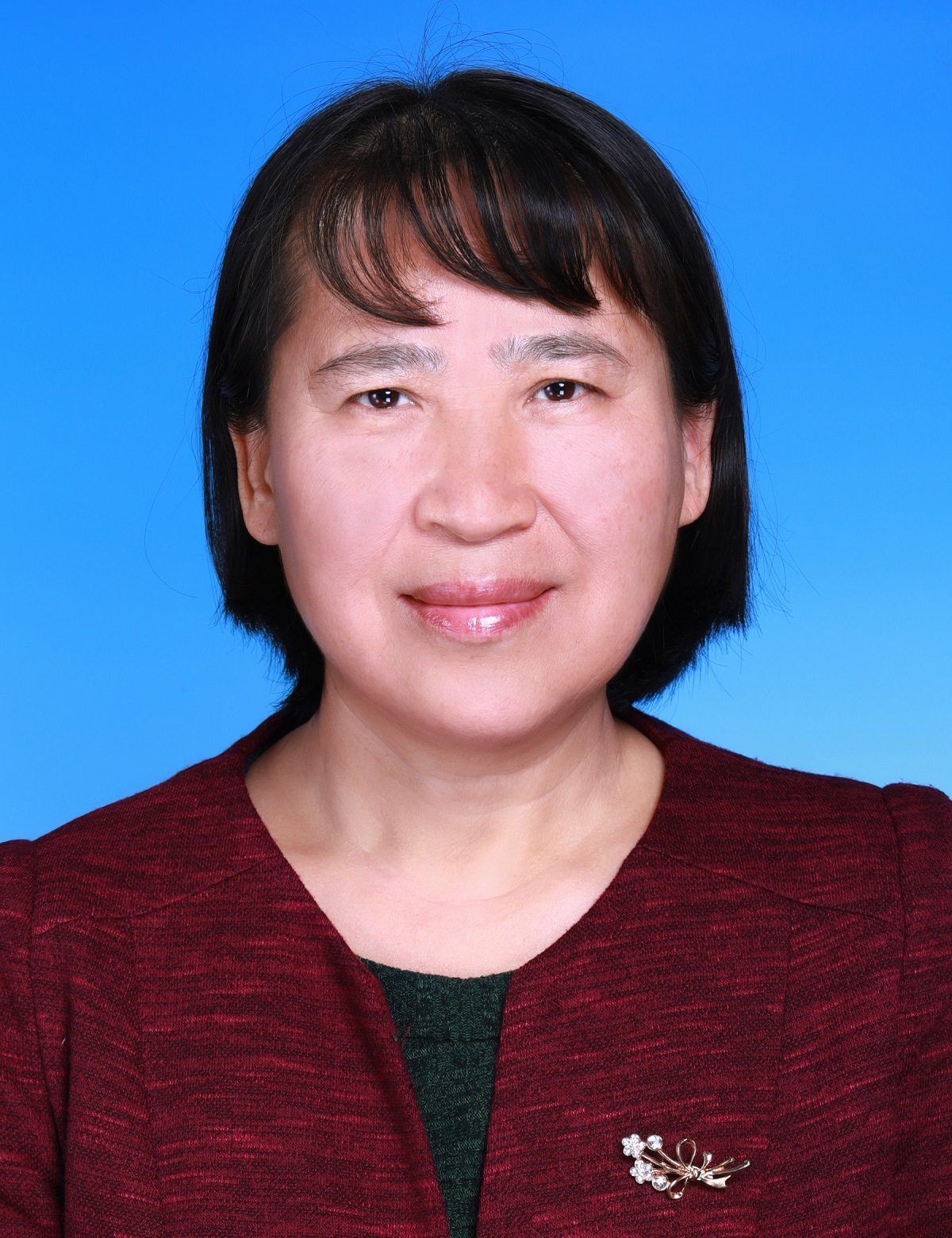}}]{Jinzhi Wang} received the M.S. degree in mathematics from Northeast Normal University, China, in 1988, and the Ph.D. degree in control theory from Peking University, China, in 1998. From 1998 to 2000, she was a Postdoctoral Fellow with the Institute of Systems Science, Chinese Academy of Sciences. 
She is currently a Professor at the Department of Mechanics and Engineering Science, College of Engineering, Peking University. Her research interests include cooperative control of multi-agent systems and control of nonlinear dynamical systems.
\end{IEEEbiography}
\vspace{-15pt}
\begin{IEEEbiography}[{\includegraphics[height=1.25in,keepaspectratio]{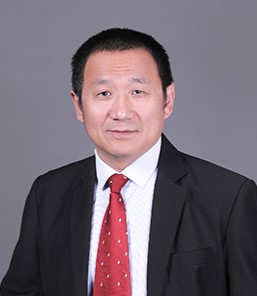}}]{Zhisheng Duan} received the M.S. degree in Mathematics from Inner Mongolia University, China and the Ph.D.degree in Control Theory from Peking University, China in 1997 and 2000, respectively.
From 2000 to 2002, he was a Postdoctoral Fellow at Peking University, where he has been a Full Professor with the Department of Mechanics and Engineering Science, College of Engineering since 2008. He obtained the Outstanding Young Scholar from the National Natural Science Foundation in China and is currently a Cheung Kong Scholar at Peking University. His current research interests include robust control, stability of interconnected systems, nonlinear control, and analysis and control of complex dynamical networks.
Prof. Duan was a recipient of the Guan-Zhao Zhi Best Paper Award at the 2001 Chinese Control Conference and the 2011 First Class Award in Natural Science from the Chinese Ministry of Education. Prof. Duan has been listed by Thomson Reuters and Clarivate Web of Science as Highly Cited Researchers in Engineering since 2017.
\end{IEEEbiography}
\vspace{-15pt}
\begin{IEEEbiography}[{\includegraphics[height=1.25in,keepaspectratio]{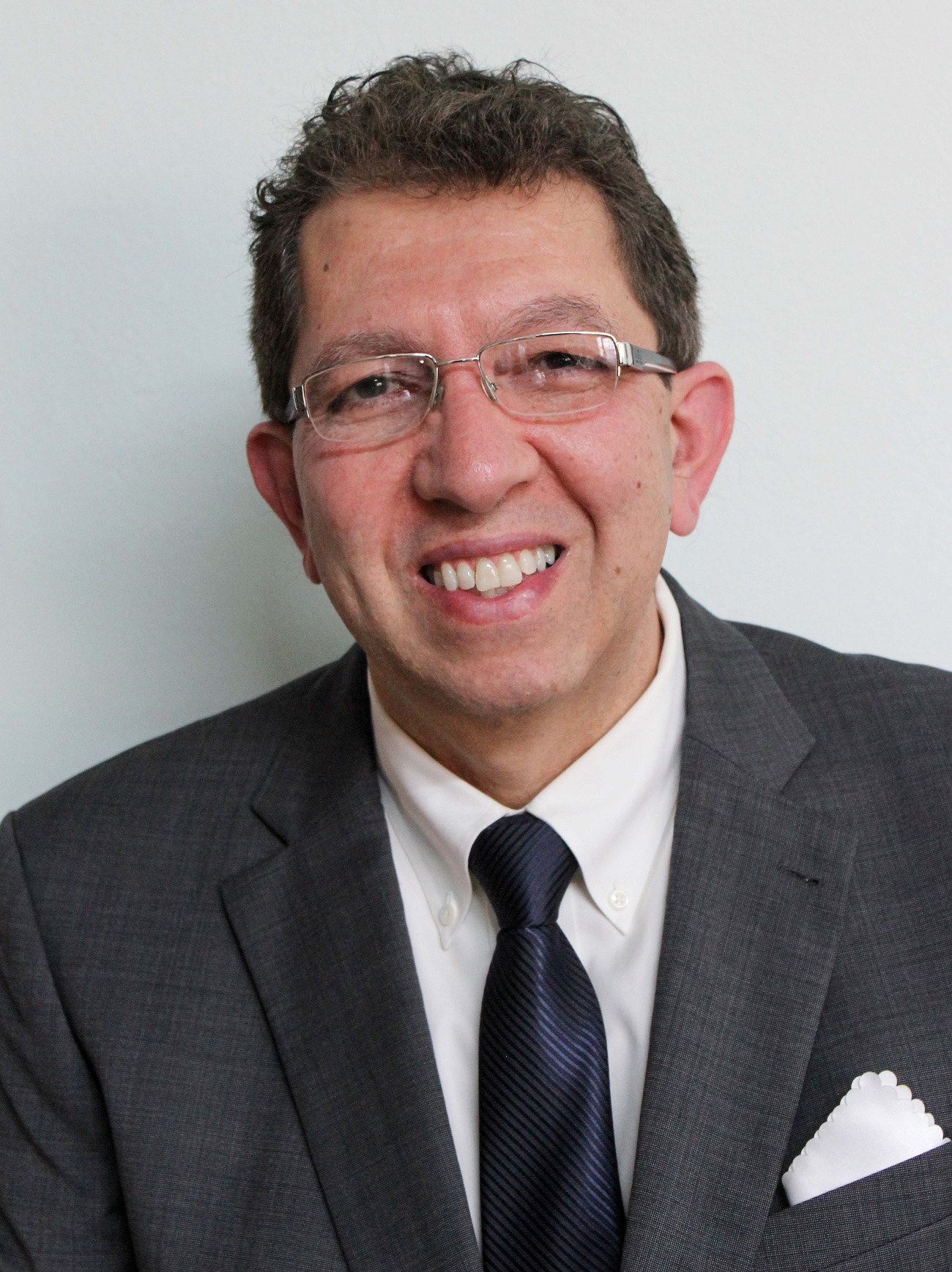}}]{Marios M. Polycarpou} (Fellow of IEEE and IFAC) received the B.A. degree in computer science and the B.Sc. degree in electrical engineering, both from Rice University, USA, in 1987, and the M.S. and Ph.D. degrees in Electrical Engineering from the University of Southern California, USA, in 1989 and 1992 respectively.
He is currently a Professor of Electrical and Computer Engineering and the Director of the KIOS Research and Innovation Center of Excellence at the University of Cyprus. He is also a Founding Member of the Cyprus Academy of Sciences, Letters, and Arts, an Honorary Professor of Imperial College London, and a Member of Academia Europaea (The Academy of Europe). His teaching and research interests are in intelligent systems and networks, adaptive and learning control systems, fault diagnosis, machine learning, and critical infrastructure systems. 
Prof. Polycarpou is the recipient of the 2023 IEEE Frank Rosenblatt Technical Field Award and the 2016 IEEE Neural Networks Pioneer Award. He is a Fellow of the International Federation of Automatic Control (IFAC). He served as the President of the IEEE Computational Intelligence Society (2012-2013), as the President of the European Control Association (2017-2019), and as the Editor-in-Chief of the IEEE Transactions on Neural Networks and Learning Systems (2004-2010). Prof. Polycarpou currently serves on the Editorial Boards of the Proceedings of the IEEE and the Annual Reviews in Control. His research work has been funded by several agencies and industries in Europe and the United States, including the prestigious European Research Council (ERC) Advanced Grant, the ERC Synergy Grant and the EU-Widening Teaming Program.
\end{IEEEbiography}

\end{document}